\newtheorem{fact}{Fact}
\newcommand{\ie}{\textit{i.e.,~}}
\newcommand{\eg}{\textit{e.g.,~}}
\newcommand{\resp}{\textit{resp.~}}
\newcommand*{\figref}[1]{\hyperref[#1]{\mbox{Figure}~\ref*{#1}}}
\newcommand*{\secref}[1]{\hyperref[#1]{\mbox{Section}~\ref*{#1}}}
\newcommand*{\lemref}[1]{\hyperref[#1]{\mbox{Lemma}~\ref*{#1}}}
\newcommand*{\thmref}[1]{\hyperref[#1]{\mbox{Theorem}~\ref*{#1}}}
\newcommand*{\propref}[1]{\hyperref[#1]{\mbox{Proposition}~\ref*{#1}}}
\newcommand*{\defref}[1]{\hyperref[#1]{\mbox{Definition}~\ref*{#1}}}
\newcommand*{\exref}[1]{\hyperref[#1]{\mbox{Example}~\ref*{#1}}}
\newcommand*{\appref}[1]{\hyperref[#1]{\mbox{Appendix}~\ref*{#1}}}
\newcommand*{\eqnref}[1]{\hyperref[#1]{\mbox{Equation}~\ref*{#1}}}
\newcommand{\mycomment}[2]{\textcolor{blue}{[#1]}{\textcolor{red}{#2}}}
\newcommand{\nat}{\mathbb{N}}
\newcommand{\ints}{\mathbb{Z}}
\newcommand{\interv}[2]{{[{#1},{#2}]}}
\newcommand{\isdef}{\stackrel{\scriptscriptstyle{\mathsf{def}}}{=}}
\newcommand{\iffdef}{\stackrel{\scriptscriptstyle{\mathsf{def}}}{\iff}}
\newcommand{\set}[1]{{\{ {#1} \}}}
\newcommand{\pow}[1]{{\mathrm{pow}({#1})}}
\newcommand{\dom}[1]{{\mathrm{dom}({#1})}}
\newcommand{\img}[1]{{\mathrm{img}({#1})}}
\newcommand{\cardof}[1]{|\!|{#1}|\!|}
\newcommand{\sizeof}[1]{|{#1}|}
\newcommand{\finsubseteq}{\subseteq_{\mathit{fin}}}
\newcommand{\arrow}[2]{\xrightarrow{{\scriptscriptstyle #1}}_{{\scriptscriptstyle #2}}}
\newcommand{\finmap}{{\rightharpoonup_{\mathit{fin}}}}
\renewcommand{\vec}[1]{\mathbf{#1}}
\newcommand{\proj}[2]{{#1}\!\downharpoonleft_{#2}}
\newcommand{\fnid}{\mathrm{id}}
\newcommand{\automata}{\mathcal{A}}
\newcommand{\sem}[1]{[\![{#1}]\!]}
\newcommand{\pspace}{\textsf{PSPACE}}
\newcommand{\twoexptime}{\textsf{2EXPTIME}}
\newcommand{\bigO}{\mathcal{O}}
\newcommand{\twonfa}{$2$\textsf{NFA}}
\newcommand{\anet}{\mathsf{N}}
\newcommand{\amarkednet}{\mathcal{N}}
\newcommand{\places}{\mathsf{Q}}
\newcommand{\placeof}[1]{\places_{#1}}
\newcommand{\trans}{\mathsf{T}}
\newcommand{\transof}[1]{\trans_{#1}}
\newcommand{\weight}{\mathsf{W}}
\newcommand{\weightof}[1]{\weight_{#1}}
\newcommand{\initmark}{\mathrm{init}}
\newcommand{\initmarkof}[1]{\initmark_{#1}}
\newcommand{\pre}[1]{\text{$\leftidx{^\bullet}{\text{${#1}$}}$}}
\newcommand{\post}[1]{\text{${#1}$}^\bullet}
\newcommand{\prepost}[1]{\leftidx{^\bullet}{\!\text{${#1}$}}^\bullet}
\newcommand{\amark}{\mathsf{m}}
\newcommand{\markset}{\mathsf{M}}
\newcommand{\reach}[1]{\mathrm{reach}({#1})}
\newcommand{\cover}[1]{\mathrm{cover}({#1})}
\newcommand{\fire}[1]{\stackrel{#1}{\leadsto}}
\newcommand{\ptype}{\mathsf{p}}
\newcommand{\ptypeof}[1]{{\mathrm{ptype}({#1})}}
\newcommand{\ptypes}{\mathcal{P}}
\newcommand{\obstransof}[1]{\trans^\mathit{obs}_{#1}}
\newcommand{\inttransof}[1]{\trans^\mathit{int}_{#1}}
\newcommand{\graph}{\mathsf{G}}
\newcommand{\verts}{\mathsf{V}}
\newcommand{\vertof}[1]{\verts_{#1}}
\newcommand{\edges}{\mathsf{E}}
\newcommand{\edgeof}[1]{\edges_{#1}}
\newcommand{\asys}{\mathsf{S}}
\newcommand{\parsys}{\mathbf{S}}
\newcommand{\valpha}{\Lambda}
\newcommand{\vlab}{\lambda}
\newcommand{\vlabof}[1]{\vlab_{#1}}
\newcommand{\ealpha}{\Delta}
\newcommand{\abeh}{\beta}
\newcommand{\behof}[1]{\abeh({#1})}
\newcommand{\pps}{\mathsf{pps}}
\newcommand{\ppstext}{\textsf{PPS}}
\newcommand{\hr}{\mathsf{HR}}
\newcommand{\hrtext}{\textsf{HR}}
\newcommand{\vrtext}{\textsf{VR}}
\newcommand{\sourcelabels}{\Sigma}
\newcommand{\open}[1]{\check{#1}}
\newcommand{\slabs}{\tau}
\newcommand{\typeof}[1]{\mathrm{type}({#1})}
\newcommand{\asrc}{\sigma}
\newcommand{\sources}{\xi}
\newcommand{\sgraph}[4]{{#1}^{#4}_{{#2},{#3}}}
\newcommand{\restrict}[2]{\mathsf{restrict}^{#2}_{{#1}}}
\newcommand{\rename}[2]{\mathsf{rename}^{#2}_{{#1}}}
\newcommand{\pop}[1]{\oplus^{#1}}
\newcommand{\grammar}{\Gamma}
\newcommand{\rules}{\Pi}
\newcommand{\nonterm}{\Xi}
\newcommand{\step}[1]{\Rightarrow_{#1}}
\newcommand{\algof}[1]{\mathcal{#1}}
\newcommand{\universeOf}[1]{\mathbb{#1}}
\newcommand{\aop}{\mathsf{op}}
\newcommand{\signature}{\mathsf{F}}
\newcommand{\kerof}[1]{\sim_{\scriptscriptstyle{\mathrm{ker}(#1)}}}
\newcommand{\srcrel}[2]{\kerof{\overline{#1}^{-1}\cup\overline{#2}^{-1}}}
\newcommand{\systems}{\universeOf{S}}
\newcommand{\behaviors}{\universeOf{B}}
\newcommand{\alangof}[3]{\mathcal{L}_{#1}^{\scriptscriptstyle{#2}}({#3})}
\newcommand{\paramreach}[3]{\mathsf{Reach}\ifthenelse{\equal{#1}{}\AND\equal{#2}{}\AND\equal{#3}{}}{}{({#1},{#2},{#3})}}
\newcommand{\paramcover}[3]{\mathsf{Cover}\ifthenelse{\equal{#1}{}\AND\equal{#2}{}\AND\equal{#3}{}}{}{({#1},{#2},{#3})}}
\newcommand{\mtarget}{\amark_\mathsf{tgt}}
\newcommand{\fpof}[1]{\mathsf{fp}_{#1}}
\newcommand{\afold}{\phi}
\newcommand{\foldof}[1]{\afold({#1})}
\newcommand{\foldrel}[1]{\operatorname{{\equiv}}_{[#1]}}
\newcommand{\absof}[1]{{#1}^\sharp}
\newcommand{\finabsof}[1]{{#1}^\flat}
\newcommand{\initof}[1]{\mathfrak{I}({#1})}
\newcommand{\annot}[2]{{#1}^{#2}}
\newcommand{\foldpn}[1]{\mathfrak{N}({#1})}
\newcommand{\zreach}[2]{\mathrm{reach}^0_{#1}({#2})}
\newcommand{\send}{\mathsf{send}}
\newcommand{\recv}{\mathsf{recv}}
\algnewcommand{\AND}{\textbf{and} }
\algnewcommand{\OR}{\textbf{or} }
\algnewcommand{\WHEN}{\textbf{when} }
\algnewcommand{\INCR}{\textbf{increasing}}
\algnewcommand{\DECR}{\textbf{decreasing}}
\algnewcommand{\ASSERT}{\STATE\textbf{assert} }
\algnewcommand{\LET}{\STATE\textbf{let} }
\algnewcommand{\MARK}[1]{\hspace*{1em}{\color{blue}#1}}
\tikzstyle{petri-p}=[circle,thick,inner sep=0pt,minimum size=5mm]
\tikzstyle{petri-t}=[rectangle,thick,inner sep=0pt,minimum width=5mm,minimum height=1mm]
\tikzstyle{petri-t2}=[rectangle,thick,inner sep=0pt,minimum width=1mm,minimum height=5mm]
\tikzstyle{petri-tok}=[circle,inner sep=0pt,minimum size=4pt, color=black,fill=black]
\tikzstyle{petri-small-tok}=[circle,inner sep=0pt,minimum size=3pt, color=black,fill=black]        
\tikzstyle{gnode}=[circle,inner sep=0pt,minimum size=4pt, color=black,fill=black]
\newif\ifLongVersion\LongVersiontrue
\newenvironment{lemmaAtEnd}[1][]{\begin{lemmaE}}{\end{lemmaE}}
\newenvironment{proofSketch}{\hide}{\endhide}
\newenvironment{proofSketch}{\hide}{\endhide}
\begin{document}


\title{Counting Abstraction for the Verification of Structured Parameterized Networks}

 \author{Marius Bozga\inst{1} \and Radu Iosif\inst{1} \and Arnaud Sangnier\inst{2} \and Neven Villani\inst{1}}
 \institute{Univ. Grenoble Alpes, CNRS, Grenoble INP, VERIMAG, 38000, France \and
   DIBRIS, Univ. of Genova, Italy}

 \maketitle

\begin{abstract}
  We consider the verification of parameterized networks of replicated
  processes whose architecture is described by hyperedge-replacement
  graph grammars. Due to the undecidability of verification problems
  such as reachability or coverability of a given configuration, in
  which we count the number of replicas in each local state, we
  develop two orthogonal verification techniques. We present a
  counting abstraction able to produce, from a graph grammar
  describing a parameterized system, a finite set of Petri nets that
  over-approximate the behaviors of the original system. The counting
  abstraction is implemented in a prototype tool, evalutated on a
  non-trivial set of test cases. Moreover, we identify a decidable
  fragment, for which the coverability problem is in \twoexptime{}
  and \pspace-hard.
\end{abstract}

\section{Introduction}

The architecture is an important design aspect related to the
functionality of a computer network. For instance, the code of a
consensus protocol changes depending on whether it is used in a ring
or a clique-shaped network. The architecture is also a factor on which
the traffic balance and overall efficiency of communication
depend. Being able to formally model network architectures is a key
enabler for the use of verification algorithms that prove absence of
error scenarios in a distributed environment (\eg deadlocks, races
or mutual exclusion violations), or convergence towards a desired goal
(\eg building a spanning-tree or electing a leader).

The impressive size of present day networks requires
\emph{parameterized models}, that describe infinite families of
networks having an unbounded number of nodes. The problem of
\emph{parameterized verification} (\ie proving correctness for any
number of processes) often amounts to model-checking a small cut-off
of the system (see~\cite{BloemJacobsKhalimovKonnovRubinVeithWidder15}
for a survey). In cases where a cut-off does not exist or is too
large, symbolic representations of invariants (\ie sets of
configurations closed under local and communication actions) using \eg
boolean constraints~\cite{DBLP:conf/concur/EsparzaRW22},
well-structured transitions
systems~\cite{DBLP:conf/lics/AbdullaCJT96}, monadic second-order
logic~\cite{DBLP:journals/jlap/BozgaIS21} or finite-state
automata~\cite{DBLP:conf/birthday/LinR21} can be used to decide a
parameterized safety problem in a matter of seconds. This is because,
in particular, parameterized verification methods do not suffer from
the state explosion problem of classical model-checking techniques,
that scale poorly in the number of concurrent processes.

However, a current limitation is that most techniques rely on
hard-coded network topologies, typically
cliques~\cite{GermanSistla92}, rings~\cite{ClarkeGrumbergBrowne86} or
combinations thereof~\cite{DBLP:journals/dc/AminofKRSV18}. Many
architecture description languages have been developped by the
software engineering community (see, \eg \cite{bradbury2004survey} and
\cite{rumpe2017classification} for surveys) to support network design
but, in general, these languages lack support for verification. Only
few recent parameterized verification techniques take architectures as
input of the problem, described using, \eg
first-order~\cite{DBLP:conf/tacas/BozgaEISW20} or separation
logic~\cite{DBLP:journals/tcs/BozgaIS23}. Such descriptive
(logic-based) languages are typically hard to use, because of the
generality of their semantics, that requires complex frame conditions
to specify what is actually \emph{not} part of the architecture.

\vspace*{-.5\baselineskip}
\paragraph{Contributions} We consider the parametric verification problem
for process networks specified by \emph{graph grammars} that use
the operations of the standard \emph{hyperedge-replacement} (\hrtext)
algebra of graphs~\cite{courcelle_engelfriet_2012} to describe how
graphs are inductively build from smaller subgraphs. This constructive
aspect of graph grammars makes them appealing for network design,
because recursive specification of types and datastructures are
widespread among programmers. Graph grammars are, moreover, at the
core of a solid theory (see~\cite{courcelle_engelfriet_2012} for a
comprehensive survey). In principle, \hrtext\ graph grammars can
specify families of graphs having bounded tree-width, such as chains,
rings, stars, trees (of unbounded rank) and beyond, \eg overlaid
structures such as trees or stars with certain nodes linked in a
list. Since cliques and grids are families of unbounded tree-width,
neither can be specified using \hrtext\ graph grammars.
\ifLongVersion However, the relation between the expressiveness of
\emph{vertex-replacement} (\vrtext) and that of
\hrtext\ grammars\footnote{Each \vrtext\ grammar can be transformed
  into an \hrtext\ grammar modulo an elimination of epsilon edges,
  similar to the epsilon edge elimination for finite automata.}
~\cite{COURCELLE1995275} could provide a generalization of our
techniques to cliques, bipartite graphs and, in general, architectures
specified using \vrtext\ grammars. We consider this for future
work. \fi

Because the parameterized verification problem is undecidable, even
for chain-like networks, we consider two orthogonal lines of
work. First, following the seminal work of German and Sistla
\cite{GermanSistla92}, where identities of processes communicating
through rendez-vous in a clique is ignored to keep only the number of
processes in each state, we define a \emph{counting abstraction} that
folds the infinite set of networks specified using a \hrtext{} grammar
into a finite set of Petri nets, which subsumes the behaviors of the
original set of networks. As a consequence, if a set of places is not
covered by an execution of some of the resulting Petri nets, then it
is not covered by the original set of behaviors. These coverability
problems can be used to express mutual exclusion, and can encode other
properties (\eg finite-valued consensus). Even though, computing the
counting abstraction for clique networks is fairly
simple~\cite{GermanSistla92}, it is less trivial for families of
networks specified by a grammar. We circumvent these technical
challenges by defining appropriate \hrtext{} algebras, in which the
abstraction can be computed by a finite Kleene iteration of the
grammar. This line of work is motivated by several recent advances on
the theory~\cite{DBLP:journals/sosym/FinkelL15} and tool
support~\cite{DBLP:conf/apn/Wolf18a} for the reachability and
coverability problem for Petri nets. The abstraction has been
implemented in a prototype tool and a number of experiments showing
the effectiveness of the method have been carried out.

Second, we define a decidable fragment of the original problem, by
restricting the local behavior of the nodes to \emph{pebble-passing
  systems}, where a finite (but unbounded) number of identical pebbles
can be moved from one node to another. We inspire ourselves from
token-passing
systems\cite{aminof-paramtoken-vmcai14,aminof-model-ijcar16} for which
a restriction on the behavior of each proccess allows to get
decidability results of some verification problems. Note that in our
case, the processes definition are simple, but we allow an unbounded
number of tokens/pebbles. Interestingly, our decidable restriction
applies only to the local behavior and does not restrict the family of
networks considered, other than that they must be the language of a
given \hrtext{} grammar. Examples of problems from this decidable
fragment include token-rings, tree-traversal used, in general, for
notification and binary consensus among the participants of general
(\hrtext-specified) architectures. We have studied the complexity of
the decidable fragment and found it to be doubly-exponential in the
unary size of the coverability property and in the maximal tree-width
the set of networks generated by the grammar. At the same time, we
found the problem to be \pspace-hard.

\vspace*{-.5\baselineskip}
\paragraph{Related Work}
Traditionally, verification of unbounded networks of parallel
processes considers known architectural patterns, typically cliques or
rings \cite{GermanSistla92,ClarkeGrumbergBrowne86}. Because the price
for decidability is drastic restriction on architecture styles
\cite{BloemJacobsKhalimovKonnovRubinVeithWidder15}, more recent works
propose practical semi-algorithms, \eg \emph{regular model checking}
\cite{KestenMalerMarcusPnueliShahar01} or \emph{automata learning}
\cite{ChenHongLinRummer17}. Here the architecture is implicitly
determined by the class of language recognizers: word automata encode
pipelines or rings, whereas tree automata describe trees.

Specifying parameterized concurrent systems by inductive definitions
is reminiscent of \emph{network grammars}
\cite{ShtadlerGrumberg89,LeMetayer,Hirsch}, that use inductive rules
to describe systems with linear (pipeline, token-ring) architectures
obtained by composition of an unbounded number of processes. In
contrast, our language is based on the \hrtext\ graph
algebra. Verification of network grammars against safety properties
(reachability of error configurations) requires the synthesis of
\emph{network invariants} \cite{WolperLovinfosse89}, computed by
rather costly fixpoint iterations \cite{LesensHalbwachsRaymond97} or
by abstracting (forgetting the particular values of indices in) the
composition of a small bounded number of instances
\cite{KestenPnueliShaharZuck02}. A more recent line of work considers
a lightweight invariant synthesis method based on the inference of
\emph{structural invariants}\footnote{Invariants that depend on the
  structure of a Petri net, which hold for any of its executions.} of
an infinite family of Petri nets, for parameterized systems whose
network architectures are specified using
logic~\cite{DBLP:conf/tacas/BozgaEISW20,DBLP:journals/tcs/BozgaIS23}. This
method is geared towards deadlock-freedom and mutual exclusion,
whereas our counting abstraction method works for coverability
properties, in general.

Other efficient methods to verify safety properties of parameterized
networks consist in changing the semantics of behaviors to obtain an
over-approximation having a decidable safety verification problem.  In
\cite{abdulla-monotonic-foundcs09}, the authors have implemented such
a scheme using a \emph{monotonic abstraction}, where the resulting
abstraction is a \emph{well-structured transition
  systems}~\cite{DBLP:conf/lics/AbdullaCJT96}.  They first consider
pipeline architectures, where communication is done by checking
existentially or universally the state of the other proceess.  In
\cite{abdulla-approximated.fmsd09}, a similar technique is applied to
networks with clique architectures, where processes can manipulate
shared boolean and natural variables. In contrast, both our methods
target network architectures defined by unrestricted \hrtext{} graph
grammars.

\section{Preliminaries}

We denote by $\nat$ the set of positive integers, including zero. For
$i, j \in \nat$, we denote by $\interv{i}{j}$ the set
$\set{i,\ldots,j}$, considered empty if $i > j$. The cardinality of a
finite set $A$ is denoted $\cardof{A}$. A singleton $\set{a}$ will be
denoted $a$. By $A \finsubseteq B$, we mean that $A$ is a finite
subset of $B$. The union of two disjoint sets $A$ and $B$ is denoted
as $A \uplus B$. The Cartesian product of two sets $A$ and $B$ is
denoted $A \times B$. As usual, we denote by $A^*$ the set of
(possibly empty) sequences of elements from $A$.

For a function $f : A \rightarrow B$ and $C \subseteq A$, we write
$f(C)\isdef \set{f(c) \mid c \in C}$ and $\proj{f}{C} \isdef
\set{(c,f(c)) \mid c \in C}$. The inverse of a function $f : A
\rightarrow B$ is the relation $f^{-1}\isdef\{(f(a),a) \mid a \in
A\}$. To alleviate notation, we write $f(x,y)$ instead of $f((x,y))$,
when no confusion arises. For two functions $f : A \rightarrow B$ and
$g : C \rightarrow D$, the function $f \times g : A \times B
\rightarrow C \times D$ maps each pair $(a,c) \in A \times C$ into the
pair $(f(a),g(c))\in B \times D$. A bijective function $f : A
\rightarrow A$ is a \emph{finite permutation} if the set $\set{a \in A
  \mid f(a)\neq a}$ is finite.  In particular, $a \leftrightarrow b$
denotes the finite permutation that switches $a$ with $b$ and leaves
the other elements of the domain unchanged. A finite partial function
is denoted as $f : A \finmap B$ and $\dom{f}$, $\img{f}$ denote its
domain and range, respectively.  When $f : A \to C$ and $g : B \to C$
coincide on their shared domain $A \cap B$, we write $f \cup g : A
\cup B \to C$ the function such that $\proj{(f \cup g)}{A} = f$ and
$\proj{(f \cup g)}{B} = g$.
\ifLongVersion\else
The proofs of the technical results from the paper are given in
\appref{app:proofs}.
\fi

\subsection{Petri Nets}

A \emph{net} is a tuple $\anet = (\places,\trans,\weight)$, where
$\places$ is a finite set of \emph{places}, $\trans$ is a finite set
of \emph{transitions} such that $\places\cap\trans=\emptyset$ and
$\weight : (\places \times \trans) \cup (\trans \times \places)
\rightarrow \nat$ is a \emph{weighted incidence relation} between
places and transitions. We denote by $\placeof{\anet}$,
$\transof{\anet}$ and $\weightof{\anet}$ the places, transitions and
incidence relation of $\anet$, respectively. For all
$x,y\in\places\cup\trans$ such that $\weight(x,y) > 0$, we say that
there is an \emph{edge of weight} $\weight(x,y)$ between $x$ and
$y$. For an element $x\in\places\cup\trans$, we define the set of
\emph{predecessors} $\pre{x} \isdef \set{y \in \places \cup \trans
  \mid \weight(y,x)>0}$, \emph{successors} $\post{x} \isdef \set{y \in
  \places \cup \trans \mid \weight(x,y)>0}$ and predecessor-successor
pair $\prepost{x} \isdef (\pre{x},\post{x})$.

A \emph{marking} of $\anet = (\places,\trans,\weight)$ is a function
$\amark : \places \rightarrow \nat$. A transition $t$ is
\emph{enabled} in the marking $\amark$ if $\amark(q) \geq
\weight(q,t)$, for each place $q \in \places$. For all markings
$\amark$, $\amark'$ and transitions $t \in \trans$, we write $\amark
\fire{t} \amark'$ whenever $t$ is enabled in $\amark$ and $\amark'(q)
= \amark(q) - \weight(q,t) + \weight(t,q)$, for all $q \in
\places$. Given two markings $\amark$ and $\amark'$, a finite sequence
of transitions $\vec{t} = (t_1, \ldots,t_n)$ is a \emph{firing
  sequence}, written $\amark \fire{\vec{t}} \amark'$, if and only if
either \begin{inparaenum}[(i)]
\item $n=0$ and $\amark=\amark'$, or
\item $n\geq1$ and there exist markings $\amark_1, \ldots,
  \amark_{n-1}$ such that $\amark \fire{t_1} \amark_1 \fire{t_2} \ldots
  \fire{t_{n-1}} \amark_{n-1} \fire{t_n} \amark'$.
\end{inparaenum}
A sequence $\vec{t}$ is \emph{fireable} from $\amark$ whenever there
exists a marking $\amark'$ such that $\amark \fire{\vec{t}} \amark'$.

A \emph{Petri net} (PN) is a pair $\amarkednet=(\anet,\amark_0)$,
where $\anet$ is a net and $\amark_0$ is the \emph{initial marking} of
$\anet$. For simplicity, we write $\placeof{\amarkednet} \isdef
\placeof{\anet}$, $\transof{\amarkednet} \isdef \transof{\anet}$,
$\weightof{\amarkednet} \isdef \weightof{\anet}$ and
$\initmarkof{\amarkednet}\isdef\amark_0$ for the elements of
$\amarkednet$. A marking $\amark$ is \emph{reachable} in $\amarkednet$
iff there exists a firing sequence $\vec{t}$ such that $\amark_0
\fire{\vec{t}} \amark$. We denote by $\reach{\amarkednet}$ the set of
reachable markings of $\amarkednet$. The \emph{reachability problem}
asks, given a PN $\amarkednet$ and a marking $\amark$, does
$\amark\in\reach{\amarkednet}$ ? The \emph{coverability problem} asks,
for a given PN $\amarkednet$ and marking $\amark$, does there exists a
marking $\amark' \in \reach{\amarkednet}$ such that $\amark \leq
\amark'$? Here the order of markings is the pointwise order on $\nat$,
\ie $\amark \leq \amark'$ iff $\amark(q) \leq \amark'(q)$ for all $q
\in \placeof{\amarkednet}$. The coverability problem is more concisely
stated using the set of covered markings $\cover{\amarkednet} \isdef
\{\amark \mid \exists \amark'\in\reach{\amarkednet} ~.~ \amark \leq
\amark'\}$, \ie given $\amarkednet$ and $\amark$, does
$\amark\in\cover{\amarkednet}$?



\subsection{Parameterized Systems}

We begin by defining parameterized communicating systems, \ie graphs
whose vertices model network nodes that run identical copies of one or
more process types.  Neighbouring processes synchronize their
transitions according to the observable edge labels of the network
graph. In most of the literature (see, \eg \cite{BloemBook} for a
survey) process types are represented by finite labeled transition
systems (LTS), \eg with disjoint observable and internal alphabets of
transition labels. For simplicity, here we use PNs whose transitions
mimick closely the transitions of a LTS, thus avoiding the formal
definition of the latter.

Let $\valpha$ and $\ealpha$ be finite disjoint alphabets of vertex and
edge labels, respectively. A (binary labeled) \emph{graph} is a tuple
$\graph=(\verts,\edges,\vlab)$, where $\verts$ is a finite set of
vertices, $\edges \subseteq \verts\times\ealpha\times\verts$ is a set
of labeled binary edges and $\vlab : \verts \rightarrow \valpha$ maps
each vertex to a vertex label. Edges $(v_1,a,v_2)$ are written $v_1
\arrow{a}{} v_2$. We denote by $\vertof{\graph}$, $\edgeof{\graph}$
and $\vlabof{\graph}$ the vertices, edges and vertex labeling of
$\graph$, respectively. We do not distinguish isomorphic graphs,
\ie graphs that differ only in the identities of their vertices.

\begin{definition}\label{def:process-type}
  A \emph{process type} $\ptype$ is a PN having all weights at most
  $1$ and exactly one marked place initially, whose transitions are
  partitioned into \emph{observable} $\obstransof{\ptype}$ and
  \emph{internal} $\inttransof{\ptype}$, \ie
  $\transof{\ptype}=\obstransof{\ptype} \uplus \inttransof{\ptype}$,
  and each transition has exactly one predecessor and one
  successor. Let $\ptypes = \set{\ptype_1, \ldots, \ptype_k}$ be a
  finite fixed set of process types such that $\placeof{\ptype_i} \neq
  \emptyset$, for all $i \in \interv{1}{k}$ and $\placeof{\ptype_i}
  \cap \placeof{\ptype_j} = \emptyset$, for all $1 \leq i < j \leq k$.
\end{definition}
Because a process type has exactly one initial token and all
transitions have one predecessor and one successor, every reachable
marking of a process type has exactly one initial token. A PN having
this property is said to be \emph{automata-like}. We denote by
$\placeof{\ptypes}$ and $\obstransof{\ptypes}$ the sets of places and
observable transitions from some $\ptype\in\ptypes$, respectively.

\input{figure-proctype}

\begin{example}
  \figref{fig:proc-typ1} (a) shows two process types $\mathit{Cont}$
  and $\mathit{Proc}$. They both represent entities that can hold a
  token and they can either grab a token, if they do not have it, or
  release the token, otherwise. The first one, which we identify as a
  controller, has only observable transitions, whereas the second one,
  which represents a worker process, has two internal trasitions
  $\mathit{start}$ and $\mathit{stop}$ depicted in yellow. These
  transitions are used to simulate the fact that when the worker has
  the token, it can move to a working state, from which he cannot
  release the token and when it stops working, it can move back to a
  state from which the token can be released.
\end{example}

\begin{definition}\label{def:system}
  A \emph{system} $\asys=(\verts,\edges,\vlab)$ is a graph whose
  vertices are labeled with process types from $\ptypes$
  ($\valpha=\ptypes$) and edges with pairs of observable transitions
  from $\obstransof{\ptypes}$ (\ie
  $\ealpha=\obstransof{\ptypes}\times\obstransof{\ptypes}$), such that
  $t_i \in \obstransof{\vlab(v_i)}$, for both $i=1,2$, for each edge
  $v_1\arrow{(t_1,t_2)}{} v_2 \in \edgeof{\asys}$.
\end{definition}

\begin{example}
  \figref{fig:proc-typ1} (b) shows two systems labeled with the process
  types given by \figref{fig:proc-typ1} (a). They both represent a
  network with four entities, one controller of type $\mathit{Cont}$
  and three working processes of type $\mathit{Proc}$. In $\asys_1$,
  the controller can pass a token to the first working process, which
  can pass the token to the second one which can pass the token to the
  third one. In $\asys_2$, the controller is at the center and it
  communicate with all the working processes that get and release the
  token.
\end{example}

The communication (\ie synchronization between processes) in a
system is formally captured by the following notion of behavior:

\begin{definition}\label{def:behavior}
  A \emph{behavior} is a PN $\amarkednet$ such that $1 \leq
  \cardof{\pre{t}}=\cardof{\post{t}} \leq 2$, for each
  $t\in\transof{\amarkednet}$. The \emph{behavior of a system}
   $\asys=(\verts,\edges,\vlab)$ is $\behof{\asys} \isdef
   (\anet,\amark_0)$, where: \begin{itemize}
   \item $\placeof{\anet} \isdef \set{(q,v) \mid q \in
     \placeof{\vlab(v)},~ v \in \verts}$, a place $(q,v)$ corresponds
     to the place $q$ of the process type $\vlab(v)$ that labels the
     vertex $v$;
   \item $\transof{\anet} \isdef \edges \cup \set{(t,v) \mid t \in
     \inttransof{\vlab(v)},~ v \in \verts}$, the transitions are
     either edges of the system (\ie modeling the synchronizations
     of two processes) or pairs $(t,v)$ corresponding to an internal
     transition $t$ of the process type $\vlab(v)$ that labels the
     vertex $v$;
   \item the weight function $\weightof{\anet}$ is defined below:
     \begin{align*}
       \weightof{\anet}((q,v),v_1 \arrow{(t_1,t_2)}{} v_2) \isdef & \left\{\begin{array}{ll}
       \weightof{\vlab(v)}(q,t_i) \text{, if } v=v_i, \text{, for } i=1,2 \\
       0 \text{, otherwise}
       \end{array}\right.  \\
       \weightof{\anet}(v_1 \arrow{(t_1,t_2)}{} v_2,(q,v)) \isdef & \left\{\begin{array}{ll}
       \weightof{\vlab(v)}(t_i,q) \text{, if } v=v_i \text{, for } i=1,2  \\
       0 \text{, otherwise}
       \end{array}\right.
     \end{align*}
     
     \vspace*{-.5\baselineskip}
     \begin{align*}
       \hspace*{-8mm}
       \weightof{\anet}((q,v),(t,v')) \isdef \left\{\begin{array}{ll}
       \weightof{\vlab(v)}(q,t) \text{, if } v=v' \\
       0 \text{, otherwise}
       \end{array}\right.
       \hspace*{4mm}
       \weightof{\anet}((t,v'),(q,v)) \isdef \left\{\begin{array}{ll}
       \weightof{\vlab(v)}(t,q) \text{, if } v=v' \\
       0 \text{, otherwise}
       \end{array}\right. 
     \end{align*}
   \item $\amark_0(q,v)\isdef\initmarkof{\vlab(v)}(q)$, for all $v \in \verts$ and $q \in \placeof{\vlab(v)}$. 
   \end{itemize}
\end{definition}
It is easy to check that $\behof{\asys}$ is a behavior, for any system
$\asys$. For example, \figref{fig:proc-typ1} (c) and (d) show the
behaviors of the two systems from \figref{fig:proc-typ1} (b).  By
construction, among places $\set{(q, v) \mid q\in\placeof{\vlab(v)}}$
for any $v$, there is exactly one token in all reachable markings
because $\vlab(v)$ is automata-like. By extension we say that such
behaviors are automata-like.

A \emph{parameterized system} $\parsys = \set{\asys_1,\asys_2,
  \ldots}$ is a possibly infinite set of systems, called
\emph{instances}. A parameterized system has an infinite set of
behaviors, denoted as $\behof{\parsys}$, \ie one for each
instance. \ifLongVersion
In the rest of this paper we are concerned with the following
parameterized verification problems:
\begin{definition}\label{def:parameterized-verif} 
  The \emph{parameterized reachability (\resp coverability) problem}
  takes in input a parameterized system $\parsys$, a set of places $\mathcal{Q}
  \subseteq \placeof{\ptypes}$ and a mapping 
  $\amark : \mathcal{Q} \rightarrow \nat$ and asks for the
  existence of an instance $\asys\in\parsys$ and of a marking
  $\overline{\amark} \in \reach{\behof{\asys}}$ (\resp
  $\overline{\amark} \in \cover{\behof{\asys}}$) where
  \(\sum_{\set{v \in \vertof{\asys} \mid q \in \placeof{\vlabof{\asys}(v)}}}
  \overline{\amark}(q,v)=\amark(q) \text{, for all } q \in
  \mathcal{Q}.\)
\end{definition}
For Petri nets, the reachability problem specified by the number of
token within a set of places is known as the \emph{submarking
reachability} \cite{DecidabilityPetriNets}. For example, a typical
correctness property that can be stated as a parameterized
coverability problem is mutual exclusion: a parameterized system
violates mutual exclusion if there is an instance that reaches a
marking in which two or more processes of the same type are in the
same given local place.
\begin{example}
  For the two systems depicted on \figref{fig:proc-typ1}, no two
  processes (\ie running on different nodes) work at the same time
  if and only if the parameterized coverability problem for
  $\places=\set{\mathit{work}}$ and $\amark = \set{(\mathit{work},2)}$
  has a negative answer, \ie some marking that agrees with $\amark$
  over $\mathit{work}$ cannot be covered by any marking reachable in
  the behavior of some instance.
\end{example}
\else The verification problems considered in this paper are, given a
parameterized system $\parsys$ and a marking $\amark$ for a subset
$\mathcal{Q}$ of the places in $\ptypes$, does there exist an instance
of $\parsys$ whose behavior reaches (covers) a marking that agrees
with $\amark$ over $\mathcal{Q}$? We shall define these problems
formally, once we have introduced the language for the specification
of parameterized systems. \fi

\subsection{Algebras}
\label{subsec:algebras}

We recall a few notions on algebras needed in the following. A
\emph{signature} is a set of function symbols $\signature=\set{\aop_1,
  \aop_2, \ldots}$.  An \emph{$\signature$-term} is a term built with
function symbols from $\signature$ and variables of arity zero.  An
$\signature$-term is \emph{ground} if it has no variables.  An
\emph{$\signature$-algebra} $\algof{A} =
(\universeOf{A},\aop^\algof{A}_1,\aop^\algof{A}_2,\ldots)$ interprets
the function symbols from $\signature$ as functions over the
\emph{domain} $\universeOf{A}$.  Given $\signature$-algebras
$\algof{A}$ and $\algof{B}$ having domains $\universeOf{A}$ and
$\universeOf{B}$, respectively, a \emph{homomorphism} is a function $h
: \universeOf{A} \rightarrow \universeOf{B}$ such that
$h(\aop^\algof{A}(a_1, \ldots, a_n)) = \aop^{\algof{B}}(h(a_1),
\ldots, h(a_n))$, for each function symbol $\aop\in\signature$ of
arity $n$ and $a_1, \ldots, a_n \in \universeOf{A}$. The \emph{kernel}
of a function $f : \universeOf{A} \rightarrow \universeOf{B}$ is the
equivalence relation $\kerof{f} \subseteq \universeOf{A} \times
\universeOf{A}$ defined as $a_1 \kerof{f} a_2 \iff f(a_1)=f(a_2)$. An
equivalence relation $\sim \subseteq \universeOf{A} \times
\universeOf{A}$ is an $\signature$-\emph{congruence} if and only if,
for each function symbol $\aop\in\signature$ of arity $n$ and $a_1,
a'_1, \ldots, a_n, a'_n \in \universeOf{A}$ such that $a_i\sim a'_i$,
for all $i \in \interv{1}{n}$, we have $\aop^\algof{A}(a_1,\ldots,a_n)
\sim \aop^\algof{A}(a'_1,\ldots,a'_n)$.

\begin{propositionE}[][category=proofs]\label{prop:cong-homo}
  Let $\algof{A}$ be an $\signature$-algebra having domain
  $\universeOf{A}$, and $f : \universeOf{A} \rightarrow
  \universeOf{B}$ be a function such that $\kerof{f}$ is an
  $\signature$-congruence. Then $f$ is a homomorphism between
  $\algof{A}$ and $\algof{B} \isdef
  (f(\universeOf{A}),\set{\aop^\algof{B}}_{\aop\in\signature})$, where
  $\aop^\algof{B}(b_1,\ldots,b_n)\isdef f(\aop^\algof{A}(f^{-1}(b_1),
  \ldots, f^{-1}(b_n)))$\footnote{The right-hand side of this
    definition is a singleton that we identify with its element.}, for
  each function symbol $\aop\in\signature$ of arity $n$ and all
  $b_1,\ldots,b_n\in f(\universeOf{A})$. Consequently,
  $f(\theta^\algof{A})=\theta^\algof{B}$, for each ground
  $\signature$-term $\theta$.
\end{propositionE}
\begin{proofE}
  First, we prove that $\aop^\algof{B}(b_1,\ldots,b_n)$ is
  well-defined and evaluates to a singleton, for each $\aop \in
  \signature$ and $b_1,\ldots,b_n \in f(\universeOf{A})$. Let $a_i \in
  f^{-1}(b_i)$ be elements, for all $i\in\interv{1}{n}$. The element
  $a_i$ exists, by the choice of $b_i \in f(\universeOf{A})$, for all
  $i \in \interv{1}{n}$. Then $f^{-1}(b_i)$ is the
  $\kerof{f}$-equivalence class of $a_i$, for all
  $i\in\interv{1}{n}$. Because $\kerof{f}$ is an
  $\signature$-congruence, $\aop^\algof{A}(f^{-1}(b_1), \ldots,
  f^{-1}(b_n))$ is the $\kerof{f}$-equivalence class of
  $\aop^\algof{A}(a_1,\ldots,a_n)$, hence
  $f(\aop^\algof{A}(f^{-1}(b_1), \ldots,
  f^{-1}(b_n)))=\set{f(\aop^\algof{A}(a_1), \ldots,
    \aop^\algof{A}(a_n))}$ is a singleton. Second, to prove that $f$
  is an homomorphism between $\algof{A}$ and $\algof{B}$, we compute,
  for each $\aop\in\signature$ and all
  $a_1,\ldots,a_n\in\universeOf{A}$:
  \begin{align*}
    \aop^\algof{B}(f(a_1), \ldots, f(a_n)) = & f(\aop^\algof{A}(f^{-1}(f(a_1)), \ldots, f^{-1}(f(a_n))))
    \text{, by the definition of } \aop^\algof{B} \\
    = & f(\aop^\algof{A}(a_1, \ldots, a_n)) \text{, because } \kerof{f} \text{ is an $\signature$-congruence}
  \end{align*}
  Finally, $f(\theta^\algof{A})=\theta^\algof{B}$ holds for each
  ground $\signature$-term $\theta$, because $f$ is a homomorphism
  between $\algof{A}$ and $\algof{B}$. \qed
\end{proofE}
\begin{proofSketch}
  The fact that $f(\aop^\algof{A}(f^{-1}(a_1),\ldots,f^{-1}(a_n)))$ evaluates to a singleton
  comes from the fact that $\kerof{f}$ is an $\signature$-congruence quite directly.
  In $f(\theta^\algof{A})$, use the now established definition of $\aop^\algof{B}$
  to push $f$ to the leaves.
  \qed
\end{proofSketch}

We introduce a standard signature of operations on graphs and define
two algebras, of open systems and behaviors. Let $\sourcelabels$ be a
countably infinite set of \emph{source labels}, fixed in the rest of
the paper. With no loss of generality, we assume that $\sourcelabels$
is partitioned into disjoint sets indexed by the process types
$\ptypes = \set{\ptype_1, \ldots, \ptype_k}$,
\ie $\sourcelabels = \sourcelabels_{\ptype_1} \uplus \ldots \uplus \sourcelabels_{\ptype_k}$.
A source label $\asrc\in\sourcelabels$
uniquely identifies a process type $\ptypeof{\asrc}\in\ptypes$ such
that $\asrc \in \sourcelabels_{\ptypeof{\asrc}}$.
A function $\alpha : \sourcelabels \rightarrow \sourcelabels$ is
\emph{$\ptypes$-preserving} if
$\ptypeof{\alpha(\asrc)}=\ptypeof{\asrc}$,
for all $\asrc\in\sourcelabels$.

The signature of the \emph{hyperedge-replacement} graph algebra
(\hrtext)~\cite{courcelle_engelfriet_2012} consists of the constants
$\sgraph{a}{\asrc_1}{\asrc_2}{}$, for all edge labels $a\in\ealpha$
and source labels $\asrc_1, \asrc_2 \in \sourcelabels$, the unary
symbols $\restrict{\slabs}{}$, for all $\slabs \finsubseteq
\sourcelabels$, $\rename{\alpha}{}$, for all $\ptypes$-preserving
finite permutations $\alpha: \sourcelabels \rightarrow \sourcelabels$
and the binary symbol $\pop{}$. By \hrtext{} congruence we mean an
equivalence relation that is a congruence for the \hrtext{} signature.

An \emph{open system} is a pair $\open{\asys} \isdef
(\asys,\sources)$, where $\asys=(\verts,\edges,\vlab)$ is a system and
$\sources : \sourcelabels \finmap \vertof{\asys}$ is an injective
partial function that assigns source labels to vertices of $\asys$
such that $\ptypeof{\asrc}=\vlab(\sources(\asrc))$, for each
$\asrc\in\dom{\sources}$, \ie the source label of a vertex has the
process type of that vertex. We say that a vertex $v\in\vertof{\asys}$
is the $\asrc$-source of $\open{\asys}$ if $\sources(\asrc)=v$. The
\emph{type} of $\open{\asys}$ is
$\typeof{\open{\asys}}\isdef\dom{\sources}$, \ie the set of source
labels that occurs in $\asys$. Since systems (\defref{def:system}) are
in fact open systems of empty type, we blur the distinction and refer
to open systems as systems from now on.

The algebra $\algof{S}$ of systems (\figref{fig:alg-zoo}) interprets
the \hrtext{} signature over the set $\systems$ of systems, as
follows: \begin{compactitem}[-]
\item $\sgraph{a}{\asrc_1}{\asrc_2}{\algof{S}}$ is the system having a
  single $a$-labeled edge between its two vertices labeled with
  process types $\ptypeof{\asrc_1}$ and $\ptypeof{\asrc_2}$, that are
  the $\asrc_1$- and $\asrc_2$-sources, respectively, for each edge
  label $a\in\ealpha$ and source labels $\asrc_1,\asrc_2\in\sourcelabels$.
\item $\restrict{\slabs}{\algof{S}}(\asys,\sources) \isdef
  (\asys,\proj{\sources}{\slabs})$ removes the source labels that are
  not in $\slabs \finsubseteq \sourcelabels$,
\item
  $\rename{\alpha}{\algof{S}}(\asys,\sources) \isdef
  (\asys,\sources\circ\alpha^{-1})$ relabels the sources according to
  $\alpha$; note that $\sources\circ\alpha^{-1}$ is an injective
  partial mapping, since $\alpha$ is a finite permutation of
  $\sourcelabels$,
\item $(\asys_1,\sources_1) \pop{\algof{S}} (\asys_2,\sources_2)$ is
  the disjoint union of $(\asys_1,\sources_1)$ and
  $(\asys_2,\sources_2)$, followed by: \begin{compactitem}[*]
  \item fusion of each pair of common $\asrc$-sources $v_i \in
    \vertof{\asys_i}$, for $\asrc\in\typeof{\asys_1} \cap
    \typeof{\asys_2}$ and $i=1,2$, into a $\asrc$-source labeled with
    $\ptypeof{\asrc}$,
  \item fusion of all the edges with the same label and endpoints into
    a single edge with the same label and endpoints.
  \end{compactitem}
\end{compactitem}
Every other \hrtext{} algebra considered in the rest of this paper
will be defined from $\algof{S}$ via an
homomorphism. \figref{fig:alg-zoo} shows the diagram of these algebras
and homomorphisms. Each such homomorphism $h$ (except for $\psi$,
which has a trivial definition) has a reference to a lemma proving
that $\kerof{h}$ is an \hrtext{} congruence.

\begin{figure}[t!]
  \vspace*{-\baselineskip}
  \begin{center}
    \begin{tikzpicture}
      \node[label=150:{\scriptsize systems}] (S) at (0,0) {$\algof{S}$};
      \node[label=90:{\scriptsize behaviors}] (B) at ($(S) + (3,0)$) {$\algof{B}$};
      \node[label=30:{\scriptsize folded behaviors}] (Bfold) at ($(B) + (3,0)$) {$\absof{\algof{B}}$};
      \node[draw=black,shape=circle,inner sep=1pt,label=-30:{\scriptsize folded nets}] (Bfin) at ($(Bfold) + (0,-1.5)$) {$\finabsof{\algof{B}}$};
      \node[draw=black,shape=circle,inner sep=1pt,label=-150:{\scriptsize flows}] (F) at ($(S) + (0,-1.5)$) {$\algof{F}$};

      \draw[->] (S) -- node[above]{$\abeh$} node[below]{\lemref{lemma:cong-sys-beh}} (B);
      \draw[->] (B) -- node[above]{$\afold$} node[below]{\lemref{lemma:cong-beh-fold}} (Bfold);
      \draw[->] (Bfold) -- node[right]{$\psi$} node[left]{\eqnref{eq:drop}} (Bfin);
      \draw[->] (S) -- node[left]{$\eta$} node[right]{\lemref{lemma:cong-sys-flow}} (F);
    \end{tikzpicture}
    \vspace*{-\baselineskip}
    \caption{Homomorphisms between the \hrtext{} algebras used in this
      paper. The circled algebras are the finite and effectively
      computable ones.}
    \label{fig:alg-zoo}
  \end{center}
  \vspace*{-2\baselineskip}
\end{figure}
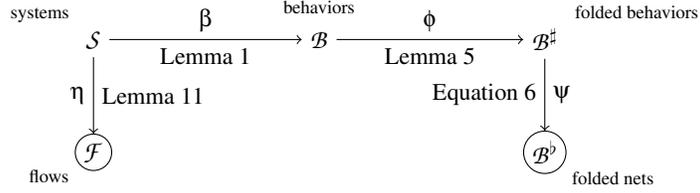

An \emph{open behavior} is a pair
$\open{\amarkednet}\isdef(\amarkednet,\sources)$, where $\amarkednet$
is a behavior and $\sources : \sourcelabels \times \placeof{\ptypes}
\finmap \placeof{\amarkednet}$ is an injective partial function
assigning pairs $(\asrc,q)$ to places of $\amarkednet$, where $\asrc$
is a source label and $q$ is a place from some process type in
$\ptypes$. A place $r\in\placeof{\amarkednet}$ is a $(\asrc,q)$-source
of $\open{\amarkednet}$ if $\sources(\asrc,q)=r$ and
$\typeof{\open{\amarkednet}} \isdef \dom{\sources}$ denotes the type
of $\open{\amarkednet}$. Since behaviors (\defref{def:behavior}) are
actually open behaviors of empty type, we blur the distinction and
refer to open behaviors as behaviors, when no confusion arises.

To define the algebra of behaviors, we extend the $\abeh$ function
introduced by \defref{def:behavior} to (open) systems, as follows. The
behavior of the system $\open{\asys}=(\asys,\sources)$ is
$\behof{\open{\asys}}\isdef(\behof{\asys},\overline{\sources})$, where
$\behof{\asys}$ is given in \defref{def:behavior} and the source
labeling $\overline{\sources}$ is $\overline{\sources}(\asrc,q) \isdef
(q,\sources(\asrc))$ if $\sources(\asrc)$ is defined and
$(q,\sources(\asrc)) \in \placeof{\amarkednet}$, and undefined
otherwise, for all $\asrc\in\sourcelabels$ and $q \in
\placeof{\ptypes}$.

\begin{lemmaE}[][category=proofs]\label{lemma:cong-sys-beh}
  $\kerof{\abeh}$ is a \hrtext{} congruence. 
\end{lemmaE}
\begin{proofE}
  We prove a stronger statement, namely that $\abeh$ is
  injective. Then, $\kerof{\abeh}$ becomes the equality, which is
  trivially a congruence for any signature. Let $\open{\asys}_i =
  (\asys_i,\sources_i)$ be open systems and
  $\behof{\open{\asys}_i}=((\anet_i,\amark_{0i}),\overline{\sources}_i)$
  be open behaviors, for $i=1,2$, such that $\anet_1=\anet_2$,
  $\amark_{01}=\amark_{02}$ and
  $\overline{\sources}_1=\overline{\sources}_2$. We show that
  $\asys_1=\asys_2$ below: \begin{itemize}
  \item $\vertof{\asys_1}=\vertof{\asys_2}$: Suppose, for a
    contradiction, that there exists a vertex $v \in \vertof{\asys_1}
    \setminus \vertof{\asys_2}$. Since
    $\placeof{\vlabof{\asys_1}(v)}\neq\emptyset$
    (\defref{def:process-type}), there exists a place
    $q\in\placeof{\vlabof{\asys_1}(v)}$. Since
    $\placeof{\anet_1}=\placeof{\anet_2}$, by 
    \defref{def:behavior}, we obtain $q\in\placeof{\vlabof{\asys_2}}(v)$
    and $v \in \vertof{\asys_2}$, contradiction. Hence
    $\vertof{\asys_1} \subseteq \vertof{\asys_2}$ and the proof in the
    other direction is symmetric.
  \item $\vlabof{\asys_1}=\vlabof{\asys_2}$: Let $v \in
    \vertof{\asys_1}$ ($= \vertof{\asys_2}$, by the previous point)
    and let $q \in \placeof{\vlabof{\asys_1}(v)}$ be a place. Since
    $\placeof{\anet_1}=\placeof{\anet_2}$, by definition
    \defref{def:behavior}, we obtain $q \in
    \placeof{\vlabof{\asys_2}(v)}$, hence
    $\vlabof{\asys_1}(v)=\vlabof{\asys_2}(v)$, by
    \defref{def:process-type}, \ie because the process types are
    assumed to have pairwise disjoint sets of places. Since the choice
    of $v$ was arbitrary, we obtain
    $\vlabof{\asys_1}=\vlabof{\asys_2}$. 
  \item $\edgeof{\asys_1} = \edgeof{\asys_2}$: since
    $\transof{\anet_1}=\transof{\anet_2}$,
    $\vertof{\asys_1}=\vertof{\asys_2}$ and
    $\vlabof{\asys_1}=\vlabof{\asys_2}$, by the previous points, we
    obtain $\edgeof{\asys_1} = \edgeof{\asys_2}$, by
    \defref{def:behavior}. 
  \end{itemize}
  To prove $\sources_1 = \sources_2$, let $\asrc \in \sourcelabels$ be
  a source label and $q \in \placeof{\ptypes}$ be a place from some
  process type, such that $\overline{\sources}_1(\asrc,q)$ is
  defined. Then, $\overline{\sources}_1(\asrc,q) =
  (q,\sources_1(\asrc))$. Since $\overline{\sources}_1 =
  \overline{\sources}_2$, we obtain that
  $\overline{\sources}_2(\asrc,q)$ is defined and
  $\sources_1(\asrc)=\sources_2(\asrc)$. Since the choice of $\asrc$
  was arbitrary, we obtain $\sources_1 = \sources_2$. \qed
\end{proofE}
\begin{proofSketch}
  In fact $\abeh$ is injective making $\kerof{\abeh}$ the equality
  (thus a congruence).
  \qed
\end{proofSketch}

We introduce an algebra $\algof{B}$ of behaviors
(\figref{fig:alg-zoo}), whose domain and interpretation of \hrtext{}
function symbols are defined as in \propref{prop:cong-homo}. Since
$\kerof{\abeh}$ is a \hrtext{} congruence
(\lemref{lemma:cong-sys-beh}), it follows that $\abeh$ is a
homomorphism between $\algof{S}$ and $\algof{B}$. As we discuss next,
a grammar that describes a parameterized system $\parsys=\set{\asys_1,
  \asys_2, \ldots}$ can be reused to describe its set of behaviors
$\behof{\parsys}$.

\subsection{Grammars}

A \emph{grammar} is a pair $\grammar=(\nonterm,\rules)$ consisting of
a finite set $\nonterm$ of \emph{nonterminals} and a finite set
$\rules$ of \emph{rules} of the form, either (1) $X \rightarrow
\rho[X_1,\ldots,X_n]$, where $X,X_1,\ldots,X_n \in \nonterm$ are
nonterminals and $\rho$ is a term whose only variables are
$X_1,\ldots,X_n$, or (2) $\rightarrow X$, where $X \in \nonterm$; the
rules of this form are called \emph{axioms}. Given terms $\theta$ and
$\eta$, a \emph{step} $\theta \step{\grammar} \eta$ obtains $\eta$
from $\theta$ by replacing an occurrence of a nonterminal $X$ with the
term $\rho$, for some rule $X \rightarrow \rho$ of $\grammar$. An
$X$-\emph{derivation} is a sequence of steps starting with a
nonterminal $X$. The derivation is \emph{complete} if it ends in a
ground term. Let $\alangof{X}{\algof{A}}{\grammar} \isdef
\{\theta^\algof{A} \mid X \step{\grammar}^* \theta \text{ is a
  complete derivation}\}$ and $\alangof{}{\algof{A}}{\grammar} \isdef
\bigcup_{\rightarrow X \in \rules} \alangof{X}{\algof{A}}{\grammar}$
be the \emph{language} of $\grammar$ in the algebra $\algof{A}$.

The following result, also known as the \emph{Filtering Theorem},
allows to build a grammar for the intersection between the language of
a grammar and a \emph{recognizable set}, \ie the image of a finite
set via an inverse homomorphism~\cite[Theorem 3.88]{courcelle_engelfriet_2012}:

\begin{theorem}\label{thm:filtering}
  Let $\signature=\set{\aop_1, \aop_2, \ldots}$ be a signature,
  $\algof{A}=(\universeOf{A}, \aop^\algof{A}_1, \aop^\algof{A}_2,
  \ldots)$ and $\algof{B}=(\universeOf{B}, \aop^\algof{B}_1,
  \aop^\algof{B}_2, \ldots)$ be $\signature$-algebras, such that
  $\universeOf{B}$ is finite, and $h$ be a homomorphism between
  $\algof{A}$ and $\algof{B}$. For each $\signature$-grammar
  $\grammar$ and set $C \subseteq \universeOf{B}$, one can build a
  grammar $\grammar_{h,C}$ such that
  $\alangof{}{\algof{A}}{\grammar_{h,C}} =
  \alangof{}{\algof{A}}{\grammar} \cap h^{-1}(C)$.
\end{theorem}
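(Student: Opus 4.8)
This is the classical Filtering Theorem (also called the Myhill–Nerode / closure-under-inverse-homomorphism property for HR-equational sets). The plan is to build $\grammar_{h,C}$ by taking a product of the nonterminals of $\grammar$ with the finite domain $\universeOf{B}$, so that a nonterminal $\tuple{X,b}$ generates exactly those trees derivable from $X$ in $\grammar$ whose $h$-value is $b$. Concretely, for each nonterminal $X\in\nonterm$ and each $b\in\universeOf{B}$ introduce a fresh nonterminal $X_b$; for each rule $X \rightarrow \rho[X_1,\ldots,X_n]$ of $\grammar$ and each tuple $(b_1,\ldots,b_n)\in\universeOf{B}^n$, add the rule $X_b \rightarrow \rho[X_{1,b_1},\ldots,X_{n,b_n}]$ whenever $b = \sem{\rho}^{\algof{B}}(b_1,\ldots,b_n)$ — that is, the value obtained by evaluating the term $\rho$ in $\algof{B}$ with $X_i$ interpreted as $b_i$ (this is well-defined since $\algof{B}$ is an $\signature$-algebra and the only variables of $\rho$ are among $X_1,\ldots,X_n$). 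For axioms $\rightarrow X$ of $\grammar$, add $\rightarrow X_b$ precisely for those $b\in C$. The set $\universeOf{B}$ being finite is exactly what makes $\nonterm \times \universeOf{B}$ finite and the rule set finite, so $\grammar_{h,C}$ is a genuine grammar.

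**Key steps.** First I would fix notation for the evaluation of a term-with-variables $\rho$ in an $\signature$-algebra; since $\algof{B}$ interprets every symbol of $\signature$, the map $(b_1,\ldots,b_n)\mapsto \sem{\rho}^{\algof{B}}(b_1,\ldots,b_n)$ is a total function on $\universeOf{B}^n$, and I would note that $h$ commutes with it, i.e. $h(\sem{\rho}^{\algof{A}}(a_1,\ldots,a_n)) = \sem{\rho}^{\algof{B}}(h(a_1),\ldots,h(a_n))$, which follows by structural induction on $\rho$ from $h$ being a homomorphism (the base case for a variable $X_i$ is trivial, the inductive step uses the homomorphism equation for each $\aop\in\signature$). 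Second, I would prove the main invariant: for every nonterminal $X$, every $b\in\universeOf{B}$, and every ground term $\theta$, there is a complete derivation $X_b \step{\grammar_{h,C}}^* \theta$ if and only if there is a complete derivation $X \step{\grammar}^* \theta$ with $h(\theta^{\algof{A}}) = b$. The "$\Leftarrow$" direction goes by induction on the length of the $\grammar$-derivation: the first step uses a rule $X\rightarrow\rho[X_1,\ldots,X_n]$, the remaining derivation splits into sub-derivations $X_i \step{\grammar}^* \theta_i$ producing ground subterms with values $a_i := \theta_i^{\algof{A}}$; by induction hypothesis $X_{i,h(a_i)}\step{\grammar_{h,C}}^*\theta_i$, and since $h(\theta^{\algof{A}}) = \sem{\rho}^{\algof{B}}(h(a_1),\ldots,h(a_n)) = b$ by the commutation fact, the rule $X_b\rightarrow\rho[X_{1,h(a_1)},\ldots,X_{n,h(a_n)}]$ is in $\grammar_{h,C}$ and closes the derivation. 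The "$\Rightarrow$" direction is the symmetric induction, reading off from each applied product rule both a $\grammar$-rule (by forgetting the second components) and, via the commutation fact, that the accumulated $h$-value is the recorded $b$. Finally, combining the invariant with the choice of axioms $\rightarrow X_b$ for $b\in C$ yields $\theta^{\algof{A}}\in\alangof{}{\algof{A}}{\grammar_{h,C}}$ iff $\theta^{\algof{A}}\in\alangof{}{\algof{A}}{\grammar}$ and $h(\theta^{\algof{A}})\in C$, i.e. $\alangof{}{\algof{A}}{\grammar_{h,C}} = \alangof{}{\algof{A}}{\grammar}\cap h^{-1}(C)$.

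**Main obstacle.** There is no deep obstacle — this is a bookkeeping argument — but the step requiring the most care is making the term-with-variables evaluation precise and verifying that $h$ commutes with it, together with handling the subtlety that a term $\rho$ in a rule of $\grammar$ may use each variable $X_i$ zero, one, or several times and may have arbitrary nesting of operation symbols; the derivation-splitting in the inductive argument must match the occurrences of nonterminals in $\rho$, not merely the list $X_1,\ldots,X_n$. Keeping the induction on derivation length (rather than on term structure) and being explicit that a single $\grammar$-step replaces one nonterminal occurrence sidesteps any confusion here. Since the statement only claims existence of $\grammar_{h,C}$ and the paper cites \cite[Theorem 3.88]{courcelle_engelfriet_2012}, one may alternatively simply invoke that reference; but the product construction above is short enough to include in full.
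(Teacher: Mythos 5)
Your construction is exactly the one the paper records (annotate nonterminals with elements of the finite domain $\universeOf{B}$, keep a product rule $X^b \rightarrow \rho[X_1^{b_1},\ldots,X_n^{b_n}]$ precisely when $b=\rho^{\algof{B}}(b_1,\ldots,b_n)$, and take axioms $\rightarrow X^c$ for $c\in C$); the paper itself gives only this construction and defers correctness to Courcelle--Engelfriet, Theorem~3.88. Your added correctness argument --- commutation of $h$ with term evaluation plus the induction on derivations showing that $X_b$ derives exactly those ground terms $\theta$ with $X \step{\grammar}^* \theta$ and $h(\theta^{\algof{A}})=b$ --- is the standard one and is sound, with the one caveat you yourself flag: the annotations must be attached to the individual occurrences of nonterminals in $\rho$ (not to variable names), since two occurrences of the same nonterminal may derive subterms with different $h$-values.
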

\noindent The construction of the filtered grammar $\grammar_{h,C}$ is
effective: for each rule $X \rightarrow \rho[X_1,\ldots,X_n]$ of
$\grammar$ and each sequence of elements $b, b_1, \ldots, b_n \in
\universeOf{B}$ such that $b = \rho^\algof{B}(b_1,\ldots,b_n)$, the
grammar $\grammar_{h,C}$ has a rule $X^b \rightarrow
\rho[X_1^{b_1},\ldots,X_n^{b_n}]$; the axioms of $\grammar_{h,C}$ are
$\rightarrow X^c$, for each axiom $\rightarrow X$ of $\grammar$ and
each element $c \in C$.

The grammars from the rest of the paper use the \hrtext{} signature to
describe parameterized systems. The following example shows two
grammars that specify parameterized systems with chain-like and
star-like network topologies, as in \figref{fig:proc-typ1} (b):

\begin{example}\label{ex:parameterized-grammars}
  The grammar $\grammar_{Chain}$ below defines systems with chain-like network topologies
  and at least three processes including the controller
  (\figref{fig:proc-typ1} (b) top):

  \vspace*{-.5\baselineskip}
  \begin{center}
  \begin{minipage}{0.6\textwidth}
  \begin{align*}
    & \to C \\
    C &\to \restrict{\set{\sigma_1}}{} ((relC,get)_{(\sigma_3,\sigma_2)} \pop{} (rel,get)_{(\sigma_2,\sigma_1)}) \\
    C &\to \restrict{\set{\sigma_1}}{}
      \rename{\sigma_1 \leftrightarrow \sigma_2}{}
      (C \pop{} (rel,get)_{(\sigma_1,\sigma_2)})
  \end{align*}
  \end{minipage}
  \begin{minipage}{0.3\textwidth}
    \vspace*{1.5\baselineskip}
    \scalebox{0.6}{
      \begin{tikzpicture}[node distance=1.5cm]
      \tikzstyle{every state}=[inner sep=3pt,minimum size=20pt]
      \node(v0)[gnode,draw=black,label=-100:{$Cont$}]{};
      \node(v1)[gnode,draw=black,label=-90:{$Proc$}] at ($(v0) + (2,0)$) {};
      \node(v2)[gnode,draw=black,label=-80:{$Proc$},label=30:{$\sigma_1$}] at ($(v1) + (2,0)$) {};
      \path (v0) edge [->,thick,line width=1pt] node[above]{$(\mathit{relC},\mathit{get})$} (v1);
      \path (v1) edge [->,thick,line width=1pt] node[above]{$(\mathit{relC},\mathit{get})$} (v2);
      \end{tikzpicture}
    } \\
    \scalebox{0.6}{
      \begin{tikzpicture}[node distance=1.5cm]
      \tikzstyle{every state}=[inner sep=3pt,minimum size=20pt]
      \node(v0)[shape=rectangle,draw=black,minimum width=1cm,minimum height=5mm] {C};
      \node(v1)[gnode,draw=black] at ($(v0) + (5mm,0)$) {};
      \node(v2)[gnode,draw=black,label=-80:{$Proc$},label=30:{$\sigma_1$}] at ($(v1) + (2,0)$) {};
      \path (v1) edge [->,thick,line width=1pt] node[above]{$(\mathit{rel},\mathit{get})$}(v2);
      \end{tikzpicture}
    }
  \end{minipage}
  \end{center}
  
  \noindent The right-hand sides of the last two rules correspond to
  the graphs on the right. Similarly, the $\grammar_{Star}$ grammar
  below defines systems with star-shaped network topology
  and at least two processes including the controller
  (\figref{fig:proc-typ1} (b) bottom): 

  \begin{center}
  \begin{minipage}{0.7\textwidth}
    \begin{align*}
      & \to Z \\
      Z &\to \restrict{\set{\sigma_1}}{}
        ((relC,get)_{(\sigma_1,\sigma_2)}
        \pop{\algof{S}} (getC,rel)_{(\sigma_1,\sigma_2)}) \\
      Z &\to \restrict{\set{\sigma_1}}{}
        (Z \pop{}
        (relC,get)_{(\sigma_1,\sigma_2)}
        \pop{\algof{S}} (getC,rel)_{(\sigma_1,\sigma_2)})
    \end{align*}
  \end{minipage}
  \vspace*{1.5\baselineskip}
  \begin{minipage}{0.2\textwidth}
    \scalebox{0.6}{
      \begin{tikzpicture}[node distance=1.5cm]
      \tikzstyle{every state}=[inner sep=3pt,minimum size=20pt]
      \node(v0)[gnode,draw=black,label=-100:{$Cont$},label=150:{$\sigma_1$}]{};
      \node(v1)[gnode,draw=black,label=-80:{$Proc$}] at ($(v0) + (2,0)$) {};
      \path (v0) edge [->,thick,line width=1pt,bend left=40] node[above]{$(\mathit{relC},\mathit{get})$}(v1);
      \path (v0) edge [->,thick,line width=1pt,bend right=40] node[below]{$(\mathit{getC},\mathit{rel})$}(v1);
      \end{tikzpicture}
    } \\
    \scalebox{0.6}{
      \begin{tikzpicture}[node distance=1.5cm]
      \tikzstyle{every state}=[inner sep=3pt,minimum size=20pt]
      \node(v0)[shape=rectangle,draw=black,minimum width=1cm,minimum height=5mm] {Z};
      \node(v1)[gnode,draw=black,label=0:{$\sigma_1$}] at ($(v0) + (5mm,0)$) {};
      \node(v2)[gnode,draw=black,label=-80:{$Proc$}] at ($(v1) + (2,0)$) {};
      \path (v1) edge [->,thick,line width=1pt,bend left=40] node[above]{$(\mathit{relC},\mathit{get})$}(v2);
      \path (v1) edge [->,thick,line width=1pt,bend right=40] node[below]{$(\mathit{getC},\mathit{rel})$}(v2);
      \end{tikzpicture}
    }
  \end{minipage}
  \end{center}
  \vspace*{-2\baselineskip}
\end{example}

The following argument will be repeated several times: if $\algof{A}$
is some \hrtext{} algebra related to $\algof{S}$ by a homomorphism
$h$, then $h(\alangof{}{\algof{S}}{\grammar}) =
\alangof{}{\algof{A}}{\grammar}$, for each grammar $\grammar$ written
using the \hrtext{} signature. Moreover, if the domain of $\algof{A}$
is finite, the language $\alangof{}{\algof{A}}{\grammar}$ is finite
and effectively computable by a finite Kleene iteration, provided that
the interpretation of each function symbol from the \hrtext{}
signature in $\algof{A}$ is effectively computable.  The finite and
effectively computable algebras in question are circled in
\figref{fig:alg-zoo}.

As a consequence of \lemref{lemma:cong-sys-beh}, a grammar that
specifies a parameterized system defines also its set of behaviors:

\begin{lemmaE}[][category=proofs]\label{prop:sys-beh}
  For each \hrtext{} grammar $\grammar$, we have
  $\behof{\alangof{}{\algof{S}}{\grammar}} =
  \alangof{}{\algof{B}}{\grammar}$.
\end{lemmaE}
\begin{proofE}
  By \lemref{lemma:cong-sys-beh}, $\abeh$ is a homomorphism between
  $\algof{S}$ and $\algof{B}$, hence
  $\behof{\theta^\algof{S}}=\theta^\algof{B}$, for each ground term
  $\theta$. The result follows, by the definition of the language of a
  grammar in a given algebra. \qed
\end{proofE}
\begin{proofSketch}
  Easy application of \lemref{lemma:cong-sys-beh}
  and \propref{prop:cong-homo}.
  \qed
\end{proofSketch}

We consider the problems of reachability and coverability for
parameterized systems specified by grammars and give the formal
definitions of the decision problems:

\begin{definition}\label{def:grammar-parameterized-verif}
  The $\paramreach{\grammar}{\mathcal{Q}}{\amark}$
  (\resp $\paramcover{\grammar}{\mathcal{Q}}{\amark}$) problem takes
  in input a grammar $\grammar$, a set of places $\mathcal{Q}
  \subseteq \placeof{\ptypes}$, a mapping $\amark : \mathcal{Q}
  \rightarrow\nat$ and asks for the existence of a system
  $\asys\in\alangof{}{\algof{S}}{\grammar}$ and marking
  $\overline{\amark}\in\reach{\behof{\asys}}$
  (\resp $\overline{\amark}\in\cover{\behof{\asys}}$) where 
  \(\sum_{v \in \vertof{\asys}:q \in \placeof{\vlabof{\asys}(v)}}
  \overline{\amark}(q,v)=\amark(q) \text{, for all } q \in
  \mathcal{Q}.\)
\end{definition}
Unsurprisingly, both problems are undecidable, even for simple
grammars defining parameterized systems with chain-like topologies
(see \exref{ex:parameterized-grammars}), when the structure of the
process types is unrestricted.

\begin{theoremE}[][category=proofs]\label{thm:undecidability}
  The $\paramreach{}{}{}$ and $\paramcover{}{}{}$ problems are
  undecidable.
\end{theoremE}
\begin{proofE}
  \newcommand{\Instrs}{\mathit{Instrs}}
\newcommand{\Count}{\mathit{Count}}
\newcommand{\Minsk}{\mathit{Minsk}}

To show that the two mentioned problems are undecidable,
we propose a reduction from the halting problem for a (two-counter) Minsky machine \cite{minsky-computation-67}.
A Minsky machine is a finite sequence of instructions $\Instrs$ manipulating two natural variables $c1$ and $c2$,
called counters, and each instruction has one of the following form:
\begin{enumerate}
\item $\ell:ci=ci+1;\mathtt{goto}~\ell';$
\item $\ell:\mathtt{if}~ci==0?~\mathtt{goto}~ \ell'; \mathtt{else}~ci=ci-1;\mathtt{goto}~\ell'';$
\end{enumerate}
where $\ell$.$\ell'$ and $\ell''$ are labels.
All labels are followed by an instruction except $\ell_f$, the final label.
The halting problem asks if the Minsky machine starting from an initial label $\ell_0$
with the two counters set to $0$ halts, \ie reaches the label $\ell_f$.

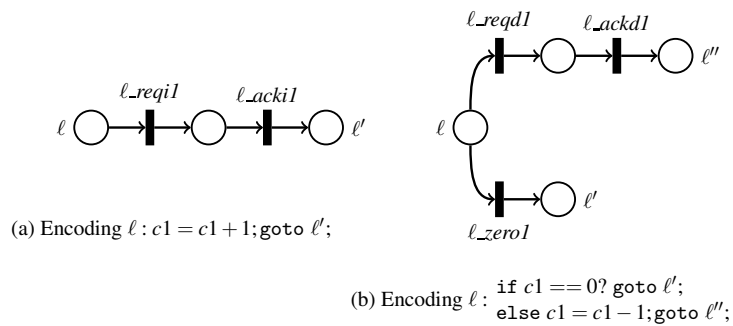
\begin{figure}[htbp]
\begin{center}
\scalebox{0.9}{
\begin{tikzpicture}[node distance=1.5cm]
  \tikzstyle{every state}=[inner sep=3pt,minimum size=20pt]

  \node(Q)[petri-p,draw=black,label=180:{$\ell$}]{};
  \node(ReqI1)[petri-t2,draw=black,fill=black,right of=Q,label=90:{$\ell\_\mathit{reqi1}$},xshift=-2em]{};
  \node(aux1)[petri-p,draw=black,right of=ReqI1,xshift=-2em]{};
  \node(AckI1)[petri-t2,draw=black,fill=black,right of=aux1,label=90:{$\ell\_\mathit{acki1}$},xshift=-2em]{};
  \node(R)[petri-p,draw=black,right of=AckI1,xshift=-2em,label=00:{$\ell'$}]{};

  \path (Q) edge [->,thick,line width=1pt](ReqI1);
  \path (ReqI1) edge [->,thick,line width=1pt](aux1);
  \path (aux1) edge [->,thick,line width=1pt](AckI1);
  \path (AckI1) edge [->,thick,line width=1pt](R);

 \node(Q2)[petri-p,draw=black,label=180:{$\ell$},right of=R,xshift=2em]{};
  \node(ReqD1)[petri-t2,draw=black,fill=black,above right of=Q2,label=90:{$\ell\_\mathit{reqd1}$},xshift=-2em]{};
  \node(aux2)[petri-p,draw=black,right of=ReqD1,xshift=-2em]{};
  \node(AckD1)[petri-t2,draw=black,fill=black,right of=aux2,label=90:{$\ell\_\mathit{ackd1}$},xshift=-2em]{};
  \node(R2)[petri-p,draw=black,right of=AckD1,xshift=-2em,label=00:{$\ell''$}]{};
  \node(Zero1)[petri-t2,draw=black,fill=black,below right of=Q2,label=-90:{$\ell\_\mathit{zero1}$},xshift=-2em]{};
   \node(S)[petri-p,draw=black,right of=Zero1,xshift=-2em,label=00:{$\ell'$}]{};

  \path (Q2) edge [->,thick,line width=1pt,out=90,in=180](ReqD1);
  \path (ReqD1) edge [->,thick,line width=1pt](aux2);
  \path (aux2) edge [->,thick,line width=1pt](AckD1);
  \path (AckD1) edge [->,thick,line width=1pt](R2);
  \path (Q2) edge [->,thick,line width=1pt,out=-90,in=180](Zero1);
  \path (Zero1) edge [->,thick,line width=1pt](S);

  \node(lab1)[below of=ReqI1,xshift=1em]{(a) Encoding $\ell:c1=c1+1;\mathtt{goto}~\ell';$};
   \node(lab1)[below of=Zero1,xshift=2em]{(b) Encoding $\ell: \begin{array}{l}\mathtt{if}~c1==0?~\mathtt{goto}~ \ell';\\ \mathtt{else}~c1=c1-1;\mathtt{goto}~\ell'';\end{array}$};

\end{tikzpicture}
}
\end{center}
\vspace*{-\baselineskip}
\caption{Simulating a Minsky machine with process type $\Minsk$}
\label{fig:proc-minsk}
\end{figure}

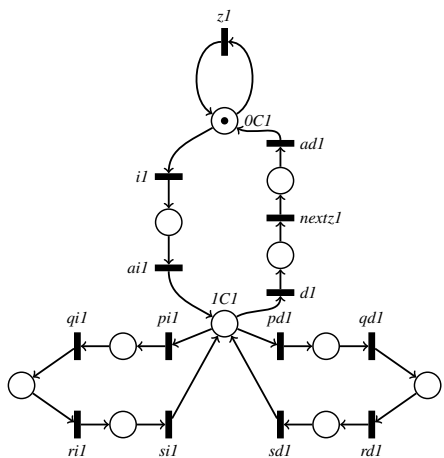
\begin{figure}[htbp]
\begin{center}
\scalebox{0.7}{
\begin{tikzpicture}[node distance=1.5cm]
  \tikzstyle{every state}=[inner sep=3pt,minimum size=20pt]
  
\node(0C1)[petri-p,draw=black,label=00:{$\mathit{0C1}$}]{};
\node(Z1)[petri-t2,draw=black,fill=black,above of=0C1,label=90:{$\mathit{z1}$}]{};
\node(I1)[petri-t,draw=black,fill=black,below left of=0C1,label=180:{$\mathit{i1}$}]{};
\node(D1)[petri-t,draw=black,fill=black,below right of=0C1,label=0:{$\mathit{ad1}$},yshift=2em]{};
\node(0to1)[petri-p,draw=black,below of=I1,yshift=2em]{};
\node(1to0)[petri-p,draw=black,below of=D1,yshift=2.5em]{};
\node(AI1)[petri-t,draw=black,fill=black,below of=0to1,label=180:{$\mathit{ai1}$},yshift=2em]{};
\node(NZ1)[petri-t,draw=black,fill=black,below of=1to0,label=0:{$\mathit{nextz1}$},yshift=2.5em]{};
\node(1to0b)[petri-p,draw=black,below of=NZ1,yshift=2.5em]{};
\node(AD1)[petri-t,draw=black,fill=black,below of=1to0b,label=0:{$\mathit{d1}$},yshift=2.5em]{};
\node(1C1)[petri-p,draw=black,below right of=AI1,label=90:{$\mathit{1C1}$}]{};
\node(PI1)[petri-t2,draw=black,fill=black,below left of=1C1,label=90:{$\mathit{pi1}$},yshift=2em]{};
\node(pi1toqi1)[petri-p,draw=black,left of=PI1,xshift=2em]{};
\node(QI1)[petri-t2,draw=black,fill=black,left of=pi1toqi1,label=90:{$\mathit{qi1}$},xshift=2em]{};
\node(qi1tori1)[petri-p,draw=black,below left of=QI1,yshift=1em]{};
\node(RI1)[petri-t2,draw=black,fill=black,below right of=qi1tori1,label=-90:{$\mathit{ri1}$},yshift=1em]{};
\node(ri1tosi1)[petri-p,draw=black,right of=RI1,xshift=-2em]{};
\node(SI1)[petri-t2,draw=black,fill=black,right of=ri1tosi1,label=-90:{$\mathit{si1}$},xshift=-2em]{};

\node(PD1)[petri-t2,draw=black,fill=black,below right of=1C1,label=90:{$\mathit{pd1}$},yshift=2em]{};
\node(pd1toqd1)[petri-p,draw=black,right of=PD1,xshift=-2em]{};
\node(QD1)[petri-t2,draw=black,fill=black,right of=pd1toqd1,label=90:{$\mathit{qd1}$},xshift=-2em]{};
\node(qd1tord1)[petri-p,draw=black,below right of=QD1,yshift=1em]{};
\node(RD1)[petri-t2,draw=black,fill=black,below left of=qd1tord1,label=-90:{$\mathit{rd1}$},yshift=1em]{};
\node(rd1tosd1)[petri-p,draw=black,left of=RD1,xshift=2em]{};
\node(SD1)[petri-t2,draw=black,fill=black,left of=rd1tosd1,label=-90:{$\mathit{sd1}$},xshift=2em]{};

\path (0C1) edge [->,thick,line width=1pt,out=30,in=0](Z1);
\path (Z1) edge [->,thick,line width=1pt,out=180,in=150](0C1);
\path (0C1) edge [->,thick,line width=1pt,out=-150,in=90](I1);
\path (D1) edge [->,thick,line width=1pt,out=90,in=-30](0C1);
\path (I1) edge [->,thick,line width=1pt](0to1);
\path (1to0b) edge [->,thick,line width=1pt](NZ1);
\path (NZ1) edge [->,thick,line width=1pt](1to0);
\path (1to0) edge [->,thick,line width=1pt](D1);
\path (0to1) edge [->,thick,line width=1pt](AI1);
\path (AD1) edge [->,thick,line width=1pt](1to0b);
\path (AI1) edge [->,thick,line width=1pt,out=-90,in=150](1C1);
\path (1C1) edge [->,thick,line width=1pt,out=30,in=-90](AD1);

\path (1C1) edge [->,thick,line width=1pt](PI1);
\path (PI1) edge [->,thick,line width=1pt](pi1toqi1);
\path (pi1toqi1) edge [->,thick,line width=1pt](QI1);
\path (QI1) edge [->,thick,line width=1pt](qi1tori1);
\path (qi1tori1) edge [->,thick,line width=1pt](RI1);
\path (RI1) edge [->,thick,line width=1pt](ri1tosi1);
\path (ri1tosi1) edge [->,thick,line width=1pt](SI1);
\path (SI1) edge [->,thick,line width=1pt](1C1);

\path (1C1) edge [->,thick,line width=1pt](PD1);
\path (PD1) edge [->,thick,line width=1pt](pd1toqd1);
\path (pd1toqd1) edge [->,thick,line width=1pt](QD1);
\path (QD1) edge [->,thick,line width=1pt](qd1tord1);
\path (qd1tord1) edge [->,thick,line width=1pt](RD1);
\path (RD1) edge [->,thick,line width=1pt](rd1tosd1);
\path (rd1tosd1) edge [->,thick,line width=1pt](SD1);
\path (SD1) edge [->,thick,line width=1pt](1C1);

 \node[petri-tok] at (0C1) {};

\end{tikzpicture}
}
\end{center}
\vspace*{-\baselineskip}
\caption{Process type $\Count_1$ to simulate the first counter}
\label{fig:proc-count1}
\end{figure}

We fix a Minsky machine given by its set of instructions $\Instrs$.
In order to simulate this model, we consider a family of systems which all have the following shape:
there is a central vertex which simulates the instructions of the machine,
on the right side of this vertex there is a sequence of vertices to encode the value of the counter $c1$
and on its left side a sequence of vertices to encode the value of the counter $c2$.
The central node has process type $\Minsk$, it is represented on \figref{fig:proc-minsk},
where we notice that there is a place for each instruction label
and furthermore the initial marking only puts a token in the place $\ell_0$.
The vertices to encode the value of counter $c1$ are labeled by the process type $\Count_1$
depicted on \figref{fig:proc-count1} and the vertices to encode the value of counter $c2$
are labeled the process type $\Count_2$ which is similar to $\Count_1$.
We remark that in the process types $\Count_1$, there are two specific places $0C1$ and $1C1$,
and that initially the place $0C1$ is marked.
During a simulation of the machine with a behavior,
the number of vertices of process type $\Count_1$ for which a token is in place $1C1$
corresponds to the current value of $c1$.
To ease the explanation, we shall call the central vertex,
the controller and the vertices labeled by $\Count_i$ the counting processes for $c_i$, for $i=1,2$.
The grammar $\grammar_\Minsk$ producing the desired family of systems is given on
\figref{fig:grammmar-minsk} and an example of a system belonging to
$\alangof{}{\algof{S}}{\grammar_{\Minsk}}$ is depicted on \figref{fig:undec4}
(we omit the label on the edges to ease presentation).

We now provide the main ingredients of the simulation.
At the beginning the controller is in state $\ell_0$
and all the counting processes are in states $0C1$ or $0C2$ according to their process type,
$\Count_1$ or $\Count_2$.
At any `big step' (\ie when forgetting intermediate steps needed for the correctness) of the simulation,
when the controller is in a state corresponding to a label of the Minsky machine,
the sequence of the counting processes of type $\Count_1$ will always have the following form:
it begins with a sequence of processes in state $1C1$ and ends with a sequence of processes in state $0C1$
(the same property holds for the counting processes of type $\Count_2$ considering the state $1C2$ and $0C2$).
Moreove, the number of processes in state $1C1$ (\resp $1C2$) represents the value of the first (\resp second)
counter at this stage of the simulation. We have then the following behavior:
\begin{itemize}
    \item when the controller simulates an instruction of the form $\ell:c1=c1+1;\mathtt{goto}~\ell';$
        it begins to request for an increment with the transition $\ell\_reqi1$ and it waits for its acknowledgment with $\ell\_acki1$.
        The request is then transmitted by all the processes in state $1C1$
        to the first counting process in state $0C1$ which changes its state and acknowledges it moving to $1C1$,
        the acknowledgment being transmitted back to the controller.
        If there is no process of type $\Count_1$ in state $0C1$, the simulation is stuck,
        it means that the system is not `big' enough to simulate correctly the Minsky machine.
    \item when the controller simulates an instruction of the form
        $\ell:\mathtt{if}~c1==0?~\mathtt{goto}~ \ell';\linebreak[0] \mathtt{else}~c1=c1-1;\mathtt{goto}~\ell'';$
        and the counter $c1$ is equal to $0$,
        this means that there is no counting process of type $\Count_1$ in state $1C1$,
        then the controller takes the transition $\ell\_zero1$ together with the first counting process of type
        $\Count_1$ which takes the transition $z1$.
    \item when the controller simulates an instruction of the form
        $\ell:\mathtt{if}~c1==0?~\mathtt{goto}~ \ell';\linebreak[0] \mathtt{else}~c1=c1-1;\mathtt{goto}~\ell'';$
        and the counter $c1$ has a strictly positive value, then, as for the increment,
        the controller requests a decrement with the transition $\ell\_decqi1$,
        this request is transmitted to the last counting process of type $\Count_1$ in state $1C1$
        (there is necessarily one) and if this last process has a right neighbor whose state is $0C1$,
        then it moves to $0C1$ acknowledging the decrement,
        this acknowledgment being transmitted back to the controller which goes in state $\ell''$.
        Note that each counting process in state $1C1$ can choose nondeterministically
        whether it is or not the last one in state $1C1$ by taking the transition $pd1$
        (it transmits the decrement request) or $d1$ (it chooses it is the last one),
        but if it makes a bad choice, the simulation will be stuck.
    \item Finally, we have that the machine halts if and only if there is a system and an execution
        in its associated behavior where the controller ends in $\ell_f$.
\end{itemize}

We now provide an example on how the transmission of the information is performed in a system.
We consider a Minsky machine with two instructions
$\ell_0:c1=c1+1;\mathtt{goto}~\ell';$ and $\ell':c1=c1+1;\mathtt{goto}~\ell'';$.
\figref{fig:behav} presents a partial representation of a behavior simulating these instructions
with two counter processes of type $\Count_1$
(to make the figure readable, we put a dotted line between two transitions when they are synchronized).
First, the  controller simulates the instruction $\ell_0:c1=c1+1;\mathtt{goto}~\ell';$,
it takes the transition $\ell_0\_reqi1$ which says that it requests an increment,
this transitions is `synchronized' with two transitions $i1$ and $pi1$ of the first counting process
on its right and since the state of this process is initially $0C1$
then this first process takes the transition $i1$.
Afterwards, the controller waits for an acknowledgment of its request for an increment
which will arrive when it takes the transition $\ell_0\_acki1$.
This latter transition is synchronized with the two transitions $ai_1$ and $si_1$
of the first counting process.
In our case, the pair $(\ell_0\_acki1,ai_1)$ takes place and the controller ends in state $\ell'$
and the first counting process in $1C1$.
The controller can now simulate $\ell':c1=c1+1;\mathtt{goto}~\ell'';$.
It takes the transition $\ell'\_reqi1$, but this time the first counting process (which is in state $1C1$)
takes the transition $pi1$ signifying it transmits the request to its neighbor (if there is one).
Afterwards, if the first counting process has a neighbor, it  fires the transition $qi1$
(which synchronizes with the transitions $i1$ and $pi1$ of its right neighbor),
and its neighbor which is in state $0C1$ takes the transition $i1$,
they then both move with the pair $(ri1,ai1)$ and the second counting process arrives in state $1C1$
whereas the first counting process is able to acknowledge the fact that the increment has been performed
to the controller process with the synchronization $(\ell'\_acki1,si_1)$.
We end in a configuration where the controller is in state $\ell''$
and the two first counting processes of type $\Count_1$ are in state $1C1$.

\begin{figure}
\begin{minipage}{1\textwidth}
\begin{center}
  $$
  \begin{array}{ccl}
     & \rightarrow &X_1\\
    X_1 & \rightarrow & X_2\\
    X_1 & \rightarrow & \restrict{\slabs}{\algof{S}}(\rename{\alpha_1}{\algof{S}}( X_1 \pop{\algof{S}}Y_1)) \\
    X_2 & \rightarrow & Init\\
    X_2 & \rightarrow & \restrict{\slabs}{\algof{S}}(\rename{\alpha_2}{\algof{S}}( Y_2 \pop{\algof{S}}X_2))\\
    
    Init & \rightarrow & \bigoplus^{\algof{S}}_{\ell \in L_{inc1}}\big (\sgraph{(\ell\_reqi1,i1)}{\asrc_0}{\asrc_1}{\algof{S}} \pop{\algof{S}} \sgraph{(\ell\_reqi1,pi1)}{\asrc_0}{\asrc_1}{\algof{S}} \pop{\algof{S}}\\
    & & \sgraph{(\ell\_acki1,ai1)}{\asrc_0}{\asrc_1}{\algof{S}} \pop{\algof{S}} \sgraph{(\ell\_acki1,si1)}{\asrc_0}{\asrc_1}{\algof{S}}\big) \pop{\algof{S}}\\
    & & \bigoplus^{\algof{S}}_{\ell \in L_{dec1}} \big( \sgraph{(\ell\_zero1,z1)}{\asrc_0}{\asrc_1}{\algof{S}} \pop{\algof{S}} \\
    & & \sgraph{(\ell\_reqd1,d1)}{\asrc_0}{\asrc_1}{\algof{S}} \pop{\algof{S}} \sgraph{(\ell\_reqd1,pd1)}{\asrc_0}{\asrc_1}{\algof{S}} \pop{\algof{S}}\\
    & & \sgraph{(\ell\_ackd1,ad1)}{\asrc_0}{\asrc_1}{\algof{S}} \pop{\algof{S}} \sgraph{(\ell\_ackd1,sd1)}{\asrc_0}{\asrc_1}{\algof{S}}\big) \pop{\algof{S}}\\
    & & \bigoplus^{\algof{S}}_{\ell \in L_{inc2}}\big( \sgraph{(i2,\ell\_reqi2)}{\asrc_2}{\asrc_0}{\algof{S}} \pop{\algof{S}} \sgraph{(pi2,\ell\_reqi2)}{\asrc_2}{\asrc_0}{\algof{S}} \pop{\algof{S}}\\
    & & \sgraph{(ai2,\ell\_acki2)}{\asrc_2}{\asrc_0}{\algof{S}} \pop{\algof{S}} \sgraph{(si2,\ell\_acki2)}{\asrc_2}{\asrc_0}{\algof{S}}\big) \pop{\algof{S}}\\
    & & \bigoplus^{\algof{S}}_{\ell \in L_{dec2}}\big( \sgraph{(z2,\ell\_zero2)}{\asrc_2}{\asrc_0}{\algof{S}} \pop{\algof{S}} \\
    & & \sgraph{(d2,\ell\_reqd2)}{\asrc_2}{\asrc_0}{\algof{S}} \pop{\algof{S}} \sgraph{(pd2,\ell\_reqd2)}{\asrc_2}{\asrc_0}{\algof{S}} \pop{\algof{S}}\\
    & & \sgraph{(ad2,\ell\_ackd2)}{\asrc_2}{\asrc_0}{\algof{S}} \pop{\algof{S}} \sgraph{(sd2,\ell\_ackd2)}{\asrc_2}{\asrc_0}{\algof{S}}\big)\\
    
    Y_1 & \rightarrow & \sgraph{(qi1,i1)}{\asrc_1}{\asrc_3}{\algof{S}} \pop{\algof{S}} \sgraph{(qi1,pi1)}{\asrc_1}{\asrc_3}{\algof{S}} \pop{\algof{S}}\\
    & & \sgraph{(ri1,ai1)}{\asrc_1}{\asrc_3}{\algof{S}} \pop{\algof{S}} \sgraph{(ri1,si1)}{\asrc_1}{\asrc_3}{\algof{S}} \pop{\algof{S}}\\
    & & \sgraph{(qd1,d1)}{\asrc_1}{\asrc_3}{\algof{S}} \pop{\algof{S}} \sgraph{(qd1,pd1)}{\asrc_1}{\asrc_3}{\algof{S}} \pop{\algof{S}}\\
    & & \sgraph{(rd1,ad1)}{\asrc_1}{\asrc_3}{\algof{S}} \pop{\algof{S}} \sgraph{(rd1,sd1)}{\asrc_1}{\asrc_3}{\algof{S}}\pop{\algof{S}}\\
 & & \sgraph{(nextz1,z1)}{\asrc_1}{\asrc_3}{\algof{S}}\\
    
    Y_2 & \rightarrow & \sgraph{(i2,qi2)}{\asrc_4}{\asrc_2}{\algof{S}} \pop{\algof{S}} \sgraph{(pi2,qi2)}{\asrc_4}{\asrc_2}{\algof{S}} \pop{\algof{S}}\\
    & & \sgraph{(ai2,ri2)}{\asrc_4}{\asrc_2}{\algof{S}} \pop{\algof{S}} \sgraph{(si2,ri2)}{\asrc_4}{\asrc_2}{\algof{S}} \pop{\algof{S}}\\
    & & \sgraph{(d2,qd2)}{\asrc_4}{\asrc_2}{\algof{S}} \pop{\algof{S}} \sgraph{(pd2,qd2)}{\asrc_4}{\asrc_2}{\algof{S}} \pop{\algof{S}}\\
    & & \sgraph{(ad2,rd2)}{\asrc_4}{\asrc_2}{\algof{S}} \pop{\algof{S}} \sgraph{(sd2,rd2)}{\asrc_4}{\asrc_2}{\algof{S}}\pop{\algof{S}}\\
 & & \sgraph{(z2,nextz2)}{\asrc_4}{\asrc_2}{\algof{S}}\\
  \end{array}
  $$
\end{center}
\end{minipage}
\vspace{3em}

\begin{minipage}{\textwidth}
  \begin{center}
    with $\ptypeof{\asrc_0}=\Minsk$, $\ptypeof{\asrc_1}=\ptypeof{\asrc_3}=\Count_1$, $\ptypeof{\asrc_2}=\ptypeof{\asrc_4}=\Count_2$,\\$\alpha_1=\{\asrc1 \leftrightarrow \asrc3\}$, \\$\alpha_2=\{\asrc2 \leftrightarrow \asrc4\}$,\\$\slabs=\{\asrc1,\asrc2\}$,\\
    $L_{inci}=\{\ell \mid \ell:ci=ci+1;\mathtt{goto}~\ell'; \in Instrs\}$ for $i \in \{1,2\}$\\ $L_{deci}=\{\ell \mid \ell: \mathtt{if}~ci==0?~\mathtt{goto}~ \ell'; \mathtt{else}~ci=ci-1;\mathtt{goto}~\ell''; \in Instrs\}$ for $i \in \{1,2\}$
  \end{center}
  \end{minipage}
\caption{Grammar $\grammar_\Minsk$ to produce systems for the simulation of a Minsky machine}
\label{fig:grammmar-minsk}
\end{figure}

We now present more formal arguments to explain why the reduction holds.
We first describe the systems belonging to the language of the  grammar $\grammar_{\Minsk}$
(presented on \figref{fig:grammmar-minsk}) in the algebras of systems.
Given two naturals $m,n \geq 1$, we define the system
$\asys^{m,n}_{\mathit{\Minsk}}=(\verts,\edges,\vlab)$ where:
\begin{itemize}
    \item $\verts=\{v^2_m,\ldots,v^2_1,v^c,v^1_1,\ldots,v^1_n\}$;
    \item $\edges=$\\$\{(v^2_i,a,v^2_{i-1}) \mid i \in [2,m] \mbox{ and } a \in \{(i2,qi2),(pi2,qi2),(ai2,ri2),(si2,ri2),\linebreak[0](d2,qd2),\linebreak[0](pd2,qd2),(ad2,rd2),(sd2,rd2),,(z2,nextz2)\}\} \cup $\\
    $\{(v^1_i,a,v^1_{i+1}) \mid i \in [1,n-1] \mbox{ and } a \in \{(qi1,i1),(qi1,pi1),(ri1,ai1),(ri1,si1),\linebreak[0](qd1,d1),\linebreak[0](qd1,pd1),(rd1,ad1),(rd1,sd1),(nextz1,z1)\}\} \cup$\\
    $\{(v^2_1,a,v^c) \mid a \in \{(i2,\ell\_reqi2), (pi2,\ell\_reqi2), (ai2,\ell\_acki2),(si2,\ell\_acki2)\mid \ell \in L_{inc2} \}\} \cup$\\
    $\{(v^2_1,a,v^c) \mid a \in \{(z2,\ell\_zero2),(d2,\ell\_reqd2),(pd2,\ell\_reqd2),(ad2,\ell\_ackd2),(sd2,\ell\_ackd2)\mid \ell \in L_{dec2} \}\} \cup$
    $\{(v^c,a,v^1_1) \mid a \in \{(\ell\_reqi1,i1), (\ell\_reqi1,pi1), (\ell\_acki1,ai1),(\ell\_acki1,si1)\mid \ell \in L_{inc1} \}\} \cup$\\
    $\{(v^c,a,v^1_1) \mid a \in \{(\ell\_zero1,z1),(\ell\_reqd1,d1),(\ell\_reqd1,pd1),(\ell\_ackd1,ad1),(\ell\_ackd1,sd1)\mid \ell \in L_{dec1} \}\}$ where $L_{inc1}$, $L_{dec1}$, $L_{inc2}$ and $L_{dec2}$ are defined on \figref{fig:grammmar-minsk};
    \item $\vlab(v^2_i)=\Count_2$ for all $i \in [1,m]$, $\vlab(v^1_i)=\Count_1$ for all $i \in [1,n]$ and $\vlab(v^c)=\Minsk$.

\end{itemize}

We can easily see that
$\alangof{}{\algof{S}}{\grammar_{\Minsk}}=\set{\asys^{m,n}_{\Minsk} \mid m \geq 1 \mbox{ and } n \geq 1}$.

We have given on Figures \ref{fig:proc-count1} and \ref{fig:proc-minsk},
the way process types $\Count_1$, $\Count_2$ and $\Minsk$ are built.
We notice that in the process type $\Minsk$,
there is a single vertex labelled by the halting label of the Minsky machine $\ell_f$. We let  $\amark_f:\mathcal{Q}
  \rightarrow\nat$ (where $\mathcal{Q}$ is the set of states of the different processes) be the marking such that
$\amark_f(\ell_f)=1$ and $\amark_f(q)=0$ for all other places $q$.
Our reduction claims that the Minsky machine halts iff the answer to
$\paramreach{\grammar_{\Minsk}}{\{\ell_f\}}{\amark_f}$
(\resp $\paramcover{\grammar_{\Minsk}}{\{\ell_f\}}{\amark_f}$) is positive.

Let us explain why the reduction holds for the two verification problems.
First note that if the answer to $\paramreach{\grammar_{\Minsk}}{\{\ell_f\}}{\amark_f}$
is positive then so is the answer to $\paramcover{\grammar_{\Minsk}}{\{\ell_f\}}{\amark_f}$
by definition of the problem.
Now assume the answer to $\paramcover{\grammar_{\Minsk}}{\{\ell_f\}}{\amark_f}$ is positive,
it means that there exists a system $\asys^{m,n}_{\Minsk}=(\verts,\edges,\vlab)$
with $\verts=\{v^2_m,\ldots,v^2_1,v^c,v^1_1,\ldots,v^1_n\}$
where $v^c$ is the only vertex such that $\vlab(v^c)=\Minsk$
and a marking $\overline{\amark}\in\cover{\behof{\asys^{m,n}_{\Minsk}}}$ verifying
$\overline{\amark}(\ell_f,v^c)=\amark(\ell_f)=1$.
By definition, there exists hence a marking $\overline{\amark}' \in \reach{\behof{\asys^{m,n}_\Minsk}}$
such that $\overline{\amark} \leq \overline{\amark}'$.
But since since $\behof{\asys^{m,n}_\Minsk}$ is an automata-like PN,
we have $\overline{\amark}'(\ell_f,v^c)=1$ and since $v^c$ is the only vertex such that $\vlab(v^c)=\Minsk$,
we conclude that the answer to $\paramreach{\grammar_{\Minsk}}{\{\ell_f\}}{\amark_f}$ is positive.

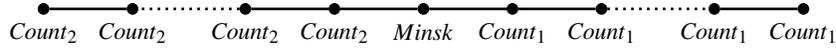
\begin{figure}[htbp]
\begin{center}
  \scalebox{1}{
    \begin{tikzpicture}[node distance=1.5cm]
      \tikzstyle{every state}=[inner sep=3pt,minimum size=20pt]
      \node(v0)[gnode,draw=black,label=-90:{$Minsk$}]{};
      \node(v1)[gnode,draw=black,label=-90:{$\Count_1$},right of=v0,xshift=-1em]{};
      \node(v2)[gnode,draw=black,label=-90:{$\Count_1$},right of=v1,xshift=-1em]{};
      \node(v3)[gnode,draw=black,label=-90:{$\Count_1$},right of=v2,xshift=0em]{};
      \node(v4)[gnode,draw=black,label=-90:{$\Count_1$},right of=v3,xshift=-1em]{};
      \node(v1b)[gnode,draw=black,label=-90:{$\Count_2$},left of=v0,xshift=1em]{};
      \node(v2b)[gnode,draw=black,label=-90:{$\Count_2$},left of=v1b,xshift=1em]{};
      \node(v3b)[gnode,draw=black,label=-90:{$\Count_2$},left of=v2b,xshift=0em]{};
      \node(v4b)[gnode,draw=black,label=-90:{$\Count_2$},left of=v3b,xshift=1em]{};

      \path (v0) edge [-,thick,line width=1pt] (v1);
      \path (v1) edge [-,thick,line width=1pt] (v2);
      \path (v2) edge [-,thick,line width=1pt,dotted] (v3);
      \path (v3) edge [-,thick,line width=1pt] (v4);
      \path (v0) edge [-,thick,line width=1pt] (v1b);
      \path (v1b) edge [-,thick,line width=1pt] (v2b);
      \path (v2b) edge [-,thick,line width=1pt,dotted] (v3b);
      \path (v3b) edge [-,thick,line width=1pt] (v4b);
\end{tikzpicture}
  }
\end{center}
\vspace*{-\baselineskip}
\caption{Shape of a sytem for the simulation of a Minsky machine}
\label{fig:undec4}
\end{figure}
\begin{figure}[htbp]
\begin{center}
\scalebox{0.7}{
\begin{tikzpicture}[node distance=1.5cm]
  \tikzstyle{every state}=[inner sep=3pt,minimum size=20pt]

\node(0C1)[petri-p,draw=black,label=00:{$\mathit{0C1}$}]{};
\node(Z1)[petri-t2,draw=black,fill=black,above of=0C1,label=90:{$\mathit{z1}$}]{};
\node(I1)[petri-t,draw=black,fill=black,below left of=0C1,label=180:{$\mathit{i1}$}]{};
\node(D1)[petri-t,draw=black,fill=black,below right of=0C1,label=0:{$\mathit{ad1}$},yshift=2em]{};
\node(0to1)[petri-p,draw=black,below of=I1,yshift=2em]{};
\node(1to0)[petri-p,draw=black,below of=D1,yshift=2.5em]{};
\node(AI1)[petri-t,draw=black,fill=black,below of=0to1,label=180:{$\mathit{ai1}$},yshift=2em]{};
\node(NZ1)[petri-t,draw=black,fill=black,below of=1to0,label=0:{$\mathit{nextz1}$},yshift=2.5em]{};
\node(1to0b)[petri-p,draw=black,below of=NZ1,yshift=2.5em]{};
\node(AD1)[petri-t,draw=black,fill=black,below of=1to0b,label=0:{$\mathit{d1}$},yshift=2.5em]{};
\node(1C1)[petri-p,draw=black,below right of=AI1,label=90:{$\mathit{1C1}$}]{};

\node(PI1)[petri-t,draw=black,fill=black,below left of=1C1,label=180:{$\mathit{pi1}$}]{};
\node(pi1toqi1)[petri-p,draw=black,below of=PI1,yshift=2em]{};
\node(QI1)[petri-t,draw=black,fill=black,below of=pi1toqi1,label=-110:{$\mathit{qi1}$},yshift=2em]{};

\node(qi1tori1)[petri-p,draw=black,below right of=QI1,xshift=0em]{};

\node(RI1)[petri-t,draw=black,fill=black,above right of=qi1tori1,label=-70:{$\mathit{ri1}$}]{};
\node(ri1tosi1)[petri-p,draw=black,above of=RI1,yshift=-2em]{};
\node(SI1)[petri-t,draw=black,fill=black,above of=ri1tosi1,label=0:{$\mathit{si1}$},yshift=-2em]{};

\path (0C1) edge [->,thick,line width=1pt,out=30,in=0](Z1);
\path (Z1) edge [->,thick,line width=1pt,out=180,in=150](0C1);
\path (0C1) edge [->,thick,line width=1pt,out=-150,in=90](I1);
\path (D1) edge [->,thick,line width=1pt,out=90,in=-30](0C1);
\path (I1) edge [->,thick,line width=1pt](0to1);
\path (1to0) edge [->,thick,line width=1pt](D1);
\path (1to0b) edge [->,thick,line width=1pt](NZ1);
\path (NZ1) edge [->,thick,line width=1pt](1to0);
\path (0to1) edge [->,thick,line width=1pt](AI1);
\path (AD1) edge [->,thick,line width=1pt](1to0b);
\path (AI1) edge [->,thick,line width=1pt,out=-90,in=150](1C1);
\path (1C1) edge [->,thick,line width=1pt,out=30,in=-90](AD1);

\path (1C1) edge [->,thick,line width=1pt](PI1);
\path (PI1) edge [->,thick,line width=1pt](pi1toqi1);
\path (pi1toqi1) edge [->,thick,line width=1pt](QI1);
\path (QI1) edge [->,thick,line width=1pt](qi1tori1);

\path (qi1tori1) edge [->,thick,line width=1pt](RI1);
\path (RI1) edge [->,thick,line width=1pt](ri1tosi1);

\path (ri1tosi1) edge [->,thick,line width=1pt](SI1);
\path (SI1) edge [->,thick,line width=1pt](1C1);

\node[petri-tok] at (0C1) {};

\node(0C1bis)[petri-p,draw=black,label=00:{$\mathit{0C1}$},right of=0C1,xshift=10em]{};
\node(Z1bis)[petri-t2,draw=black,fill=black,above of=0C1bis,label=90:{$\mathit{z1}$}]{};
\node(I1bis)[petri-t,draw=black,fill=black,below left of=0C1bis,label=180:{$\mathit{i1}$}]{};
\node(D1bis)[petri-t,draw=black,fill=black,below right of=0C1bis,label=0:{$\mathit{ad1}$},yshift=2em]{};
\node(0to1bis)[petri-p,draw=black,below of=I1bis,yshift=2em]{};
\node(1to0bis)[petri-p,draw=black,below of=D1bis,yshift=2.5em]{};
\node(AI1bis)[petri-t,draw=black,fill=black,below of=0to1bis,label=180:{$\mathit{ai1}$},yshift=2em]{};
\node(NZ1bis)[petri-t,draw=black,fill=black,below of=1to0bis,label=0:{$\mathit{nextz1}$},yshift=2.5em]{};
\node(1to0bbis)[petri-p,draw=black,below of=NZ1bis,yshift=2.5em]{};
\node(AD1bis)[petri-t,draw=black,fill=black,below of=1to0bbis,label=0:{$\mathit{d1}$},yshift=2.5em]{};
\node(1C1bis)[petri-p,draw=black,below right of=AI1bis,label=90:{$\mathit{1C1}$}]{};

\node(PI1bis)[petri-t,draw=black,fill=black,below left of=1C1bis,label=180:{$\mathit{pi1}$}]{};
\node(pi1toqi1bis)[petri-p,draw=black,below of=PI1bis,yshift=2em]{};
\node(QI1bis)[petri-t,draw=black,fill=black,below of=pi1toqi1bis,label=180:{$\mathit{qi1}$},yshift=2em]{};
\node(qi1tori1bis)[petri-p,draw=black,below right of=QI1bis]{};
\node(RI1bis)[petri-t,draw=black,fill=black,above right of=qi1tori1bis,label=0:{$\mathit{ri1}$}]{};
\node(ri1tosi1bis)[petri-p,draw=black,above of=RI1bis,yshift=-2em]{};
\node(SI1bis)[petri-t,draw=black,fill=black,above of=ri1tosi1bis,label=0:{$\mathit{si1}$},yshift=-2em]{};

\path (0C1bis) edge [->,thick,line width=1pt,out=30,in=0](Z1bis);
\path (Z1bis) edge [->,thick,line width=1pt,out=180,in=150](0C1bis);
\path (0C1bis) edge [->,thick,line width=1pt,out=-150,in=90](I1bis);
\path (D1bis) edge [->,thick,line width=1pt,out=90,in=-30](0C1bis);
\path (I1bis) edge [->,thick,line width=1pt](0to1bis);
\path (1to0bbis) edge [->,thick,line width=1pt](NZ1bis);
\path (NZ1bis) edge [->,thick,line width=1pt](1to0bis);
\path (1to0bis) edge [->,thick,line width=1pt](D1bis);
\path (0to1bis) edge [->,thick,line width=1pt](AI1bis);
\path (AD1bis) edge [->,thick,line width=1pt](1to0bbis);
\path (AI1bis) edge [->,thick,line width=1pt,out=-90,in=150](1C1bis);
\path (1C1bis) edge [->,thick,line width=1pt,out=30,in=-90](AD1bis);

\path (1C1bis) edge [->,thick,line width=1pt](PI1bis);
\path (PI1bis) edge [->,thick,line width=1pt](pi1toqi1bis);
\path (pi1toqi1bis) edge [->,thick,line width=1pt](QI1bis);
\path (QI1bis) edge [->,thick,line width=1pt](qi1tori1bis);
\path (qi1tori1bis) edge [->,thick,line width=1pt](RI1bis);
\path (RI1bis) edge [->,thick,line width=1pt](ri1tosi1bis);
\path (ri1tosi1bis) edge [->,thick,line width=1pt](SI1bis);
\path (SI1bis) edge [->,thick,line width=1pt](1C1bis);

\node[petri-tok] at (0C1bis) {};

\node(L)[petri-p,draw=black,label=180:{$\ell$},left of=1C1,xshift=-25em]{};
\node(ReqI1)[petri-t2,draw=black,fill=black,right of=L,label=90:{$\ell\_\mathit{reqi1}$},xshift=-2em]{};

\node(aux1)[petri-p,draw=black,right of=ReqI1,xshift=-2em]{};
\node(AckI1)[petri-t2,draw=black,fill=black,right of=aux1,label=90:{$\ell\_\mathit{acki1}$},xshift=-2em]{};

\node(L2)[petri-p,draw=black,right of=AckI1,label=90:{$\ell'$},xshift=-1.5em]{};
\node(ReqI1bis)[petri-t2,draw=black,fill=black,right of=L2,label=90:{$\ell'\_\mathit{reqi1}$},xshift=-2em]{};

\node(aux1bis)[petri-p,draw=black,right of=ReqI1bis,xshift=-2em]{};
\node(AckI1bis)[petri-t2,draw=black,fill=black,right of=aux1bis,label=90:{$\ell'\_\mathit{acki1}$},xshift=-2em]{};

\node(L3)[petri-p,draw=black,right of=AckI1bis,label=0:{$\ell''$},xshift=-2em]{};

\path (L) edge [->,thick,line width=1pt](ReqI1);
\path (ReqI1) edge [->,thick,line width=1pt](aux1);
 \path (aux1) edge [->,thick,line width=1pt](AckI1);
  \path (AckI1) edge [->,thick,line width=1pt](L2);
  \path (L2) edge [->,thick,line width=1pt](ReqI1bis);
  \path (ReqI1bis) edge [->,thick,line width=1pt](aux1bis);
  
  \path (aux1bis) edge [->,thick,line width=1pt](AckI1bis);
  \path (AckI1bis) edge [->,thick,line width=1pt](L3);

\node[petri-tok] at (L) {};

\path (ReqI1) edge [loosely dotted,line width=1pt, bend left] (I1);
\path (AckI1) edge [loosely dotted,line width=1pt, bend left] (AI1);
\path (ReqI1) edge [loosely dotted,line width=1pt, bend right] (PI1);
\path (AckI1) edge [loosely dotted,line width=1pt, bend right] (SI1);
\path (ReqI1bis) edge [loosely dotted,line width=1pt, bend left] (I1);
\path (AckI1bis) edge [loosely dotted,line width=1pt, bend left] (AI1);
\path (ReqI1bis) edge [loosely dotted,line width=1pt, bend right] (PI1);
\path (AckI1bis) edge [loosely dotted,line width=1pt, bend right] (SI1);

\path (QI1) edge [loosely dotted,line width=1pt] (I1bis);
\path (RI1) edge [loosely dotted,line width=1pt] (AI1bis);
\path (QI1) edge [loosely dotted,line width=1pt] (PI1bis);
\path (RI1) edge [loosely dotted,line width=1pt] (SI1bis);

\end{tikzpicture}
}
\end{center}
\vspace*{-\baselineskip}
\caption{Partial representation of a behavior for the simulation of a Minsky machine}
\label{fig:behav}
\end{figure}
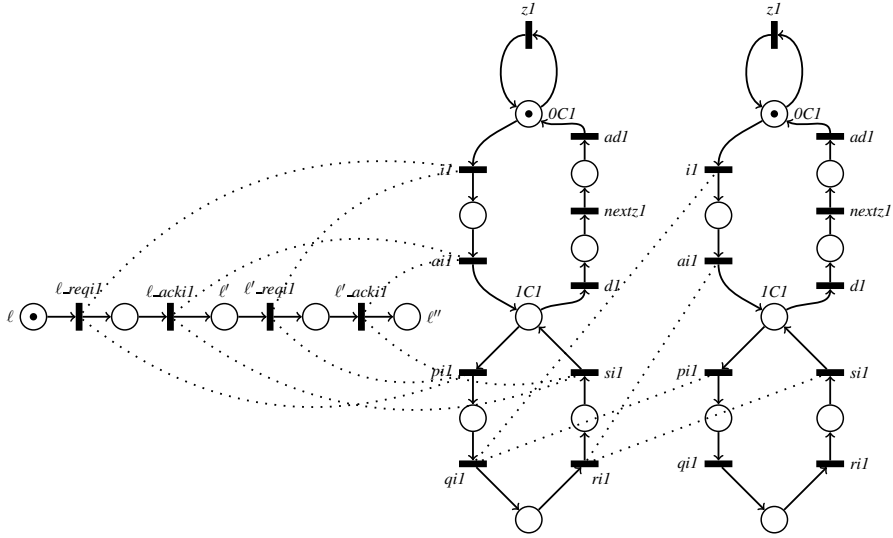

Assume now that the Minsky machine halts.
Let $m$ and $n$ be the maximum counter value taken by the counter $c2$ (\resp the counter $c1$)
during the execution of the machine starting at $\ell_0$ with $c1$ and $c2$ set at $0$ and ending in $\ell_f$.
We can simulate this execution, as we have explained before, in the behavior $\behof{\asys^{m+1,n+1}_\Minsk}$
to reach a marking with a token in the place $(\ell_f,v^c)$.
As a matter of fact, the answer to $\paramcover{\grammar_{\Minsk}}{\{\ell_f\}}{\amark_f}$ is positive.

On the other side if the answer to $\paramcover{\grammar_\Minsk}{\{\ell_f\}}{\amark_f}$ is positive,
there exists a system $\asys^{m,n}_\Minsk \in \alangof{}{\algof{S}}{\grammar_\Minsk}$
and a marking $\overline{\amark} \in \reach{\behof{\asys^{m,n}_\Minsk}}$
such that $\overline{\amark}(\ell_f,v^c)=1$.
The only way to reach such a marking from the initial marking of $\behof{\asys^{m,n}_\Minsk}$
is to simulate faithfully at each step an instruction of the Minsky machine
(if the behavior performs a wrong non-deterministic choice or if the vertices encoding the counters are not enough,
the simulation get stuck and there is no way to put a token in a place of the process type $\Minsk$ corresponding to a label of the machine).
We hence can rebuild an execution of the Minsky machine which ends in the label $\ell_f$. \qed

\end{proofE}
\begin{proofSketch}
  With 3 process types one can write a grammar whose language consists of
  nets that simulate executions of two-counter Minski machines,
  and the halting problem thus reduces to coverability.
  \qed
\end{proofSketch}

\section{Counting Abstraction}

We define the \emph{counting abstraction} of a set of behaviors as a
set of PNs having finitely many underlying nets with possibly
infinitely many initial markings each. The basic idea is to fold (i)
the copies of the same place from some process type into a single
place and (ii) the transitions having the same sets of predecessors
and successors into a single transition. We show that the counting
abstraction of the set of behaviors of a parameterized system
described by an \hrtext{} grammar can be computed by evaluating the
same grammar in a finite \hrtext{} algebra. Moreover, the infinite set
of initial markings of a folded PN can be finitely described by
another PN derived from the initial grammar describing the system.

\ifLongVersion\else
\begin{textAtEnd}[category=quotients]
\fi

We formalize our counting abstraction as quotienting of PNs with
respect to an equivalence relation on places, that has finitely many
classes. Let $\amarkednet = (\anet,\amark_0)$ be a PN with underlying
net $\anet = (\places,\trans,\weight)$ and let $\sim \subseteq \places
\times \places$ be an equivalence relation on the places of
$\anet$. We denote by $[q]_\sim$ the $\sim$-equivalence class of the
place $q \in \places$. The \emph{place-quotient} of $\anet$
w.r.t. $\sim$ is the net $(\places_{/\sim}, \trans, \weight_{/\sim})$,
where:
\begin{align*}
\places_{/\sim} \isdef \set{[q]_\sim \mid q \in \places} \hspace*{8mm}
\weight_{/\sim}([q]_\sim,t) \isdef \sum_{r \in [q]_\sim} \weight(r,t) \hspace*{8mm}
\weight_{\sim}(t,[q]_\sim) \isdef \sum_{r \in [q]_\sim} \weight(t,r) 
\end{align*}
for all $q \in \places$ and $t \in \trans$. The equivalence relation
$\sim$ on places induces the following equivalence relation $\approx
\subseteq \trans \times \trans$ on transitions:
\begin{align}\label{eq:rel-trans}
  t \approx t' \iffdef & \left\{\begin{array}{l}
    \weight_{/\sim}([q]_\sim,t) = \weight_{/\sim}([q]_\sim,t') \\
    \weight_{/\sim}(t,[q]_\sim) = \weight_{/\sim}(t',[q]_\sim) 
    \end{array}\right.
    \text{, for all } q \in \places 
\end{align}
We denote by $[t]_\approx$ the $\approx$-equivalence class of the
transition $t \in \trans$.  The \emph{quotient} of the net $\anet$
w.r.t. $\sim$ is the net $\anet_{/\sim} \isdef
({\places}_{/\sim},{\trans}_{/\approx},{\weight}_{/\sim})$, where
${\weight}_{/\sim}([t]_\approx) \isdef {\weight}_{/\sim}(t)$, for all
$t \in \trans$.  Note that ${\weight}_{/\sim}([t]_\approx)$ is
well-defined because all transitions in the equivalence class have the
same incoming and outgoing weighted edges. The quotient of the PN
$\amarkednet$ is the PN ${\amarkednet}_{~/\sim} \isdef
({\anet}_{/\sim},{\amark_0}_{\sim}) $, where
${\amark}_{/\sim}([q]_\sim) \isdef \sum_{r\in[q]_\sim} \amark(r)$, for
each $\amark:\places\rightarrow\nat$ and $q \in \places$. Quotienting
of markings is lifted to sets as usual ${\markset}_{/\sim} \isdef
\set{{\amark}_{/\sim} \mid \amark \in \markset}$.  The following lemma
relates the reachable (\resp coverable) markings in a PN with that of
its quotient:

\begin{lemma}\label{lemma:quotient-soundness}
  For each PN $\amarkednet$ and equivalence relation
  $\sim \subseteq \placeof{\amarkednet} \times \placeof{\amarkednet}$,
  ${\reach{\amarkednet}}_{/\sim} \subseteq \reach{{\amarkednet~}_{/\sim}}$
  and ${\cover{\amarkednet}}_{/\sim} \subseteq \cover{{\amarkednet~}_{/\sim}}$.
\end{lemma}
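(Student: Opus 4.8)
The plan is to prove the reachability inclusion ${\reach{\amarkednet}}_{/\sim} \subseteq \reach{\amarkednet_{/\sim}}$ by showing that every firing step of $\amarkednet$ is simulated, step for step, by a firing step of the quotient $\amarkednet_{/\sim}$, and then to obtain the coverability inclusion from it together with the observation that marking-quotienting is monotone.

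The crux is the following one-step simulation claim: for all markings $\amark, \amark'$ of $\amarkednet$ and every $t \in \transof{\amarkednet}$, if $\amark \fire{t} \amark'$ then $\amark_{/\sim} \fire{[t]_\approx} \amark'_{/\sim}$ in $\amarkednet_{/\sim}$. To see this, fix a place $q$ and sum over the class $[q]_\sim$. Summing the enabledness inequalities $\amark(r) \geq \weightof{\amarkednet}(r,t)$ for $r \in [q]_\sim$ gives $\amark_{/\sim}([q]_\sim) = \sum_{r \in [q]_\sim}\amark(r) \geq \sum_{r \in [q]_\sim}\weightof{\amarkednet}(r,t) = \weight_{/\sim}([q]_\sim,t)$, which is exactly the enabledness condition for the class $[t]_\approx$ in $\amark_{/\sim}$ (recall that $\weight_{/\sim}([t]_\approx)$ is well-defined and equal to $\weight_{/\sim}(t)$ because, by \eqnref{eq:rel-trans}, all members of $[t]_\approx$ have the same folded pre- and post-weights). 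Similarly, summing the update equalities $\amark'(r) = \amark(r) - \weightof{\amarkednet}(r,t) + \weightof{\amarkednet}(t,r)$ for $r \in [q]_\sim$ yields $\amark'_{/\sim}([q]_\sim) = \amark_{/\sim}([q]_\sim) - \weight_{/\sim}([q]_\sim,t) + \weight_{/\sim}(t,[q]_\sim)$, which is precisely the effect of firing $[t]_\approx$ from $\amark_{/\sim}$.

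An easy induction on the length of a firing sequence then lifts this to: $\amark \fire{\vec{t}} \amark'$ implies $\amark_{/\sim} \fire{\vec{t}_{/\approx}} \amark'_{/\sim}$, where $\vec{t}_{/\approx}$ is the componentwise quotient of $\vec{t}$. Since ${\initmarkof{\amarkednet}}_{/\sim} = \initmarkof{\amarkednet_{/\sim}}$ by definition of the quotient PN, any witness $\initmarkof{\amarkednet} \fire{\vec{t}} \amark$ for $\amark \in \reach{\amarkednet}$ gives $\amark_{/\sim} \in \reach{\amarkednet_{/\sim}}$, proving the first inclusion. For the second, observe that quotienting is order-preserving on markings: $\amark \leq \amark''$ pointwise implies $\amark_{/\sim}([q]_\sim) = \sum_{r \in [q]_\sim}\amark(r) \leq \sum_{r \in [q]_\sim}\amark''(r) = \amark''_{/\sim}([q]_\sim)$, i.e. $\amark_{/\sim} \leq \amark''_{/\sim}$. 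Hence, given $\amark \in \cover{\amarkednet}$ with $\amark \leq \amark''$ for some $\amark'' \in \reach{\amarkednet}$, we get $\amark''_{/\sim} \in \reach{\amarkednet_{/\sim}}$ from the first part and $\amark_{/\sim} \leq \amark''_{/\sim}$ from monotonicity, so $\amark_{/\sim} \in \cover{\amarkednet_{/\sim}}$.

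I expect no serious obstacle: the argument is a straightforward simulation. The one point deserving attention is the bookkeeping in the one-step claim, namely checking that firing the class $[t]_\approx$ in $\amarkednet_{/\sim}$ genuinely uses the same folded weights as firing $t$ --- this is exactly what the definition of $\approx$ in \eqnref{eq:rel-trans} guarantees and what makes $\amarkednet_{/\sim}$ well-defined in the first place. Everything else reduces to finite summations and an induction on sequence length. (The reverse inclusions do not hold in general, since folding can create spurious firings; this is why the counting abstraction is sound but not complete.)
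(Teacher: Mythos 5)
Your proof is correct and follows essentially the same route as the paper: a one-step simulation argument obtained by summing the enabledness inequalities and update equations over each $\sim$-class, lifted to firing sequences by induction, with the coverability inclusion following from monotonicity of marking-quotienting. The only cosmetic difference is that the paper factors the one-step claim through the intermediate place-quotient $(\places_{/\sim},\trans,\weight_{/\sim})$ before merging transitions into $[t]_\approx$, whereas you do both folds in a single step.
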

\begin{proof}
  Let $\amarkednet=(\anet,\amark_0)$,
  where $\anet=(\places,\trans,\weight)$.
  The proof for ${\reach{\amarkednet}}_{/\sim} \subseteq \reach{{\amarkednet}_{~/\sim}}$
  relies on the two facts below and
  ${\cover{\amarkednet}}_{/\sim} \subseteq \cover{{\amarkednet}_{~/\sim}}$
  is an immediate consequence. 

  \begin{fact}
    For all markings $\amark,\amark' : \places\rightarrow\nat$ and each transition $t \in \trans$,
    $\amark \fire{t} \amark'$ in $(\places,\trans,\weight)$ only if
    ${\amark}_{/\sim} \fire{t} {\amark'}_{/\sim}$
    in $({\places}_{/\sim}, \trans, {\weight}_{/\sim})$.
  \end{fact}
  \begin{proof}
    If $t$ is enabled in $\amark$ it means that
    $\amark(q) \geq \weight(q,t)$ for every $q \in \places$.
    Applying this inequality pointwise on
    ${\amark}_{/\sim}([q]_\sim)
      = \sum_{q' \in [q]_\sim} \amark(q')
      \geq \sum_{q' \in [q]_\sim} \weight(q',t)
      = {\weight}_{/\sim}([q]_\sim,t)$
    gives that $t$ is enabled in ${\amark}_{/\sim}$.
    By a similar manipulation,
    ${\amark}_{/\sim}([q]_\sim)
      - ({\weight}_{/\sim})([q]_\sim,t)
      + ({\weight}_{/\sim})(t,[q]_\sim)
    = \sum_{q' \in [q]_\sim} \amark(q')
      - \sum_{q' \in [q]_\sim} \weight(q',t)
      + \sum_{q' \in [q]_\sim} \weight(t,q')$
    can be rearranged into
    $\sum_{q' \in [q]_\sim} (\amark(q') - \weight(q',t) + \weight(t,q'))
      = \sum_{q' \in [q]_\sim} \amark'(q')
      = {\amark'}_{/\sim}([q]_\sim)$,
    which shows that the result of firing $t$ in ${\amark}_{/\sim}$
    is indeed ${\amark'}_{/\sim}$.
    \qed
  \end{proof}

  \begin{fact}
    For all markings $\amark,\amark' : \places\rightarrow\nat$ and each transition $t \in \trans$,
    ${\amark}_{/\sim} \fire{t} {\amark'}_{/\sim}$
    in $({\places}_{/\sim},\trans,{\weight}_{/\sim})$
    if and only if ${\amark}_{/\sim} \fire{[t]_\approx} {\amark'}_{/\sim}$
    in $({\places}_{/\sim},{\trans}_{/\approx},{\weight}_{/\sim})$.
  \end{fact}
  \begin{proof}
    $\weight(q,t) = {\weight}_{/\approx}(q,[t]_\approx)$
    for every $q \in \places$ by definition,
    therefore $t$ is enabled in ${\amark}_{/\sim}$ if and only if $[t]_\approx$
    is enabled in ${\amark}_{/\sim}$.
    Furthermore firing $t$ or $[t]_\approx$ in ${\amark}_{/\sim}$
    results in the same marking since
    ${\amark}_{/\sim}(q) - \weight(q,t) + \weight(t,q)
      = {\amark}_{/\sim}(q)
      - {\weight}_{/\approx}(q,[t]_\approx)
      + {\weight}_{/\approx}([t]_\approx,q)$.
    \qed
  \end{proof}
\end{proof}
\begin{proofSketch}
  Every transition $\amark \fire{t} \amark'$ in $(\places,\trans,\weight)$
  can be successively traced to
  some transition ${\amark}_{/\sim} \fire{t} {\amark'}_{/\sim}$
  in $({\places}_{/\sim}, \trans, {\weight}_{/\sim})$,
  then some ${\amark}_{/\sim} \fire{[t]_\approx} {\amark'}_{/\sim}$
  in $({\places}_{/\sim},{\trans}_{/\approx},{\weight}_{/\sim})$.
  Through this every firing sequence can be lifted from $(\places,\trans,\weight)$
  to $({\places}_{/\sim},{\trans}_{/\approx},{\weight}_{/\sim})$,
  thus preserving all reachability properties.
  \qed
\end{proofSketch}
\ifLongVersion\else
\end{textAtEnd}
\fi

\subsection{Folding}

We define a folding function on the domain $\universeOf{B}$ of system
behaviors. Let $\open{\asys} = (\asys,\sources)$ be a system and
$(\amarkednet,\overline{\sources}) = \behof{\open{\asys}}$ be its
behavior (\defref{def:behavior}). \ifLongVersion\else The folding is
defined as quotienting a behavior $\amarkednet$ via an equivalence
relation $\equiv$ on the places of $\amarkednet$. Note that quotienting is a
standard operation, meaning that equivalent places and transitions
having the same sets of predecessor and successor equivalence classes
$[q]_{\equiv}$, for $q \in \placeof{\amarkednet}$, are joined
together\footnote{We give the formal definition of the quotienting of
  a PN in \appref{app:quotients}.}, the result being denoted as
$\amarkednet_{~/\equiv}$. \fi

We recall that the places of $\amarkednet$ are pairs $(q,v)$, where $q
\in \placeof{\ptypes}$, $v \in \vertof{\asys}$ and the sources of the
behavior are labeled by the function $\overline{\sources}$. A function
$\eta : \sourcelabels \rightarrow \vertof{\asys}$, having $\dom{\eta}
\supseteq \dom{\sources}$, defines the following equivalence relation
$\foldrel{\eta} \subseteq \placeof{\amarkednet} \times
\placeof{\amarkednet}$:
\begin{align}\label{eq:foldrel}
  (q_1,v_1) \foldrel{\eta} (q_2,v_2) \iff
  q_1 = q_2 \text{ and } \set{v_1,v_2} \cap \img{\eta} \neq \emptyset \Rightarrow v_1 = v_2
\end{align}
\ie two places of $\amarkednet$ corresponding to the same place of a
process type (within two distinct instances thereof) are considered
equivalent, except for the sources with labels $\eta$, that are
equivalent only with themselves. The equivalence class of a place
$(q,v) \in \placeof{\amarkednet}$ is denoted
$[(q,v)]_{\foldrel{\eta}}$.
Because we have assumed that the set
of places corresponding to different process types are disjoint,
$(q_1,v_1) \foldrel{\sources} (q_2,v_2)$ implies that
$\vlabof{\asys}(v_1) = \vlabof{\asys}(v_2)$, \ie the two vertices are
instances of the same process type.

The \emph{folding} of $(\amarkednet,\overline{\sources})$ is defined
as $\foldof{\amarkednet,\overline{\sources}} \isdef
({\amarkednet}_{~/\foldrel{\sources}},{\overline{\sources}}_{/\foldrel{\sources}})$,
where, for each mapping $\eta : \sourcelabels \rightarrow
\vertof{\asys}$ having $\dom{\eta} \supseteq \dom{\sources}$, we
define $\overline{\sources}_{/\foldrel{\eta}}(\asrc,
[q]_{\foldrel{\eta}}) \isdef
[\overline{\sources}(\asrc,q)]_{\foldrel{\eta}}$ if $\asrc \in
\dom{\eta}$, else undefined, for each $\asrc \in \sourcelabels$. We
refer to \figref{fig:folding} for examples.


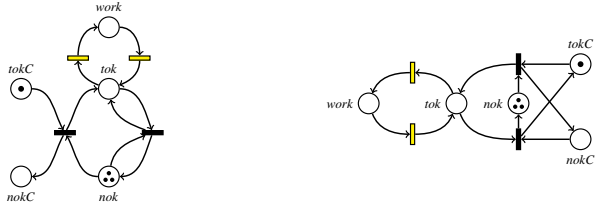
\begin{figure}[t!]
  \vspace*{-\baselineskip}
  \begin{center}
    
    \begin{minipage}{.3\textwidth}
      \vspace*{0\baselineskip}
      \begin{center}
      \scalebox{0.55}{
        \begin{tikzpicture}[node distance=1.5cm]
          \tikzstyle{every state}=[inner sep=3pt,minimum size=20pt]
          
      \node(p0)[petri-p,draw=black,label=90:{$\mathit{tokC}$}]{};
      \node(t1)[petri-t,draw=black,fill=black,below right of=p0,xshift=0em,label=0:{}]{}
      edge [<-,thick,line width=1pt,in=0,out=120](p0);
      \node(p1)[petri-p,draw=black,below left of=t1,xshift=0em,label=-90:{$\mathit{nokC}$}]{};
      
      \node[petri-tok] at (p0) {};
      
      \path (t1) edge [->,thick,line width=1pt,out=-120,in=0](p1);
      
      \node(p0b)[petri-p,draw=black,above right of=t1,xshift=0 em,label=90:{$\mathit{tok}$}]{};
      
      \path (t1) edge [->,thick,line width=1pt,out=60,in=180](p0b);
      
      \node(p1b)[petri-p,draw=black,below right of=t1,xshift=0em,label=-90:{$\mathit{nok}$}]{};
      
      \node[petri-small-tok,yshift=0.8mm] at (p1b) {};
      \node[petri-small-tok,yshift=-0.8mm,xshift=-0.8mm] at (p1b) {};
      \node[petri-small-tok,yshift=-0.8mm,xshift=0.8mm] at (p1b) {};
          
      \path (p1b) edge [->,thick,line width=1pt,in=-60,out=180](t1);
      \node(t1b)[petri-t,draw=black,fill=black,below right of=p0b,xshift=0em,label=0:{}]{}
      edge [->,thick,line width=1pt,out=-110,in=0](p1b)
      edge [<-,thick,line width=1pt,out=-160,in=80](p1b)
      edge [<-,thick,line width=1pt,in=0,out=110](p0b)
      edge [->,thick,line width=1pt,in=-80,out=160](p0b);
      
      \node(t2b)[petri-t,draw=black,fill=yellow,above left of=p0b,label=180:{},xshift=1em,yshift=-1em]{};
      \node(t3b)[petri-t,draw=black,fill=yellow,above right of=p0b,label=0:{},xshift=-1em,yshift=-1em]{};
      \node(p2b)[petri-p,draw=black,above right of=t2b,label=90:{$\mathit{work}$},xshift=-1em,yshift=-1em]{};
      \path (p0b) edge [->,thick,line width=1pt,in=-90,out=160](t2b);
      \path (p0b) edge [<-,thick,line width=1pt,in=-90,out=20](t3b);
      \path (p2b) edge [<-,thick,line width=1pt,in=90,out=180](t2b);
      \path (p2b) edge [->,thick,line width=1pt,in=90,out=0](t3b);
     
      \end{tikzpicture}}
      \end{center}
      
    \end{minipage}
    \begin{minipage}{.5\textwidth}
      \begin{center}
      \scalebox{0.55}{
        \begin{tikzpicture}[node distance=1.5cm]
          \tikzstyle{every state}=[inner sep=3pt,minimum size=20pt]
      
      \node(p0)[petri-p,draw=black,label=90:{$\mathit{tokC}$}]{};
      \node(p1)[petri-p,draw=black,below  of=p0,yshift=-1em,label=-90:{$\mathit{nokC}$}]{};
      
      \node[petri-tok] at (p0) {};
      
      \node(p1b)[petri-p,draw=black,left of=p0,yshift=-3em,label=180:{$\mathit{nok}$}]{};

      \node[petri-small-tok,yshift=0.8mm] at (p1b) {};
      \node[petri-small-tok,yshift=-0.8mm,xshift=-0.8mm] at (p1b) {};
      \node[petri-small-tok,yshift=-0.8mm,xshift=0.8mm] at (p1b) {};
      
      \node(t0b)[petri-t2,draw=black,fill=black,left of=p0,label=180:{}]{};
      \node(t1b)[petri-t2,draw=black,fill=black,left of=p1,label=180:{}]{};
      \node(p0b)[petri-p,draw=black,left of=p1b,label=180:{$\mathit{tok}$}]{};
      
      \node(t2b)[petri-t2,draw=black,fill=yellow,above left of=p0b,yshift=-1em,label=180:{}]{};
      \node(t3b)[petri-t2,draw=black,fill=yellow,below left of=p0b,label=180:{},yshift=1em]{};

      \node(p2b)[petri-p,draw=black,below left of=t2b,label=180:{$\mathit{work}$},yshift=1em]{};

      \path (p1b) edge [->,thick,line width=1pt] (t0b);
      \path (t1b) edge [->,thick,line width=1pt] (p1b);
      \path (p0) edge [->,thick,line width=1pt] (t0b);
      \path (t0b) edge [->,thick,line width=1pt] (p1);
      \path (p1) edge [->,thick,line width=1pt] (t1b);
      \path (t1b) edge [->,thick,line width=1pt] (p0);
      \path (t0b) edge [->,thick,line width=1pt,out=180,in=70] (p0b);
      \path (t1b) edge [<-,thick,line width=1pt,out=180,in=-70] (p0b);
      \path (t2b) edge [<-,thick,line width=1pt,out=0,in=110] (p0b);
      \path (t3b) edge [->,thick,line width=1pt,out=0,in=-110] (p0b);
      \path (t2b) edge [->,thick,line width=1pt,out=180,in=70] (p2b);
      \path (t3b) edge [<-,thick,line width=1pt,out=180,in=-70] (p2b);

      \end{tikzpicture}}
      \end{center}

\end{minipage}
\end{center}
\vspace*{-\baselineskip}
\caption{Foldings of the behaviors given in \figref{fig:proc-typ1} (c) and (d), respectively.}
\label{fig:folding}
\vspace*{-\baselineskip}
\end{figure}



\ifLongVersion\else
\begin{textAtEnd}[category=proofs]
\fi
\begin{lemma}\label{lemma:direct-char-beh}
  Let $(\amarkednet,\overline{\sources})$ and $(\amarkednet',\overline{\sources'})$ be open behaviors. Then, we have: 
  \begin{align}
    \label{eq:pop}
    (\amarkednet,\overline{\sources}) \pop{\algof{B}} (\amarkednet',\overline{\sources'}) = & ~{(\amarkednet \uplus \amarkednet', \overline{\sources} \cup \overline{\sources'})}_{/\srcrel{\sources}{\sources'}} \\
    \label{eq:restrict}
    \restrict{\slabs}{\algof{B}}(\amarkednet,\overline{\sources}) = & ~(\amarkednet,\proj{\overline{\sources}}{\slabs\times\placeof{\amarkednet}}) \text{, for all } \slabs \finsubseteq \sourcelabels \\
    \label{eq:rename}
    \rename{\alpha}{\algof{B}}(\amarkednet,\overline{\sources}) = & ~(\amarkednet,\overline{\sources}\circ(\alpha^{-1} \times \fnid)) \text{, for all finite permutations } \alpha \text{ of } \sourcelabels
  \end{align}
\end{lemma}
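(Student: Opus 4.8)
The plan is to reduce all three identities to the way $\algof{B}$ is defined. By \lemref{lemma:cong-sys-beh} the homomorphism $\abeh$ is injective, so every element of the domain of $\algof{B}$ is $\behof{(\asys,\sources)}$ for a unique open system $(\asys,\sources)$; write $(\amarkednet,\overline{\sources}) = \behof{(\asys,\sources)}$ and $(\amarkednet',\overline{\sources'}) = \behof{(\asys',\sources')}$, so that $\amarkednet = \behof{\asys}$ and $\amarkednet' = \behof{\asys'}$. Since $\abeh$ is a homomorphism between $\algof{S}$ and $\algof{B}$ (\propref{prop:cong-homo}), we have $\aop^{\algof{B}}(\behof{(\asys,\sources)},\ldots) = \behof{\aop^{\algof{S}}((\asys,\sources),\ldots)}$ for every \hrtext{} symbol $\aop$, so each claim reduces to computing $\behof{(\cdot)}$ of the result of the corresponding operation in $\algof{S}$ and checking it is the stated right-hand side.

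Identities \eqref{eq:rename} and \eqref{eq:restrict} are the easy cases, because $\rename{\alpha}{\algof{S}}$ and $\restrict{\slabs}{\algof{S}}$ leave the underlying graph untouched, hence leave the underlying net $\behof{\asys}$ untouched, and act only on the source labeling. For \eqref{eq:rename}, $\rename{\alpha}{\algof{S}}(\asys,\sources) = (\asys,\sources\circ\alpha^{-1})$, and unwinding the definition of $\overline{(\cdot)}$ gives $\overline{\sources\circ\alpha^{-1}}(\asrc,q) = (q,\sources(\alpha^{-1}(\asrc))) = \overline{\sources}(\alpha^{-1}(\asrc),q) = \bigl(\overline{\sources}\circ(\alpha^{-1}\times\fnid)\bigr)(\asrc,q)$, with matching domains. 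For \eqref{eq:restrict}, $\restrict{\slabs}{\algof{S}}(\asys,\sources) = (\asys,\proj{\sources}{\slabs})$, and directly from the definition of $\overline{(\cdot)}$ the labeling $\overline{\proj{\sources}{\slabs}}$ is the restriction of $\overline{\sources}$ to the source labels lying in $\slabs$, which is the right-hand side of \eqref{eq:restrict}.

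The substantial case is \eqref{eq:pop}. Let $(\asys,\sources)\pop{\algof{S}}(\asys',\sources') = (\asys'',\sources'')$. By the definition of $\pop{\algof{S}}$, the graph $\asys''$ is obtained from the disjoint union $\asys\uplus\asys'$ by fusing, for each $\asrc\in\dom{\sources}\cap\dom{\sources'}$, the vertex $\sources(\asrc)$ with $\sources'(\asrc)$ (both of process type $\ptypeof{\asrc}$), and then identifying all parallel edges carrying the same label; let $\pi : \vertof{\asys}\uplus\vertof{\asys'}\to\vertof{\asys''}$ be the induced surjection, which identifies exactly those pairs of vertices. I will then establish three correspondences, all read off from \defref{def:behavior}. (i) The map $(q,v)\mapsto(q,\pi(v))$ from $\placeof{\amarkednet}\uplus\placeof{\amarkednet'}$ onto $\placeof{\behof{\asys''}}$ has, as its fibers, exactly the classes of $\srcrel{\sources}{\sources'}$; this follows by unfolding $\srcrel{\sources}{\sources'}=\kerof{\overline{\sources}^{-1}\cup\overline{\sources'}^{-1}}$, since two places are related precisely when they are the $(\asrc,q)$-source in their respective summands for one common $\asrc$ and a single place $q$, i.e.\ precisely when $\pi$ fuses the two vertices, while all remaining places lie in singleton classes, matching the injectivity of $\pi$ off the fused vertices. (ii) The analogous map on transitions --- an edge of a summand maps to its $\pi$-image, an internal transition $(t,v)$ to $(t,\pi(v))$ --- is onto $\transof{\behof{\asys''}}$ and its fibers are exactly the classes of the transition equivalence $\approx$ that $\srcrel{\sources}{\sources'}$ induces via \eqref{eq:rel-trans}. (iii) Granting (i) and (ii), a direct computation with the weight clauses of \defref{def:behavior} shows that the quotient weights $\sum_{r\in[(q,v)]}\weight(r,t)$ and $\sum_{r\in[(q,v)]}\weight(t,r)$ match the weights of $\behof{\asys''}$ --- for each class, exactly one representative lies in the summand containing the chosen transition --- and that the initial marking and source labeling of the quotient coincide with those of $\behof{\asys''}$, the latter being immediate from the definition of $\overline{(\cdot)}$. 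Together, (i)--(iii) give $\behof{(\asys'',\sources'')} = {(\amarkednet\uplus\amarkednet',\overline{\sources}\cup\overline{\sources'})}_{/\srcrel{\sources}{\sources'}}$.

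I expect step (ii) to be the main obstacle: one must verify that the combined effect of $\pop{\algof{S}}$ on vertices and edges is reflected exactly by the place-quotient together with the induced transition-quotient $\approx$. Concretely, $\approx$ must merge precisely the transitions coming from edges that $\pop{\algof{S}}$ collapses into one, must merge the two copies of an internal transition sitting at a fused vertex, and nothing else; in particular no edge-transition may be $\approx$-equivalent to an internal transition, which holds because $\pop{\algof{S}}$ never fuses two vertices of the same summand, so no edge becomes a self-loop and every edge-transition is incident to two distinct blocks of places whereas an internal transition touches only one. The remaining verifications (i) and (iii) are routine bookkeeping with \defref{def:behavior}.
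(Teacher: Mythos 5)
Your proposal is correct and follows essentially the same route as the paper: it reduces each identity, via the homomorphism defining $\algof{B}$, to computing $\behof{\cdot}$ of the corresponding $\algof{S}$-operation, disposes of \eqref{eq:rename} and \eqref{eq:restrict} by the identities $\overline{\sources\circ\alpha^{-1}}=\overline{\sources}\circ(\alpha^{-1}\times\fnid)$ and $\overline{\proj{\sources}{\slabs}}=\proj{\overline{\sources}}{\slabs\times\placeof{\amarkednet}}$, and proves \eqref{eq:pop} by showing that the fusion of sources (and of parallel edges) performed by $\pop{\algof{S}}$ is mirrored exactly by the quotient by $\srcrel{\sources}{\sources'}$ and its induced transition equivalence. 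You spell out the place/transition/weight correspondence more explicitly than the paper does, whereas the paper additionally spends a paragraph on the notational convention making the right-hand side $\overline{\sources}\cup\overline{\sources'}$ well-defined when sources are shared; neither difference changes the substance of the argument.
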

\begin{proof}
  (\ref{eq:pop}) We first explain why despite some abuse of notation,
  the right-hand side of the equation is well-defined. Recall that
  $\sources : \sourcelabels \rightarrow \placeof{\amarkednet}$ and
  $\sources' : \sourcelabels \rightarrow \placeof{\amarkednet'}$ are
  injective, and that their lifting to behaviors as
  $\overline{\sources}$ and $\overline{\sources'}$ have disjoint
  images, because the systems whose behaviors
  $(\amarkednet,\overline{\sources})$ and
  $(\amarkednet',\overline{\sources'})$ are, must have disjoint sets
  of vertices, by the definition of $\pop{\algof{S}}$. Then,
  $\overline{\sources}^{-1}$ and $\overline{\sources}^{-1}$ are
  well-defined partial functions, and they have disjoint domains, thus
  $\overline{\sources}^{-1}\cup\overline{\sources'}^{-1}$ is an
  unambiguously defined partial function. It is however not injective,
  and its kernel matches places labeled by the same source in both
  $(\amarkednet,\overline{\sources})$ and
  $(\amarkednet',\overline{\sources'})$.

    Strictly speaking $\overline{\sources}\cup\overline{\sources'}$ is
    not well-defined because if there are shared sources then
    $\sources$ and $\sources'$ have non-disjoint domains.  However by
    construction the functions
    ${\overline{\sources}}_{/\srcrel{\sources}{\sources'}}$ and
    ${\overline{\sources'}}_{/\srcrel{\sources}{\sources'}}$
    coincide on their shared domain.  Thus the function
    ${\overline{\sources}}_{/\srcrel{\sources}{\sources'}} \cup
    {\overline{\sources'}}_{/\srcrel{\sources}{\sources'}}$ is
    well-defined.  Rather than writing: 
    \[({(\amarkednet\uplus\amarkednet')}_{/\srcrel{\sources}{\sources'}},
    {\overline{\sources}}_{/\srcrel{\sources}{\sources'}} \cup
    {\overline{\sources'}}_{/\srcrel{\sources}{\sources'}})\] we stick
    to ${(\amarkednet \uplus \amarkednet', \overline{\sources} \cup
      \overline{\sources'})}_{/\srcrel{\sources}{\sources'}}$ which,
    despite having some intermediate objects not completely
    well-defined, remains intuitively unambiguous.

    The definition of $(\asys,\sources) \pop{\algof{S}}
    (\asys',\sources')$ takes the disjoint union $\asys \uplus \asys'$
    and fuses pairs of $\asrc$-sources and identical edges.  The
    equivalence relation that determines when to fuse $\asrc$-sources
    happens to be exactly the kernel of
    $\sources^{-1}\cup\sources'^{-1}$.  Recall that the
    $\asrc$-sources of a system become sets of $(\asrc,q)$-sources in
    the behavior and that each edge is translated into a transition.
    Thus the kernel of $\sources^{-1}\cup\sources'^{-1}$
    lifted to behaviors is exactly the kernel of
    $\overline{\sources}^{-1}\cup\overline{\sources}^{-1}$,
    from which we deduce that the quotient by $\srcrel{\sources}{\sources'}$ is
    compatible with the translation from systems to behaviors, which
    gives the desired property.  

    \vspace*{\baselineskip}
    \noindent
    (\ref{eq:restrict}) This fact follows from
    $\overline{\proj{\sources}{\slabs}} =
    \proj{\overline{\sources}}{\slabs\times\places}$, for any finite subset of
    source labels $\slabs\finsubseteq\sourcelabels$.

    \vspace*{\baselineskip}
    \noindent
    (\ref{eq:rename}) This fact follows from $\overline{\sources \circ
      \alpha^{-1}} = \overline{\sources}\circ(\alpha^{-1}\times\fnid)$,
    for any permutation $\alpha:\sourcelabels\rightarrow\sourcelabels$.  \qed
\end{proof}
\begin{proofSketch}
  It is not obvious that these objects are all well-defined,
  since $\overline{\sources}$ and $\overline{\sources'}$ do not coincide on their
  shared domain. They do once we quotient by $\srcrel{\sources}{\sources'}$.
  The rest comes from unfolding the definition of the $\overline{\bullet}$ operator.
  \qed
\end{proofSketch}
\ifLongVersion\else
\end{textAtEnd}
\fi

The following lemma allows to define an algebra of abstract behaviors
$\absof{\algof{B}}$ (\figref{fig:alg-zoo}) from the algebra
$\algof{B}$ of behaviors, using \autoref{prop:cong-homo}. The domain
of $\absof{\algof{B}}$ is the set $\absof{\universeOf{B}}$ of folded
behaviors, \ie quotients of behaviors
$(\amarkednet,\overline{\sources})$ w.r.t. the $\foldrel{\sources}$
relation.

\begin{lemmaE}[][category=proofs]\label{lemma:cong-beh-fold}
  $\kerof{\afold}$ is an \hrtext{} congruence. 
\end{lemmaE}
\begin{proofE}
  By a small abuse of notation we shall denote by
  $\foldrel{\sources}$ both the equivalence relation
  on places as well as the equivalence relation
  on transitions that it induces as defined in
  \autoref{eq:rel-trans}.

  The main proof relies on the following preliminary observations
  about manipulating quotients and markings, which hold for any
  permutation $\alpha : \sourcelabels\rightarrow\sourcelabels$ and any
  set $\slabs\finsubseteq \sourcelabels$:
  \begin{enumerate}[(i)]
    \item\label{it1:cong-beh-fold} $\sources$ and $\sources\circ\alpha^{-1}$
      have the same range, which leads to $\foldrel{\sources}$
      and $\foldrel{\sources\circ\alpha^{-1}}$ having the same
      equivalence classes and defining the same quotients;
    \item\label{it2:cong-beh-fold} the bijectivity of $\alpha$ and the fact that
      $\foldrel{\sources}$
      does not identify vertices with different sources mean that
      $\overline{\sources\circ\alpha^{-1}} = \overline{\sources}\circ\alpha^{-1}$,
      and ${(\overline{\sources} \circ\alpha^{-1})}_{/\foldrel{\sources}}
      = {\overline{\sources}}_{/\foldrel{\sources}} \circ\alpha^{-1}$;
    \item\label{it3:cong-beh-fold} because $\proj{\sources}{\tau}$
      is a restriction of $\sources$,
      the equivalence relation induced by the former is a superset of the latter,
      and equivalence classes of $\foldrel{\proj{\sources}{\tau}}$ are unions
      of equivalence classes of $\foldrel{\sources}$.
    \item\label{it4:cong-beh-fold}
      ${\overline{\sources_1}}_{/\foldrel{\sources_1}} =
      {\overline{\sources_2}}_{/\foldrel{\sources_2}}$ implies
      $\sources_1 = \sources_2$.  Pick some
      ${\overline{\sources_1}}_{/\foldrel{\sources_1}}(\sigma,[q]_\sim)
      =
      {\overline{\sources_2}}_{/\foldrel{\sources_2}}(\sigma,[q]_\sim)$.
      Unfolding the definitions gives that
      $[\overline{\sources_1}(\asrc,q)]_{\foldrel{\sources_1}} =
      [\overline{\sources_2}(\asrc,q)]_{\foldrel{\sources_2}}$, which
      are equal to $[(q,\sources_1(\asrc))]_{\foldrel{\sources_1}}$
      and $[(q,\sources_2(\asrc))]_{\foldrel{\sources_2}}$
      respectively.  The equivalence classes respect the labeling of
      sources, so both of the above places are alone in their
      equivalence class.  This proves $(q,\sources_1(\asrc)) =
      (q,\sources_2(\asrc))$ from which we easily get
      $\sources_1(\asrc) = \sources_2(\asrc)$. Since the choice of
      $\asrc\in\sourcelabels$ was arbitrary, we obtain $\sources_1 =
      \sources_2$.
  \end{enumerate}

  We now proceed separately for each \hrtext{} function symbol: 
  \begin{itemize}
    \item $\sgraph{a}{\asrc_1}{\asrc_2}{}$
      is of arity zero thus trivially 
      $\sgraph{a}{\asrc_1}{\asrc_2}{\algof{B}} \kerof{\afold} \sgraph{a}{\asrc_1}{\asrc_2}{\algof{B}}$.
    \item $\rename{\alpha}{}$: Let
      $(\amarkednet_{~1},\overline{\sources_1}) \kerof{\afold}
      (\amarkednet_{~2},\overline{\sources_2})$ be two open behaviors
      having the same folded image. Using observation
      (\ref{it1:cong-beh-fold}) above and given that
      $\rename{\alpha}{\algof{B}}$ does not modify the underlying net
      and the initial marking, we deduce that
      $\foldof{\rename{\alpha}{\algof{B}}(\amarkednet_{~1},\overline{\sources_1})}$
      and
      $\foldof{\rename{\alpha}{\algof{B}}(\amarkednet_{~2},\overline{\sources_2})}$
      have the same underlying net and initial marking. Then
      observation (\ref{it2:cong-beh-fold}) yields
      ${(\overline{\sources_1\circ\alpha^{-1}})}_{/\foldrel{\sources_1\circ\alpha^{-1}}}
      =
      \left({\overline{\sources_1}}_{/\foldrel{\sources_1}}\right) \circ\alpha^{-1}
      =
      \left({\overline{\sources_2}}_{/\foldrel{\sources_2}}\right) \circ\alpha^{-1}
      =
      {(\overline{\sources_2\circ\alpha^{-1}})}_{/\foldrel{\sources_2\circ\alpha^{-1}}}$
      and we conclude that
      $\rename{\alpha}{\algof{B}}(\amarkednet_{~1},\overline{\sources_1})
      \kerof{\afold}
      \rename{\alpha}{\algof{B}}(\amarkednet_{~2},\overline{\sources_2})$
    \item $\restrict{\slabs}{}$: Let
      $(\amarkednet_{~1},\overline{\sources_1}) \kerof{\afold}
      (\amarkednet_{~2},\overline{\sources_2})$ be two open behaviors
      having the same folded image. By observation
      (\ref{it3:cong-beh-fold}), we have that $\foldrel{\sources_i}
      \subseteq \foldrel{\proj{\sources_i}{\tau}}$ for $i = 1,2$.
      This means that
      ${\amarkednet_{~i}}_{/\foldrel{\proj{\sources_i}{\tau}}}
      =
      {\left({\amarkednet_{~i}}_{/\foldrel{\sources_i}}\right)}_{/\foldrel{\proj{\sources_i}{\tau}}}$
      and because we know that (a)
      ${\amarkednet_{~1}}_{/\foldrel{\sources_i}} =
      {\amarkednet_{~2}}_{/\foldrel{\sources_2}}$ as well as
      (b) $\foldrel{\proj{\sources_1}{\tau}} =
      \foldrel{\proj{\sources_2}{\tau}}$, we deduce that
      ${\amarkednet_{~1}}_{/\foldrel{\proj{\sources_1}{\tau}}}
      =
      {\amarkednet_{~2}}_{/\foldrel{\proj{\sources_2}{\tau}}}$.
      Since, moreover,
      ${\overline{\sources_1}}_{/\foldrel{\sources_1}} =
      {\overline{\sources_2}}_{/\foldrel{\sources_2}}$, we
      conclude that
      ${\overline{\sources_1}}_{/\foldrel{\proj{\sources_1}{\tau}}}
      =
      {\overline{\sources_2}}_{/\foldrel{\proj{\sources_2}{\tau}}}$
      and thus
      $\foldof{\restrict{\slabs}{\algof{B}}(\amarkednet_{~1},\overline{\sources_1})}
      =
      \foldof{\restrict{\slabs}{\algof{B}}(\amarkednet_{~2},\overline{\sources_2})}$. 
    \item $\aop = \pop{\algof{B}}$: Let
      $(\amarkednet_{~1},\overline{\sources_1}) \kerof{\afold}
      (\amarkednet_{~2},\overline{\sources_2})$ and
      $(\amarkednet'_{~1},\overline{\sources_1}) \kerof{\afold}
      (\amarkednet'_{~2},\overline{\sources_2})$ be pairwise
      equivalent behaviors.  \lemref{lemma:direct-char-beh} gives us
      $\afold((\amarkednet_{~i},\sources_i) \pop{\algof{B}}
      (\amarkednet'_{~i},\sources_i)) =
      ({\amarkednet''_{~i}}_{/\foldrel{\sources''_i}},
      {\overline{\sources''_i}}_{/\foldrel{\sources''_i}})$,
      where $\amarkednet''_{~i} = {(\amarkednet_{~i} \uplus
        \amarkednet'_{~i})}_{/\srcrel{\sources_i}{\sources'_i}}$ and
      $\sources''_i =
      {(\overline{\sources_i}\cup\overline{\sources'_i})}_{/\srcrel{\sources_i}{\sources'_i}}$.
      Since by the initial assumption we have
      ${\overline{\sources_1}}_{/\foldrel{\sources_1}} =
      {\overline{\sources_2}}_{/\foldrel{\sources_2}}$ we
      deduce by observation (\ref{it4:cong-beh-fold}) that $\sources_1
      = \sources_2$.  The same reasoning for $\sources'_1$ and
      $\sources'_2$ gives that the equivalence relations
      $\srcrel{\sources_1}{\sources'_1}$ and
      $\srcrel{\sources_2}{\sources'_2}$ are the same, hence
      $\sources''_1 = \sources''_2$.

      It remains to prove that
      ${\amarkednet''_{~1}}_{/\foldrel{\sources''_1}} =
      {\amarkednet''_{~2}}_{/\foldrel{\sources''_2}}$.  It helps that
      $\srcrel{\sources_i}{\sources'_i}$ fuses only places that are
      sources, while $\foldrel{\sources''_i}$ fuses only places that
      are not sources.  This means that whenever a place is part of
      some non-singleton equivalence class according to
      $\foldrel{\sources''_i}$, its equivalence class according to
      $\srcrel{\sources_i}{\sources''_i}$ is a singleton, and
      reciprocally.  Even though $\foldrel{\sources''_i}$ is
      technically defined on equivalence classes of places, this
      allows us to consider it as an equivalence relation between
      places, and to permute the two quotients to obtain
      ${\amarkednet''_{~i}}_{/\foldrel{\sources''_i}} =
      {(({\amarkednet_i \uplus
          \amarkednet'_i)}_{/\foldrel{\sources''_i}})}_{/\srcrel{\sources_i}{\sources'_i}}$
      It is now sufficient to prove
      ${(\amarkednet_{~1}\uplus\amarkednet'_{~1})}_{/\foldrel{\sources''_1}}
      =
      {(\amarkednet_{~2}\uplus\amarkednet'_{~2})}_{/\foldrel{\sources''_2}}$.
      Seeing that $\foldrel{\sources_i},\foldrel{\sources'_i}
      \subseteq \foldrel{\sources''_i}$ we can insert quotients by
      $\foldrel{\sources_i}$ and $\foldrel{\sources'_i}$ underneath a
      quotient by $\foldrel{\sources''_i}$ without affecting the final
      result:
      ${(\amarkednet_{~i}\uplus\amarkednet'_{~i})}_{/\foldrel{\sources''_i}}
      = {({\amarkednet_{~i}}_{/\foldrel{\sources_i}} \uplus
        {\amarkednet'_{~i}}_{/\foldrel{\sources'_i}})}_{/\foldrel{\sources''_i}}$.
      With this formulation it is finally clear that we can substitute
      the known equalities ${\amarkednet_{~1}}_{/\foldrel{\sources_1}}
      = {\amarkednet_{~2}}_{/\foldrel{\sources_2}}$ and
      ${\amarkednet'_{~1}}_{/\foldrel{\sources'_1}} =
      {\amarkednet'_{~2}}_{/\foldrel{\sources'_2}}$ to get the desired
      property.
  \end{itemize}
  Thus $\afold$ is an \hrtext{} congruence.
  \qed
\end{proofE}
\begin{proofSketch}
  For each function symbol $\aop \in \mathsf{HR}$, we express the effects
  of $\aop$ on the source labeling, essentially proving that if
  $\sources_1$ defines the same folding as $\sources_2$,
  then $\aop(\sources_1)$ defines the same folding as $\aop(\sources_2)$.
\end{proofSketch}

As an immediate consequence of \lemref{lemma:cong-beh-fold}, we obtain
that $\afold$ is a homomorphism between $\algof{B}$ and
$\absof{\algof{B}}$, hence the same \hrtext{} grammar can be used to
both specify an infinite set of behaviors and compute its folded
abstraction:

\begin{corollary}\label{prop:beh-fold}
  For each \hrtext{} grammar $\grammar$, we have
  $\foldof{\alangof{}{\algof{B}}{\grammar}}=\alangof{}{\absof{\algof{B}}}{\grammar}$.
\end{corollary}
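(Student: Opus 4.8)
The plan is to reuse, almost verbatim, the argument of \lemref{prop:sys-beh}, with the triple $(\algof{S},\algof{B},\abeh)$ replaced by $(\algof{B},\absof{\algof{B}},\afold)$. First I would note that \lemref{lemma:cong-beh-fold} tells us $\kerof{\afold}$ is an \hrtext{} congruence, and that $\absof{\algof{B}}$ was defined to be exactly the algebra produced by \propref{prop:cong-homo} applied to the function $\afold : \universeOf{B} \to \absof{\universeOf{B}}$. Hence \propref{prop:cong-homo} applies and delivers the two facts we need: $\afold$ is a homomorphism from $\algof{B}$ to $\absof{\algof{B}}$, and $\afold(\theta^{\algof{B}}) = \theta^{\absof{\algof{B}}}$ for every ground \hrtext{}-term $\theta$.

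The remaining step is pure bookkeeping with the definition of the language of a grammar in an algebra. For each axiom $\to X$ of $\grammar$ and each complete derivation $X \step{\grammar}^* \theta$, the term $\theta$ is ground, so $\theta^{\absof{\algof{B}}} = \afold(\theta^{\algof{B}})$ by the previous paragraph; thus $\alangof{X}{\absof{\algof{B}}}{\grammar} = \afold\bigl(\alangof{X}{\algof{B}}{\grammar}\bigr)$. Taking the union over all axioms and using that $\afold$ distributes over unions of sets, we obtain $\alangof{}{\absof{\algof{B}}}{\grammar} = \afold\bigl(\alangof{}{\algof{B}}{\grammar}\bigr) = \foldof{\alangof{}{\algof{B}}{\grammar}}$, which is the claim.

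There is essentially no obstacle: all the substance is concentrated in \lemref{lemma:cong-beh-fold}, which we are allowed to assume. The only point requiring a moment's care is to confirm that the ad hoc description of $\absof{\algof{B}}$ given just before the corollary — domain equal to the set $\absof{\universeOf{B}}=\afold(\universeOf{B})$ of folded behaviors, each \hrtext{} operation interpreted as $\aop^{\absof{\algof{B}}}(b_1,\dots,b_n) \isdef \afold(\aop^{\algof{B}}(\afold^{-1}(b_1),\dots,\afold^{-1}(b_n)))$ — coincides literally with the construction in \propref{prop:cong-homo}, so that the proposition is directly applicable. The text sets things up precisely so that this holds, so nothing further is needed.
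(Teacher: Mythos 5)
Your proof is correct and matches the paper's own (implicit) argument: the corollary is obtained exactly as \lemref{prop:sys-beh}, by combining \lemref{lemma:cong-beh-fold} with \propref{prop:cong-homo} to get that $\afold$ is a homomorphism satisfying $\afold(\theta^{\algof{B}})=\theta^{\absof{\algof{B}}}$ on ground terms, and then unfolding the definition of the language of a grammar in an algebra. Nothing to add.
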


Because $\ptypes$ is a finite set of process types, each having
finitely many places, the folded language
$\alangof{}{\absof{\algof{B}}}{\grammar}$ has a finite set of
underlying nets. This is because each transition $t$ in a behavior
$\amarkednet \in \alangof{}{\algof{B}}{\grammar}$ has at most two
incoming/outgoing edges. Since the same holds for each transition of
its quotient, \ie $\amarkednet_{~/\foldrel{\sources}}$, there are
finitely many places and transitions in each underlying net of some
$\amarkednet \in
\alangof{}{\absof{\algof{B}}}{\grammar}$. Nevertheless, the language
$\alangof{}{\absof{\algof{B}}}{\grammar}$ is unbounded, because the
set of initial markings is unbounded. In order to represent this set
in a finite way, we shall proceed in two
steps: \begin{compactenum}[1.]
\item\label{it1:initial-markings} Isolate the initial markings that
  correspond to a given net from
  $\alangof{}{\absof{\algof{B}}}{\grammar}$; we address this problem
  using the Filtering Theorem (\thmref{thm:filtering}).
\item\label{it2:initial-markings} Give a finite representation to the
  set of initial markings of each net; we tackle this problem using
  Esparza's idea \cite{Esparza95} of building PNs that simulate the derivations of the
  ``filtered'' grammars obtained in the previous step.
\end{compactenum}

To formally define the finite set of underlying nets from a language
$\alangof{}{\absof{\algof{B}}}{\grammar}$, we consider the function
$\psi$ on the domain $\absof{\universeOf{B}}$, that drops the initial
marking:
\begin{align}\label{eq:drop}
\psi((\anet,\amark_0),\sources) \isdef (\anet,\sources)
\end{align}
Then, $\psi(\alangof{}{\absof{\algof{B}}}{\grammar})$ is finite,
because the numbers of places and transitions in each net from this
set are bounded by $\cardof{\placeof{\ptypes}}$ and
$\cardof{\placeof{\ptypes}}^4$ (\ie each transition has at most $2$
incoming and $2$ outgoing edges of weight $1$), respectively.

Since none of the operations from $\absof{\algof{B}}$ modify the
initial markings, it is straightforward that $\kerof{\psi}$ is a
\hrtext{} congruence. We define the algebra $\finabsof{\algof{B}}$
(\figref{fig:alg-zoo}) having the finite domain
$\finabsof{\universeOf{B}} \isdef \psi(\absof{\universeOf{B}})$ and
the usual interpretations of the \hrtext{} function symbols given by
\propref{prop:cong-homo}. By standard arguments (similar to
\autoref{prop:sys-beh} and \autoref{prop:beh-fold}), we obtain that
$\psi$ is a homomorphism between $\absof{\algof{B}}$ and
$\finabsof{\algof{B}}$, \ie
$\psi(\alangof{}{\absof{\algof{B}}}{\grammar}) =
\alangof{}{\finabsof{\algof{B}}}{\grammar}$.

As previously discussed, $\alangof{}{\finabsof{\algof{B}}}{\grammar}$
is a finite set. However, to ensure that this set can be effectively
computed, the computation of the interpretation of the
\hrtext{} signature in $\finabsof{\algof{B}}$ needs to be effective:

\begin{propositionE}[][category=proofs]\label{prop:effective-finite-abstraction}
  For each function symbol $\aop$ from the \hrtext{} signature, the
  function $\aop^{\finabsof{\algof{B}}}$ is effectively computable.
\end{propositionE}
\begin{proofE}
  Both the folding $\phi : \behaviors \to \absof{\behaviors}$
  and the projection $\psi : \absof{\behaviors} \to \finabsof{\behaviors}$
  preserve the arity of transitions, so the elements of
  $\finabsof{\behaviors}$ obey $1 \leq \cardof{\pre{t}} = \cardof{\post{t}} \leq 2$
  for every transition $t$, and can thus be seen as behaviors with empty markings:
  we define $z : \finabsof{\behaviors} \to \behaviors$ by
  $z((\anet,\overline{\sources})) \isdef ((\anet,0), \overline{\sources})$.
  Because empty markings added together remain empty,
  $z(\finabsof{\behaviors})$ forms a stable subset of $\behaviors$
  by all operations of \hrtext{}.
  This homomorphism $z$ is a right inverse of the translation from
  $\behaviors$ to $\finabsof{\behaviors}$: $(\psi\circ\afold) \circ z = \fnid$.

  By definition $\aop^{\finabsof{\algof{B}}}(b_1, ..., b_n) =
  (\psi\circ\afold)(\aop^{\algof{B}}((\psi\circ\afold)^{-1}(b_1),...,(\psi\circ\afold)^{-1}(b_n))$.
  Since $\kerof{\psi \circ \afold}$ is an \hrtext{} congruence, any right
  inverse of $\psi\circ\afold$ can replace $(\psi\circ\afold)^{-1}$ in
  the expression above, and a good candidate is $z$ which is an
  effectively computable right inverse.  This gives
  $\aop^{\finabsof{\algof{B}}}(b_1, ..., b_n) =
  \psi(\afold(\aop^{\algof{B}}(z(b_1),...,z(b_n))))$.  All of $z$,
  $\afold$, and $\psi$ are effectively computable, so in order for
  $\aop^{\finabsof{\algof{B}}}$ to be effectively computable it
  suffices that $\aop^{\algof{B}}$ also be effectively computable,
  which is given by the alternative characterizations of \hrtext{} operating
  directly on behaviors in \lemref{lemma:direct-char-beh}.  \qed
\end{proofE}
\begin{proofSketch}
  since $\kerof{\phi}$ is \hrtext{} congruence (and so is $\kerof{\psi}$, obviously),
  it suffices to exhibit a right inverse of $\psi \circ \phi$.
  We find that the candidate $z : \finabsof{\behaviors} \to \behaviors$
  defined by $z((\anet,\overline{\sources})) \isdef ((\anet,0), \overline{\sources})$
  works and provides $\aop^{\finabsof{\algof{B}}} = \psi \circ \phi \circ \aop^{\absof{\algof{B}}} \circ z$
  effectively computable.
  \qed
\end{proofSketch}

By the previous arguments, the language
$\alangof{}{\finabsof{\algof{B}}}{\grammar}$ is finite and effectively
computable, hence it can be produced by a finite Kleene iteration of
the monotonic function that maps a tuple of sets indexed by the
nonterminals of $\grammar$ into their $\finabsof{\algof{B}}$
interpretations, given by the rules of $\grammar$.  Let
$\set{(\anet_1, \overline{\sources_1}), \ldots, (\anet_n,
  \overline{\sources_n})} \isdef
\alangof{}{\finabsof{\algof{B}}}{\grammar}$ be this set. Using the
Filtering Theorem (\thmref{thm:filtering}) one can effectively build
grammars $\grammar_1, \ldots, \grammar_n$ such that:
\begin{align}\label{eq:filtering}
  \psi(\alangof{}{\absof{\algof{B}}}{\grammar_i}) =
  \alangof{}{\finabsof{\algof{B}}}{\grammar_i} =
  \set{(\anet_i,\overline{\sources_i})} \text{, for each } i \in \interv{1}{n}
\end{align}
More precisely, the Filtering Theorem gives, for each $i \in
\interv{1}{n}$, a grammar $\grammar_i$ such that
$\alangof{}{\absof{\algof{B}}}{\grammar_i} =
\alangof{}{\absof{\algof{B}}}{\grammar} \cap
\psi^{-1}(\set{(\anet_i,\overline{\sources_i})})$. By applying $\psi$
to both sides of the equality, we obtain (\autoref{eq:filtering}),
thus taking care of the first step of the construction
(\ref{it1:initial-markings}).

\subsection{Initial Markings}

Let $\grammar=(\nonterm,\rules)$ be any of the grammars $\grammar_1,
\ldots, \grammar_n$ (\autoref{eq:filtering}). To simplify matters at
hand, we assume w.l.o.g that the right-hand side of each rule in
$\grammar$ has exactly one occurrence of an \hrtext{} function symbol,
\ie a constant $\sgraph{a}{\asrc_1}{\asrc_2}{}$, a unary function
symbol $\restrict{\slabs}{}$ or $\rename{\alpha}{}$, or the binary
function symbol $\pop{}$, applied to $0$, $1$ or $2$ variables,
respectively. Note that each grammar can be put in this form, at the
cost of adding polynomially many extra nonterminals.

First, we annotate each nonterminal $X \in \nonterm$ with sets of
sources $\slabs$ that are \emph{visible} (\ie have been introduced by
a constant $\sgraph{a}{\asrc_1}{\asrc_2}{}$ and have not been removed
by some application of $\restrict{\slabs'}{}$) in each complete
derivation starting in $\annot{X}{\slabs}$, where $\annot{X}{\slabs}$
is a shorthand for the pair $(X,\slabs)$.  The annotated grammar
$\widehat{\grammar} = (\widehat{\nonterm}, \widehat{\rules})$ can be
built from $\grammar$ by a standard worklist iteration \ifLongVersion
The relation between $\grammar$ and $\widehat{\grammar}$ is captured
below: \else The language of the annotated grammar
$\widehat{\grammar}$ is the same as the original grammar
$\grammar$.\else\footnote{For reasons of space, the annotation
  algorithm is given in \figref{fig:annotation} from Appendix
  \ref{app:proofs}.}\fi Next, we use the annotated grammar
$\widehat{\grammar} = (\widehat{\nonterm},\widehat{\rules})$, to build
a PN $\initof{\widehat{\grammar}}$ that generates the initial markings
of $\grammar$ in the set of folded PNs (\autoref{prop:beh-fold}).

\ifLongVersion\else
\begin{textAtEnd}[category=initial]
\fi
\begin{lemma}\label{lemma:annotation}
  Let $\grammar=(\nonterm,\rules)$ be a grammar and
  $\widehat{\grammar}=(\widehat{\nonterm},\widehat{\rules})$ be the
  corresponding annotated grammar. Then
  $\alangof{}{\algof{A}}{\grammar} =
  \alangof{}{\algof{A}}{\widehat{\grammar}}$, for each
  \hrtext{} algebra $\algof{A}$.
\end{lemma}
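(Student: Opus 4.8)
The plan is to show that complete derivations of $\grammar$ and of $\widehat{\grammar}$ are in term-preserving bijection, and then apply $(\cdot)^{\algof{A}}$ to conclude, uniformly for every \hrtext{} algebra $\algof{A}$; in particular nothing in the argument will depend on $\algof{A}$.

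First I would isolate the purely syntactic ingredient behind the annotation: a function $\mathrm{src}(\cdot)$ assigning to every ground \hrtext{} term $\theta$ its set of \emph{visible source labels}, defined by structural recursion by $\mathrm{src}(\sgraph{a}{\asrc_1}{\asrc_2}{}) = \set{\asrc_1,\asrc_2}$, $\mathrm{src}(\restrict{\slabs}{}(\theta)) = \mathrm{src}(\theta)\cap\slabs$, $\mathrm{src}(\rename{\alpha}{}(\theta)) = \alpha(\mathrm{src}(\theta))$, and $\mathrm{src}(\theta_1\pop{}\theta_2) = \mathrm{src}(\theta_1)\cup\mathrm{src}(\theta_2)$. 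Each term $\rho$ occurring as a right-hand side thus induces a function $f_\rho$ on tuples of source sets, obtained by composing these per-symbol operations along $\rho$, so that $\mathrm{src}(\rho[\theta_1,\ldots,\theta_n]) = f_\rho(\mathrm{src}(\theta_1),\ldots,\mathrm{src}(\theta_n))$. The annotation algorithm is designed exactly to track this quantity: by construction $\widehat{\grammar}$ contains the rule $\annot{X}{\slabs}\to\rho[\annot{X_1}{\slabs_1},\ldots,\annot{X_n}{\slabs_n}]$ precisely when $X\to\rho[X_1,\ldots,X_n]$ is a rule of $\grammar$, each $\annot{X_i}{\slabs_i}$ is an annotated nonterminal produced by the iteration, and $\slabs = f_\rho(\slabs_1,\ldots,\slabs_n)$; and $\widehat{\grammar}$ has axiom $\to\annot{X}{\slabs}$ for each axiom $\to X$ of $\grammar$ and each such $\annot{X}{\slabs}$. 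Consequently, erasing annotations turns every rule (resp.\ axiom) of $\widehat{\grammar}$ into a rule (resp.\ axiom) of $\grammar$.

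Given this, $\alangof{}{\algof{A}}{\widehat{\grammar}}\subseteq\alangof{}{\algof{A}}{\grammar}$ is immediate: erasing all annotations from a complete $\widehat{\grammar}$-derivation yields a complete $\grammar$-derivation of the same ground term (annotations never appear in ground terms). For the converse, I would prove by induction on the structure of a complete $\grammar$-derivation that a complete $\grammar$-derivation of $\theta$ starting from $X$ lifts \emph{uniquely} to a complete $\widehat{\grammar}$-derivation of $\theta$ starting from $\annot{X}{\mathrm{src}(\theta)}$: the first step uses some rule $X\to\rho[X_1,\ldots,X_n]$ and the derivation decomposes into complete sub-derivations of the $\theta_i$ with $\theta=\rho[\theta_1,\ldots,\theta_n]$; by the induction hypothesis each lifts to a complete $\widehat{\grammar}$-derivation from $\annot{X_i}{\mathrm{src}(\theta_i)}$, and since $\mathrm{src}(\theta)=f_\rho(\mathrm{src}(\theta_1),\ldots,\mathrm{src}(\theta_n))$ the rule $\annot{X}{\mathrm{src}(\theta)}\to\rho[\annot{X_1}{\mathrm{src}(\theta_1)},\ldots,\annot{X_n}{\mathrm{src}(\theta_n)}]$ is in $\widehat{\grammar}$ by the previous paragraph. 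The lift and the erasure are mutually inverse, so complete derivations of the two grammars are in term-preserving bijection; taking $\theta^{\algof{A}}$ of the produced terms and unioning over axioms gives $\alangof{}{\algof{A}}{\grammar}=\alangof{}{\algof{A}}{\widehat{\grammar}}$.

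The main obstacle is a proviso used silently above: for the lift to be well-defined one needs the iteration to be \emph{complete}, i.e.\ whenever $X$ admits a complete $\grammar$-derivation of $\theta$, the pair $\annot{X}{\mathrm{src}(\theta)}$ must actually be among the annotated nonterminals produced. I would obtain this by presenting the worklist algorithm as computing the least fixpoint of the monotone operator that, from a set $S$ of annotated nonterminals, returns all heads $\annot{X}{f_\rho(\slabs_1,\ldots,\slabs_n)}$ of rules $X\to\rho[X_1,\ldots,X_n]$ with $\annot{X_i}{\slabs_i}\in S$ for all $i$; then the set $\set{\annot{X}{\mathrm{src}(\theta)}\mid X\step{\grammar}^{*}\theta \text{ complete}}$ is a pre-fixpoint of this operator — by compositionality of $\mathrm{src}$ together with the decomposition of a complete derivation of $\rho[\theta_1,\ldots,\theta_n]$ into complete derivations of the $\theta_i$ — hence contained in the least fixpoint. (Soundness of the iteration, that no spurious annotations arise, is not needed for the language equality, since by the compositional form of the rules any complete $\widehat{\grammar}$-derivation from $\annot{X}{\slabs}$ forces $\mathrm{src}$ of its output to equal $\slabs$; but it holds by the same fixpoint reasoning and justifies the description of $\widehat{\grammar}$ given above.) Everything else is routine structural induction.
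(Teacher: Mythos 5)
Your overall route coincides with the paper's: the paper proves $\alangof{}{\algof{A}}{\grammar} \subseteq \alangof{}{\algof{A}}{\widehat{\grammar}}$ by a reverse induction on the length of a complete derivation (lifting it to an annotated derivation whose rules and axiom were inserted by the worklist), and the converse by erasing annotations; your erasure direction and your lift by structural induction, mediated by the explicit $\mathrm{src}$ function, are the same argument made more explicit. You are also right that the only non-routine point is the one the paper glosses over, namely that whenever $X \step{\grammar}^* \theta$ is complete, the pair $\annot{X}{\mathrm{src}(\theta)}$ is actually produced by the iteration, so that the lifted rule $\annot{X}{\mathrm{src}(\theta)} \to \rho[\annot{X_1}{\mathrm{src}(\theta_1)},\ldots,\annot{X_n}{\mathrm{src}(\theta_n)}]$ and the axiom $\to \annot{X}{\mathrm{src}(\theta)}$ exist in $\widehat{\rules}$.

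The flaw is in how you justify that point: the lattice-theoretic inference is stated backwards. Let $F$ be your operator and $T \isdef \set{\annot{X}{\mathrm{src}(\theta)} \mid X \step{\grammar}^* \theta \text{ complete}}$; what you need is $T \subseteq \widehat{\nonterm} = \mathrm{lfp}(F)$. Whichever inclusion your parenthetical establishes --- $F(T) \subseteq T$ (by composing derivations of the $\theta_i$ into one of $\rho[\theta_1,\ldots,\theta_n]$) or $T \subseteq F(T)$ (by decomposing a complete derivation, as you write) --- neither gives the needed containment as a general principle: the first yields $\mathrm{lfp}(F) \subseteq T$ by Knaster--Tarski, the second only $T \subseteq \mathrm{gfp}(F)$; a pre-fixpoint is \emph{not} in general contained in the least fixpoint (take $T$ to be the set of all annotated pairs). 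The conclusion is nevertheless true, and you already have all the ingredients for the correct argument: induct on the height (or length) of the complete derivation; by the induction hypothesis $\annot{X_i}{\mathrm{src}(\theta_i)} \in \widehat{\nonterm}$ for the sub-derivations, and since $\widehat{\nonterm}$ is closed under $F$ (the worklist stops only when nothing new can be added), compositionality of $\mathrm{src}$ gives $\annot{X}{\mathrm{src}(\theta)} = \annot{X}{f_\rho(\mathrm{src}(\theta_1),\ldots,\mathrm{src}(\theta_n))} \in \widehat{\nonterm}$; equivalently, a derivation of height $k$ lands in the $k$-th Kleene iterate of the worklist operator. This finitary induction is exactly the paper's ``reverse induction on the length of this derivation''. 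With that local repair your proof goes through and is otherwise the same as the paper's.
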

\begin{proof}
  ``$\subseteq$'' Let $X \step{\grammar}^* \theta$ be a complete
  derivation starting with an axiom $\rightarrow X \in \rules$. By
  reverse induction on the length of this derivation we build a
  complete derivation $X^\slabs \step{\widehat{\grammar}}^* \theta$
  starting with an axiom $\rightarrow X^\slabs \in
  \widehat{\rules}$. It is clear that the rules used in this
  derivation are inserted into $\widehat{\rules}$ at line
  \ref{line:insert-rule}, whereas the axiom is inserted at line
  \ref{line:insert-axiom}. We obtain $\alangof{}{\algof{A}}{\grammar}
  \subseteq \alangof{}{\algof{A}}{\widehat{\grammar}}$ by the
  definition of the language of a grammar in $\algof{A}$. 

  \noindent``$\supseteq$'' Let $X^\slabs \step{\widehat{\grammar}}^*
  \theta$ be a complete derivation starting with an axiom $\rightarrow
  X^\slabs \in \widehat{\rules}$. Then the complete derivation $X
  \step{\grammar}^* \theta$ is obtained by applying, for each rule
  $X_0^{\slabs_0} \rightarrow \rho[X_1^{\slabs_1}, \ldots,
    X_n^{\slabs_n}] \in \widehat{\rules}$ the rule $X_0 \rightarrow
  \rho[X_1,\ldots,X_n] \in \rules$. Then $\rightarrow X$ is an axiom
  of $\grammar$, since $\rightarrow X^\slabs$ is an axiom of
  $\widehat{\grammar}$ that has been introduced at line
  \ref{line:insert-axiom}. We obtain $\alangof{}{\algof{A}}{\grammar}
  \supseteq \alangof{}{\algof{A}}{\widehat{\grammar}}$ by the
  definition of the language of a grammar in $\algof{A}$. \qed
\end{proof}
\begin{proofSketch}
  By induction on a complete derivation.
  All rules used by the derivation
  are well-typed with regard to the annotations,
  and thus available in the annotated grammar.
  \qed
\end{proofSketch}

\begin{figure}
  {\small\begin{algorithmic}[0]
    \STATE \textbf{input}: $\grammar=(\nonterm,\rules)$
    \STATE \textbf{output}: $\widehat{\grammar}=(\widehat{\nonterm}, \widehat{\rules})$
  \end{algorithmic}}
  {\small\begin{algorithmic}[1]
    \STATE \textbf{initially} $\widehat{\nonterm} := \emptyset$, 
    $\widehat{\rules} := \emptyset$, 
    $\mathit{changed} := \mathrm{true}$
    \WHILE{$\mathit{changed}$} 
    \STATE $\mathit{changed} := \mathrm{false}$
    \FOR{each $X \rightarrow \rho[X_1,\ldots,X_n] \in\rules$ and $X_1^{\slabs_1}, \ldots, X_n^{\slabs_n} \in \widehat{\nonterm}$}
    \STATE \textbf{match} $\rho[X_1,\ldots,X_n]$ \textbf{with}
    \Statex \begin{minipage}{5cm}
      \vspace*{-\baselineskip}
      \[
      \begin{array}{rcl}
        \sgraph{a}{\asrc_1}{\asrc_2}{} & \Longrightarrow & \slabs := \set{\asrc_1,\asrc_2} \\
        \restrict{\slabs'}{}(X_1) & \Longrightarrow & \slabs := \slabs_1 \cap \slabs' \\
        \rename{\alpha}{}(X_1) & \Longrightarrow & \slabs := \alpha(\slabs_1) \\
        X_1 \pop{} X_2 & \Longrightarrow & \slabs := \slabs_1 \cup \slabs_2
      \end{array}
      \]
    \end{minipage}
    \IF{$X^\slabs \not\in \widehat{\nonterm}$}
    \STATE $\widehat{\rules} := \widehat{\rules} \cup \set{X^\slabs \rightarrow \rho[X_1^{\slabs_1}, \ldots, X_n^{\slabs_n}]}$ \label{line:insert-rule}
    \STATE $\widehat{\nonterm} := \widehat{\nonterm} \cup \set{X^\slabs}$
    \STATE $\mathit{changed} := \mathrm{true}$
    \ENDIF
    \ENDFOR
    \STATE $\widehat{\rules} := \widehat{\rules} \cup \set{\rightarrow X^\slabs \mid \rightarrow X \in \rules,~ X^\slabs \in \widehat{\nonterm}}$ \label{line:insert-axiom}
    \ENDWHILE
  \end{algorithmic}}
  \caption{Bottom-up annotation of a grammar with visible sources}
  \label{fig:annotation}
\end{figure}

Formally we define $\initof{\widehat{\grammar}} \isdef (\anet,\amark_0)$ as:
\begin{align*}
  \placeof{\anet} \isdef & ~\set{S} \uplus \widehat{\nonterm} \uplus \placeof{\ptypes} \hspace*{5mm}
  \transof{\anet} \isdef ~\widehat{\rules} \\
  \weightof{\anet}(S,\rightarrow \annot{X}{\slabs}) \isdef & ~1 \hspace*{5mm}
  \weightof{\anet}(\rightarrow \annot{X}{\slabs}, \annot{X}{\slabs}) \isdef 1
  \text{, for all } \rightarrow \annot{X}{\slabs} \in \widehat{\rules}
  \\
  \weightof{\anet}(\rightarrow \annot{X}{\slabs}, q) \isdef & ~\sum_{\asrc\in\slabs} \initmarkof{\ptypeof{\asrc}}(q)
  \text{, for all } q \in \placeof{\ptypes}
  \\
  \weightof{\anet}(\annot{X}{\slabs},\annot{X_0}{\slabs_0} \rightarrow \rho[\annot{X_1}{\slabs_1},\ldots,\annot{X_n}{\slabs_n}]) \isdef & ~\left\{\begin{array}{ll}
  1 \text{, if } \annot{X}{\slabs} = \annot{X_0}{\slabs_0} \\
  0 \text{, otherwise} 
  \end{array}\right.
  \\
  \weightof{\anet}(\annot{X_0}{\slabs_0} \rightarrow \rho[\annot{X_1}{\slabs_1},\ldots,\annot{X_n}{\slabs_n}], \annot{X}{\slabs}) \isdef & ~\left\{\begin{array}{ll}
  1 \text{, if } \annot{X}{\slabs} \in \set{\annot{X_1}{\slabs_1}, \ldots, \annot{X_n}{\slabs_n}} \\
  0 \text{, otherwise} 
  \end{array}\right.
  \\
  & \text{for all } \annot{X}{\slabs} \in \widehat{\nonterm} \text{ and } \annot{X_0}{\slabs_0} \rightarrow \rho[\annot{X_1}{\slabs_1}, \ldots, \annot{X_n}{\slabs_n}] \in \widehat{\rules}
  \\
  \weightof{\anet}(\annot{X}{\slabs} \rightarrow \restrict{\slabs'}{}(\annot{X_1}{\slabs_1}), q) \isdef & \sum_{\asrc\in\slabs_1\setminus\slabs'} \initmarkof{\ptypeof{\asrc}}(q)
  \text{, for all } q \in \placeof{\ptypes}
  \\
  \amark_{0}(x) \isdef & \left\{\begin{array}{ll}
  1 \text{, if } x = S \\
  0 \text{, otherwise} 
  \end{array}\right. \text{, for all } x \in \placeof{\anet}
\end{align*}
\ifLongVersion\else
\end{textAtEnd}
\fi
This construction is quite intuitive: by reinterpreting each rule of
$\widehat{\grammar}$ as a transition of $\initof{\widehat{\grammar}}$,
we get a PN whose firing sequences mimick derivations of
$\widehat{\grammar}$ in which the rules/transitions of
$\widehat{\grammar}$ and $\initof{\widehat{\grammar}}$ are applied in
the same order. More precisely, a partial derivation having $k$
occurrences of the nonterminal variable $X$, when reinterpreted as a
firing sequence, will lead to a marking in which there are $k$ tokens
in the place represented by $X$. Assume that $\annot{X}{\slabs}
\step{\widehat{\grammar}}^* \theta$ is a complete derivation, \ie
$\theta$ is a ground \hrtext{} term. Then, every instance of some
process type $\ptype$ that occurs in the system $(\asys,\sources)
\isdef \theta^\algof{S}$ starts in a vertex labeled by a source label
$\asrc \in \sourcelabels$ such that, either: \begin{enumerate}[(a)]
  \item\label{it1:sources-instances} $\asrc$ is removed by an
    application of $\restrict{\slabs'}{}$ such that $\asrc \not\in
    \slabs'$, then $\asrc$ occurs in the subtree of $\theta$ rooted at
    the particular occurrence of $\restrict{\slabs'}{}$ that removed
    it, or
  \item\label{it2:sources-instances} $\asrc$ is visible at the root of
    $\theta$, \ie $\asrc\in\slabs$.
\end{enumerate}

\begin{figure}[t!]
  \vspace*{-\baselineskip}
  \begin{center}
    \scalebox{0.7}{
    \begin{tikzpicture}
      \node (tokC) [petri-p,draw=black,label=-90:{$tokC$}] {};
      \node (nokC) [petri-p,draw=black,label=-90:{$nokC$}] at ($(tokC) + (1,-0.5)$) {};

      \node (tok) [petri-p,draw=black,label=-90:{$tok$}] at ($(tokC) + (-3,0)$) {};
      \node (nok) [petri-p,draw=black,label=-90:{$nok$}] at ($(tok) + (1,-0.5)$) {};
      \node (work) [petri-p,draw=black,label=-90:{$work$}] at ($(nok) + (1,-0.5)$) {};

      \node (S) [petri-p,draw=black,label=90:{$S$}] at ($(tok) + (0,2.5)$) {};
      \node (C) [petri-p,draw=black,label=90:{$C$}] at ($(S) + (2,0)$) {};
      \node[petri-tok] at (S) {};

      \node[petri-t2,draw=black,fill=black](S/C) at ($(S)!0.5!(C)$) {};
      \node[petri-t,draw=black,fill=black](C/Co) at ($(C) + (1,-1.5)$) {};
      \node[petri-t,draw=black,fill=black](C/C+Pr) at ($(C) + (0,-1.5)$) {};

      \draw[->,thick,line width=1pt] (S) edge (S/C);
      \draw[->,thick,line width=1pt] (S/C) edge (C);
      \draw[->,thick,line width=1pt] (S/C) edge[out=-30,in=90] (nok);
      \draw[->,thick,line width=1pt] (C) edge[bend right=30] (C/C+Pr);
      \draw[->,thick,line width=1pt] (C/C+Pr) edge[bend right=30] (C);
      \draw[->,thick,line width=1pt] (C/C+Pr) edge[out=-90,in=60] (nok);
      \draw[->,thick,line width=1pt] (C) edge[out=-30,in=90] (C/Co);
      \draw[->,thick,line width=1pt] (C/Co) edge (tokC);
      \draw[->,thick,line width=1pt] (C/Co) edge[out=-90,in=30] (nok);
    \end{tikzpicture}}
    \hspace*{12mm}
    \scalebox{0.7}{
    \begin{tikzpicture}
      \node (tok) [petri-p,draw=black,label=-90:{$tok$}] {};
      \node (nok) [petri-p,draw=black,label=-90:{$nok$}] at ($(tok) + (1,-0.5)$) {};
      \node (work) [petri-p,draw=black,label=-90:{$work$}] at ($(nok) + (1,-0.5)$) {};

      \node (tokC) [petri-p,draw=black,label=-90:{$tokC$}] at ($(tok) + (-2,0)$) {};
      \node (nokC) [petri-p,draw=black,label=-90:{$nokC$}] at ($(tokC) + (1,-0.5)$) {};

      \node (S) [petri-p,draw=black,label=90:{$S$}] at ($(tokC) + (0,2.5)$) {};
      \node (Z) [petri-p,draw=black,label=90:{$Z$}] at ($(S) + (2.5,0)$) {};
      \node[petri-tok] at (S) {};

      \node[petri-t2,draw=black,fill=black](S/Z) at ($(S)!0.5!(Z)$) {};
      \node[petri-t,draw=black,fill=black](Z/Co) at ($(Z) + (1,-1.5)$) {};
      \node[petri-t,draw=black,fill=black](Z/Z+Pr) at ($(Z) + (0,-1.5)$) {};

      \draw[->,thick,line width=1pt] (S) edge (S/Z);
      \draw[->,thick,line width=1pt] (S/Z) edge (Z);
      \draw[->,thick,line width=1pt] (S/Z) edge[out=-130,in=90] (tokC);
      \draw[->,thick,line width=1pt] (Z) edge[bend right=30] (Z/Z+Pr);
      \draw[->,thick,line width=1pt] (Z/Z+Pr) edge[bend right=30] (Z);
      \draw[->,thick,line width=1pt] (Z/Z+Pr) edge[out=-90,in=120] (nok);
      \draw[->,thick,line width=1pt] (Z) edge[out=-30,in=90] (Z/Co);
      \draw[->,thick,line width=1pt] (Z/Co) edge[out=-90,in=60] (nok);
    \end{tikzpicture}}
  \end{center}
  \vspace*{-\baselineskip}
  \caption{ $\initof{\widehat{\grammar}_{Chain}}$ (left) and
    $\initof{\widehat{\grammar}_{Star}}$ (right), showing how to
    initialize the marking of stars and chains generated by the
    grammars presented in \exref{ex:parameterized-grammars}. }
  \label{fig:initof-example}
  \vspace*{-\baselineskip}
\end{figure}

When processing the rules of the form $\annot{X}{\slabs} \rightarrow
\restrict{\slabs'}{}(\annot{X_1}{\slabs_1})$ and $\rightarrow
\annot{X}{\slabs}$ in the construction of
$\initof{\widehat{\grammar}}$, we add an outgoing edge to each place
$q$ that is initially marked in $\ptypeof{\asrc}$, for some $\asrc \in
\slabs_1 \setminus \slabs'$ or $\asrc \in \slabs$. Then, for every
instance of the process type that occurs in the system, its initial
marking will be added to $q$ by a firing sequence of
$\initof{\widehat{\grammar}}$. \ifLongVersion
\begin{example}
  \figref{fig:initof-example} shows the PNs obtained by this
  construction applied to $\widehat{\grammar}_{Chain}$ and
  $\widehat{\grammar}_{Star}$, \ie the annotated versions of the two
  grammars from \exref{ex:parameterized-grammars}. Intuitively, a
  chain system generated by the grammar $\grammar_\mathit{Chain}$, see
  \figref{fig:proc-typ1} (b) top, has $1$ instance of the process type
  $\mathit{Cont}$ and $n \geq 2$ instances of the process type
  $\mathit{Proc}$. This means that, initially, there is $1$ token on
  the place $\mathit{tokC}$ and $n\geq2$ tokens in $\mathit{nok}$, as
  in \figref{fig:initof-example} (a) Similarly, the initial marking of
  a star system, see \figref{fig:proc-typ1} (b) bottom, generated by
  the grammar $\grammar_\mathit{Star}$ has $1$ token in
  $\mathit{tokC}$ and $n\geq1$ tokens in $\mathit{nok}$, as in
  \figref{fig:initof-example} (b).
  
  We illustrate the construction of
  $\initof{\widehat{\grammar}_\mathit{Chain}}$ by explaining how the
  transitions labeled (1) and (2) in \figref{fig:initof-example} have
  been introduced: \begin{compactenum}[(1)]
    \item corresponds to the rule $\to C^{\set{\sigma_1}}$: the final
      system has one visible source label $\sigma_1$, whose process
      type is $\ptypeof{\sigma_1} = Proc$. When this rule is applied,
      it creates one instance of $Proc$, hence the outgoing edge to
      $nok$. Since this rule is, its corresponding transition has an
      incoming edge from the start place $S$ of
      $\initof{\widehat{\grammar}_\mathit{Chain}}$ and produces an
      instance of $C^\set{\sigma_1}$, hence the outgoing edge to $C$.
    \item corresponds to the rule $C^\set{\sigma_1} \to
      \restrict{\set{\sigma_1}}{} \rename{\sigma_1 \leftrightarrow
        \sigma_2}{} (C^\set{\sigma_1} \pop{}
      (rel,get)_{(\sigma_1,\sigma_2)})$: it consumes one occurrence of
      $C^\set{\sigma_1}$ and produces another, hence having both an
      incoming and outgoing edge to $C$.  The $\restrict{}{}$
      operation hides a $\sigma_2$-source, where
      $\ptypeof{\sigma_2}=\mathit{Proc}$, thus each application of
      this rule is responsible for one additional instance of the
      process type $Proc$.
  \end{compactenum}
\end{example}
\else We refer to \figref{fig:initof-example} for examples of
initialization PNs obtained by this construction applied to
$\widehat{\grammar}_{Chain}$ and $\widehat{\grammar}_{Star}$, \ie the
annotated versions of the two grammars from
\exref{ex:parameterized-grammars}. \fi For a PN $\amarkednet$ and a
set of places $\mathcal{Q} \subseteq \placeof{\amarkednet}$, we denote
by:
\begin{align}\label{eq:zero-reach}
\zreach{\mathcal{Q}}{\amarkednet} \isdef
\{\amark\in\reach{\amarkednet} \mid \amark(q)=0
\text{, for all } q \in \mathcal{Q}\}
\end{align}
the set of reachable markings having zero tokens in a place from
$\mathcal{Q}$. For a set $\mathcal{M}$ of markings and a set
$\mathcal{Q}$ of places, we denote by
$\proj{\mathcal{M}}{\mathcal{Q}}$ the set of restrictions of each
$\amark\in\mathcal{M}$ to the places in $\mathcal{Q}$. The relation
between the set of folded PNs described by an annotated grammar
$\widehat{\grammar}$ and the PN $\initof{\widehat{\grammar}}$ is
formally captured below:

\begin{lemmaE}[][category=proofs]\label{lemma:initial-markings}
  Let $\widehat{\grammar}=(\widehat{\nonterm},\widehat{\rules})$ be an
  annotated grammar. Then, we have:
  \begin{align*}
  \set{\initmarkof{\amarkednet} \mid (\amarkednet,\sources) \in \alangof{}{\absof{\algof{B}}}{\widehat{\grammar}}}
  = \proj{\zreach{\set{S} \uplus \widehat{\nonterm}}{\initof{\widehat{\grammar}}}}{\placeof{\ptypes}}
  \end{align*}
\end{lemmaE}
\begin{proofE}
  Let $\initof{\widehat{\grammar}} \isdef (\anet,\amark_0)$ in the
  rest of this proof. 
  
  \noindent ``$\subseteq$'' Let $(\amarkednet,\sources) \in
  \alangof{}{\absof{\algof{B}}}{\widehat{\grammar}}$ be a folded PN
  and $\amark\isdef\initmarkof{\amarkednet}$ be its inital marking. By
  the definition of the $\absof{\algof{B}}$ algebra, there exists an
  open behavior $(\overline{\amarkednet},\overline{\sources}) \in
  \alangof{}{\algof{B}}{\widehat{\grammar}}$ such that
  $\foldof{\overline{\amarkednet},\overline{\sources}} =
  (\amarkednet,\sources)$. Let $\annot{X}{\slabs}
  \step{\widehat{\grammar}}^* \theta$ be a complete derivation,
  starting with an axiom $\rightarrow \annot{X}{\slabs} \in
  \widehat{\rules}$ and ending with a ground term $\theta$, such that
  $\theta^\algof{B} = (\overline{\amarkednet},\overline{\sources})$.
  Let $\overline{\amark} \isdef \initmarkof{\overline{\amarkednet}}$
  be the initial marking of $\overline{\amarkednet}$. Note that
  $\overline{\amark}(q,v) = 1$ if $q$ is initially marked in its
  process type and $\overline{\amark}(q,v) = 0$ otherwise, for all
  $(q,v) \in \placeof{\overline{\amarkednet}}$. Then $\amark(q) =
  \sum_{(q,v)\in\placeof{\overline{\amarkednet}}}
  \overline{\amark}(q,v)$, for all $q \in \placeof{\ptypes}$ (we
  recall that the $\foldrel{\overline{\sources}}$-equivalence classes
  are denoted by places from the process types). Let $\amark_{0}
  \fire{\vec{t}} \amark'$ be the firing sequence of
  $\initof{\widehat{\grammar}}$ that mimicks the derivation
  $\annot{X}{\slabs} \step{\widehat{\grammar}}^* \theta$. By the
  definition of $\initof{\widehat{\grammar}}$ we have $\amark' \in
  \reach{\initof{\widehat{\grammar}}}$ and $\amark'(x)=0$, for all $x
  \in \set{S} \uplus \widehat{\nonterm}$. It remains to prove that
  $\amark'$ and $\amark$ agree over $\placeof{\ptypes}$. Let $q \in
  \placeof{\ptypes}$ be a place and distinguish the following
  cases: \begin{itemize}
  \item If $q$ is not initially marked in its process type then
    $\amark(q)=0$. But $\amark'(q)=0$ follows from the definition of
    $\initof{\widehat{\grammar}}$, because the only places from
    $\placeof{\ptypes}$ having an incoming edge in
    $\initof{\widehat{\grammar}}$ are the places from
    $\placeof{\ptypes}$ that are initially marked in their respective
    process types.
  \item Else, $q$ is initially marked in its process type and
    $\amark(q)$ is the number of instances of that process type in the
    system $\theta^\algof{S}$. Then, $\amark'(q)=\amark(q)$ by the argument made at
    points (\ref{it1:sources-instances}) and
    (\ref{it2:sources-instances}) above.
  \end{itemize}

  \noindent ``$\supseteq$'' Let $\amark\in
  \reach{\initof{\widehat{\grammar}}}$ be a marking such that
  $\amark(x)=0$, for all $x \in \set{S} \uplus \widehat{\nonterm}$.
  Since each firing sequence of $\initof{\widehat{\grammar}}$ starting
  from $\amark_0$ corresponds to a derivation of $\widehat{\grammar}$
  starting from an axiom $\rightarrow \annot{X}{\slabs} \in
  \widehat{\rules}$, the firing sequence $\amark_0 \fire{\vec{t}}
  \amark$ corresponds to a complete derivation $\annot{X}{\slabs}
  \step{\widehat{\grammar}}^* \theta$, because $\amark(x)=0$, for all
  $x \in \set{S}\uplus\widehat{\nonterm}$, by the definition of
  $\initof{\widehat{\grammar}}$. Let $(\overline{\amarkednet},
  \overline{\sources}) = \theta^\algof{B}$ be the behavior of the
  system built by this derivation and $\overline{\amark} \isdef
  \initmarkof{\overline{\amarkednet}}$ be its initial marking.  Let
  $(\amarkednet,\sources) \isdef \foldof{\overline{\amarkednet},
    \overline{\sources}} = \theta^{\absof{\algof{B}}}$ be the folded
  PN of this behavior and $\amark' \isdef \initmarkof{\amarkednet}$ be
  its initial marking,
  \ie $\amark'(q)=\sum_{(q,v)\in\placeof{\overline{\amarkednet}}} \overline{\amark}(q,v)$,
  for all $q \in \placeof{\ptypes}$.
  We prove that $\amark$ and $\amark'$ agree over $\placeof{\ptypes}$,
  by distinguishing the cases below: \begin{itemize}
    \item If $q$ is not initially marked in its process type then
      $\amark'(q)=0$. In this case $\amark(q)=0$ because there are no
      incoming edges to $q$ in $\initmarkof{\widehat{\grammar}}$, by the
      definition of the latter. 
    \item Else, $q$ is initially marked in its process types and
      $\amark'(q)$ is the number of instances of that process type in
      the system $\theta^\algof{S}$. Then, $\amark(q)=\amark'(q')$, by
      the argument made at points (\ref{it1:sources-instances}) and
      (\ref{it2:sources-instances}) above. \qed
  \end{itemize}
  \qed
\end{proofE}
\begin{proofSketch}
  Read the above as: the initial markings of nets of $\alangof{}{\absof{\algof{B}}}{\widehat{\grammar}}$
  are exactly the reachable markings of $\initof{\widehat{\grammar}}$ once there are no
  more tokens in places representing nonterminals.
  The proof stems from the earlier observation that if partial derivations with
  $k$ instances of $X \in \widehat{\nonterm}$ match with firing sequences
  that put $k$ tokens in place $X$,
  then complete derivations must have zero tokens in every place that represents
  a nonterminal.
  \qed
\end{proofSketch}

\subsection{Soundness}

We now have all the elements to describe our counting abstraction
method and prove its soundness, \ie if the reachability
(resp. coverability) problem has a negative answer for the
abstraction, then the concrete reachability (resp. coverability)
problem has a negative answer. 

For each grammar $\grammar_i=(\nonterm_i,\rules_i)$ defined at
(\autoref{eq:filtering}), that corresponds to the (open) net
$(\anet_i,\sources_i)$, for $i\in\interv{1}{n}$, we define the PN
$\foldpn{\grammar_i} \isdef (\amarkednet_{~i}, \sources_i)$, where:
\begin{align*}
  \placeof{\amarkednet_{~i}} \isdef & ~\placeof{\initof{\widehat{\grammar}_i}} \cup \placeof{\anet_i} \hspace*{3mm}
  \transof{\amarkednet_{~i}} \isdef ~\transof{\initof{\widehat{\grammar}_i}} \uplus \transof{\anet_i} \hspace*{3mm}
  \weightof{\amarkednet_{~i}} \isdef ~\weightof{\initof{\widehat{\grammar}_i}} \cup \weightof{\anet_i} \hspace*{3mm}
  \initmarkof{\amarkednet_{~i}} \isdef ~\initmarkof{\initof{\widehat{\grammar}_i}}
\end{align*}
Note that $\zreach{\placeof{\initof{\widehat{\grammar}_i}} \setminus
  \placeof{\anet_i}}{\foldpn{\grammar_i}}$ is the set of markings of
$\foldpn{\grammar_i}$ that can be reached \emph{after} the full
generation of its initial marking, \ie from those markings that have
no more tokens in any of the places of
$\initof{\widehat{\grammar}_i}$, excepted the initially marked places
of $\placeof{\ptypes}$.

Our verification method for the grammar-based parameterized
reachability and coverability problems
(\defref{def:grammar-parameterized-verif}) relies on the construction
of a finite number of PNs $\foldpn{\grammar_1}, \ldots,
\foldpn{\grammar_n}$ from a given grammar $\grammar$ that describes a
set of systems. \ifLongVersion The relation between the reachability (resp. cover)
set of the original parameterized system and its finitary abstraction
is captured by the following lemma: \fi

\ifLongVersion\else
\begin{textAtEnd}[category=proofs]
\fi
\begin{lemma}\label{lemma:soundness}
  For each grammar $\grammar$ such that
  $\alangof{}{\finabsof{\algof{B}}}{\grammar} =
  \set{(\anet_1,\sources_1), \ldots, (\anet_n, \sources_n)}$, we have:
  \begin{align}
    \label{eq:reach}
    \bigcup_{(\amarkednet,\sources)\in\alangof{}{\algof{B}}{\grammar}} \hspace*{-5mm} \reach{\amarkednet}_{/\foldrel{\sources}} \subseteq &
    ~\reach{\alangof{}{\absof{\algof{B}}}{\grammar}} = \bigcup_{i=1}^n \proj{\zreach{\placeof{\foldpn{\grammar_i}} \setminus \placeof{\ptypes}}{\foldpn{\grammar_i}}}{\placeof{\ptypes}} \\[-1mm]
    \label{eq:cover}
    \bigcup_{(\amarkednet,\sources)\in\alangof{}{\algof{B}}{\grammar}} \hspace*{-5mm} \cover{\amarkednet}_{/\foldrel{\sources}} \subseteq &
    ~\cover{\alangof{}{\absof{\algof{B}}}{\grammar}} = \bigcup_{i=1}^n \proj{\cover{\foldpn{\grammar_i}}}{\placeof{\ptypes}}
  \end{align}
  where $\grammar_i$ is a grammar such that
  $\alangof{}{\finabsof{\algof{B}}}{\grammar_i} =
  \set{(\anet_i,\sources_i)}$, for all $i \in \interv{1}{n}$.
\end{lemma}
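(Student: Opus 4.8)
The plan is to obtain the two left inclusions from \lemref{lemma:quotient-soundness} applied instance by instance, and each of the two equalities by first rewriting the middle term with the Filtering Theorem, \lemref{lemma:annotation} and \lemref{lemma:initial-markings}, and then showing that every firing sequence of $\foldpn{\grammar_i}$ decomposes into a ``generate-then-run'' shape. Throughout I read $\reach{\alangof{}{\absof{\algof{B}}}{\grammar}}$ (resp.\ $\cover{\alangof{}{\absof{\algof{B}}}{\grammar}}$) as $\bigcup_{(\amarkednet,\overline{\sources})\in\alangof{}{\absof{\algof{B}}}{\grammar}}\proj{\reach{\amarkednet}}{\placeof{\ptypes}}$ (resp.\ with $\cover{}$), where $\proj{\cdot}{\placeof{\ptypes}}$ sums, for each process-type place $q$, the tokens sitting in all copies of $q$ inside a folded net, consistently with \defref{def:grammar-parameterized-verif}.

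\emph{Inclusions.} Fix $(\amarkednet,\overline{\sources})\in\alangof{}{\algof{B}}{\grammar}$. By \autoref{prop:beh-fold}, $\foldof{\amarkednet,\overline{\sources}}=(\amarkednet_{~/\foldrel{\sources}},\overline{\sources}_{~/\foldrel{\sources}})\in\alangof{}{\absof{\algof{B}}}{\grammar}$, and \lemref{lemma:quotient-soundness} yields $\reach{\amarkednet}_{/\foldrel{\sources}}\subseteq\reach{\amarkednet_{~/\foldrel{\sources}}}$ and $\cover{\amarkednet}_{/\foldrel{\sources}}\subseteq\cover{\amarkednet_{~/\foldrel{\sources}}}$. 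Taking the union over all of $\alangof{}{\algof{B}}{\grammar}$ and projecting to $\placeof{\ptypes}$ gives the left ``$\subseteq$'' of \autoref{eq:reach} and \autoref{eq:cover}.

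\emph{Rewriting the middle terms.} By the Filtering Theorem (\autoref{eq:filtering}) we have $\alangof{}{\absof{\algof{B}}}{\grammar}=\bigcup_{i=1}^n\alangof{}{\absof{\algof{B}}}{\grammar_i}$, and by \lemref{lemma:annotation} each $\alangof{}{\absof{\algof{B}}}{\grammar_i}=\alangof{}{\absof{\algof{B}}}{\widehat{\grammar}_i}$ is exactly the set of PNs with underlying open net $(\anet_i,\sources_i)$ and initial markings ranging over $M_i\isdef\proj{\zreach{\set{S}\uplus\widehat{\nonterm}_i}{\initof{\widehat{\grammar}_i}}}{\placeof{\ptypes}}$, by \lemref{lemma:initial-markings}. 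Hence the middle terms equal $\bigcup_{i=1}^n\bigcup_{\amark_0\in M_i}\proj{\reach{(\anet_i,\amark_0)}}{\placeof{\ptypes}}$, resp.\ with $\cover{}$, and it remains to match, for each $i$, the corresponding inner union with $\proj{\zreach{\placeof{\foldpn{\grammar_i}}\setminus\placeof{\ptypes}}{\foldpn{\grammar_i}}}{\placeof{\ptypes}}$, resp.\ with $\proj{\cover{\foldpn{\grammar_i}}}{\placeof{\ptypes}}$.

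\emph{The generate-then-run decomposition, and the main obstacle.} Here I use that $\foldpn{\grammar_i}$ glues $\initof{\widehat{\grammar}_i}$ and $\anet_i$ along the places $\placeof{\ptypes}$ with disjoint transition sets, starting from the single token on $S$. No transition of $\anet_i$ touches $\set{S}\uplus\widehat{\nonterm}_i$, and every transition of $\initof{\widehat{\grammar}_i}$ only ever produces (never consumes) in $\placeof{\ptypes}$; so in any firing sequence of $\foldpn{\grammar_i}$ the $\initof{\widehat{\grammar}_i}$-transitions can be pulled to the front without changing the final marking and without disabling any later $\anet_i$-transition, since the reordering only increases the $\placeof{\ptypes}$-marking at each intermediate step. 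When the final marking is zero outside $\placeof{\ptypes}$, in particular $\set{S}\uplus\widehat{\nonterm}_i$ is empty after the prefix, so the prefix corresponds to a \emph{complete} derivation of $\widehat{\grammar}_i$; by the construction of $\initof{\widehat{\grammar}_i}$ together with \lemref{lemma:initial-markings} the marking after the prefix restricts on $\placeof{\ptypes}$ to some $\amark_0\in M_i$, and the $\anet_i$-suffix then realises an arbitrary element of $\reach{(\anet_i,\amark_0)}$, giving ``$\subseteq$'' of \autoref{eq:reach}; ``$\supseteq$'' follows by concatenating a generating run for any $\amark_0\in M_i$ with any run of $(\anet_i,\amark_0)$. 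For \autoref{eq:cover}, ``$\supseteq$'' is the same concatenation, and for ``$\subseteq$'' I additionally use that $\cover{}$ is downward closed and $\initof{\widehat{\grammar}_i}$ is monotone on $\placeof{\ptypes}$: after reordering, complete the generation prefix (w.l.o.g.\ every nonterminal of $\grammar_i$ is productive, the others and the rules mentioning them being deletable without changing the language), which only adds tokens in $\placeof{\ptypes}$ and hence keeps the subsequent $\anet_i$-run fireable with a pointwise-larger outcome, so the originally covered marking lies below an element of $\reach{(\anet_i,\amark_0)}$ for some $\amark_0\in M_i$. I expect this decomposition to be the delicate part: making the reordering and monotonicity arguments precise, and threading the $\placeof{\ptypes}$-projection (which collapses the source copies of each place) consistently through the correspondence, supplied by \lemref{lemma:initial-markings}, between firing sequences of $\initof{\widehat{\grammar}_i}$ that empty the derivation places and complete derivations of $\widehat{\grammar}_i$; the remaining ingredients (quotient soundness, filtering, annotation) are used essentially as black boxes.
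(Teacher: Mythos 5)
Your proposal is correct and takes essentially the same route as the paper: the left inclusions via \lemref{lemma:quotient-soundness} and \autoref{prop:beh-fold}, the rewriting of the middle terms via the filtered grammars, \lemref{lemma:annotation} and \lemref{lemma:initial-markings}, and the right equalities via the same generate-then-run decomposition of firing sequences of $\foldpn{\grammar_i}$ (the paper asserts the reordering by calling the two transition sets ``disjoint and independent'' and completes the generation prefix by ``moving the tokens from $\set{S}\uplus\widehat{\nonterm}_i$ into $\placeof{\ptypes}$'', which is precisely your monotone-reordering and productivity argument made explicit). The only differences are bookkeeping: you spell out for \autoref{eq:reach} what the paper dispatches with ``by the definition of $\foldpn{\grammar_i}$'', and your projection reading of the middle terms is consistent with how the lemma is applied in \thmref{thm:soundness}.
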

\begin{proof}
  (\ref{eq:reach}) For the first inclusion, we compute:
  \begin{align*}
    \bigcup_{(\amarkednet,\overline{\sources})\in\alangof{}{\algof{B}}{\grammar}}\reach{\amarkednet}_{/\foldrel{\sources}} \subseteq
    & ~\bigcup_{(\amarkednet,\overline{\sources})\in\alangof{}{\algof{B}}{\grammar}}\reach{\amarkednet_{~/\foldrel{\sources}}} \text{, by \lemref{lemma:quotient-soundness}} \\
    = & ~\bigcup_{(\amarkednet,\overline{\sources})\in\alangof{}{\algof{B}}{\grammar}}\reach{\foldof{\amarkednet,\overline{\sources}}} \\
    = & ~\reach{\foldof{\alangof{}{\algof{B}}{\grammar}}} \\
    = & ~\reach{\alangof{}{\absof{\algof{B}}}{\grammar}} \text{, by the definition of $\absof{\algof{B}}$}
  \end{align*}
  For the second equality, we compute:
  \begin{align*}
    \alangof{}{\absof{\algof{B}}}{\grammar} = & ~\bigcup_{i=1}^n \alangof{}{\absof{\algof{B}}}{\grammar_i} =
    ~\bigcup_{i=1}^n \alangof{}{\absof{\algof{B}}}{\widehat{\grammar}_i} \text{, by \lemref{lemma:annotation}} \\
    = & \bigcup_{i=1}^n \set{((\anet_i,\amark_0),\sources_i) \mid \amark_0 \in \proj{\big(\zreach{\placeof{\foldpn{\grammar_i}} \setminus \placeof{\ptypes}}{\initof{\widehat{\grammar}_i}}\big)}{\placeof{\ptypes}}}
    \text{, by \lemref{lemma:initial-markings}}
  \end{align*}
  The second equality follows from the above, by the definition of
  $\foldpn{\grammar_i}$.

  \vspace*{\baselineskip}
  \noindent(\ref{eq:cover}) The first inclusion is proved along the
  same lines as the first inclusion of point (\ref{eq:reach}). We
  prove the second equality:
  \[\cover{\alangof{}{\absof{\algof{B}}}{\grammar}} = \bigcup_{i=1}^n \proj{\left(\cover{\foldpn{\grammar_i}}\right)}{\placeof{\ptypes}}\]
  \noindent``$\subseteq$'' Let $\amark \in
  \cover{\alangof{}{\absof{\algof{B}}}{\grammar}}$ be a marking. Then
  there exists a PN $((\anet_i,\amark_0),\sources_i) \in
  \alangof{}{\absof{\algof{B}}}{\grammar}$ having a firing sequence
  $\amark_0 \fire{\vec{t}} \amark'$ such that $\amark \leq \amark'$,
  for some $i \in \interv{1}{n}$. By \lemref{lemma:initial-markings},
  we have $\amark_0 \in
  \proj{\big(\zreach{\placeof{\foldpn{\grammar_i}} \setminus
      \placeof{\ptypes}}{\initof{\widehat{\grammar}_i}}\big)}{\placeof{\ptypes}}$,
  hence $\initof{\widehat{\grammar}_i}$ has a firing sequence
  $\initmarkof{\initof{\widehat{\grammar}_i}} \fire{\vec{t}_0}
  \amark'_0$, for some marking $\amark'_0$ of
  $\initof{\widehat{\grammar}_i}$ that agrees with $\amark_0$ over
  $\placeof{\ptypes}$. Then,
  $\initmarkof{\foldpn{\grammar_i}} = \initmarkof{\initof{\widehat{\grammar}_i}} \fire{\vec{t}_0}
  \amark'_0 \fire{\vec{t}} \amark''$ is a firing sequence of
  $\foldpn{\grammar_i}$, for some marking $\amark''$ of
  $\foldpn{\grammar_i}$ that agrees with $\amark$ over
  $\placeof{\ptypes}$, thus
  $\amark\in\proj{\left(\cover{\foldpn{\grammar_i}}\right)}{\placeof{\ptypes}}$.

  \vspace*{\baselineskip}
  \noindent``$\supseteq$'' Let
  $\amark\in\proj{\left(\cover{\foldpn{\grammar_i}}\right)}{\placeof{\ptypes}}$
  be a marking and $\amark' \in \cover{\foldpn{\grammar_i}}$ be the
  extension of $\amark$ to $\placeof{\foldpn{\grammar_i}}$ that
  witnesses the membership, for some $i \in \interv{1}{n}$. Then there
  exists a firing sequence $\initmarkof{\foldpn{\grammar_i}}
  \fire{\vec{t}} \amark''$ of $\foldpn{\grammar_i}$ such that $\amark'
  \leq \amark''$. W.l.o.g. we consider that $\vec{t} =
  \vec{t}';\vec{t}''$, where $\vec{t}'$ and $\vec{t}''$ contain
  transitions from $\initof{\widehat{\grammar}_i}$ and $\anet_i$,
  respectively. Note that, since these sets of transitions are
  disjoint and independent, any firing sequence of
  $\foldpn{\grammar_i}$ can be rearranged in this way. Hence
  $\initmarkof{\foldpn{\grammar_i}} \fire{\vec{t}'} \amark_0
  \fire{\vec{t}''} \amark''$ is a firing sequence of
  $\foldpn{\grammar_i}$, for some marking $\amark_0 \in
  \reach{\foldpn{\grammar_i}}$. By the definition of
  $\initmarkof{\widehat{\grammar}_i}$, there exists a marking
  $\amark'_0\in\zreach{\placeof{\foldpn{\grammar_i}}
    \setminus \placeof{\ptypes}}{\initof{\widehat{\grammar}_i}}$ such
  that $\amark_0 \leq \amark'_0$. Indeed, $\amark'_0$ can be obtained
  from $\amark_0$ by moving the tokens from $\set{S} \uplus
  \widehat{\nonterm}_i$ into $\placeof{\ptypes}$, thus obtaining an
  initial marking for $\anet_i$, extended with the places $\set{S}
  \uplus \widehat{\nonterm}_i$ all having zero tokens.  By
  \lemref{lemma:initial-markings}, we obtain
  $((\anet_i,\proj{\amark'_0}{\placeof{\ptypes}}),\sources_i) \in
  \alangof{}{\absof{\algof{B}}}{\widehat{\grammar}_i}$, hence
  $((\anet_i,\proj{\amark'_0}{\placeof{\ptypes}}),\sources_i) \in
  \alangof{}{\absof{\algof{B}}}{\grammar_i}$, by
  \lemref{lemma:annotation}. By the monotonicity of firing sequences,
  we obtain that $\amark'_0 \fire{\vec{t}''} \amark'''$ is a firing
  sequence of $\foldpn{\grammar_i}$, for some marking $\amark''' \in
  \reach{\foldpn{\grammar_i}}$ such that $\amark'' \leq \amark'''$.
  By the definition of $\foldpn{\grammar_i}$, we have that
  $\proj{\amark'_0}{\placeof{\ptypes}} \fire{\vec{t}''}
  \proj{\amark'''}{\placeof{\ptypes}}$ is a firing sequence of
  $(\anet_i,\proj{\amark'_0}{\placeof{\ptypes}})$.  Then, we obtain
  $\amark' \leq \amark'''$ by transitivity,
  \ie $\proj{\amark'}{\placeof{\ptypes}} \in \cover{\anet_i,\amark'_0}$
  for some $i\in\interv{1}{n}$, thus
  $\amark\in\cover{\alangof{}{\absof{\algof{B}}}{\grammar}}$. \qed
\end{proof}
\begin{proofSketch}
  Due to the transitivity of reachability,
  we straightforwardly apply in succession the equalities and inclusions provided by
  \lemref{lemma:quotient-soundness}, \lemref{lemma:annotation}, \lemref{lemma:initial-markings}. 
  \qed
\end{proofSketch}
\ifLongVersion\else
\end{textAtEnd}
\fi

The soundness of the method is formally captured below:

\begin{theoremE}[][category=proofs]\label{thm:soundness}
  Let $\grammar$ be an \hrtext{} grammar such that
  $\alangof{}{\finabsof{\algof{B}}}{\grammar}=\set{(\anet_1,\sources_1),
    \ldots, (\anet_n,\sources_n)}$, $\mathcal{Q} \subseteq
  \placeof{\ptypes}$ a set of places, $\amark : \mathcal{Q}
  \rightarrow \nat$ a mapping. Then, one can effectively build
  grammars $\grammar_1,\ldots,\grammar_n$ such that
  $\alangof{}{\finabsof{\algof{B}}}{\grammar_i}=\set{(\anet_i,\sources_i)}$,
  for all $i\in\interv{1}{n}$ and:
  \begin{enumerate}
  \item\label{it1:thm:soundness}
    $\paramreach{\grammar}{\mathcal{Q}}{\amark}$ has a negative answer
    if
    $\amark\not\in\bigcup_{i=1}^n\proj{\zreach{\placeof{\foldpn{\grammar_i}}
        \setminus
        \placeof{\ptypes}}{\foldpn{\grammar_i}}}{\mathcal{Q}}$.
  \item\label{it2:thm:soundness}
    $\paramcover{\grammar}{\mathcal{Q}}{\amark}$ has a negative answer
    if
    $\amark\not\in\bigcup_{i=1}^n\proj{\cover{\foldpn{\grammar_i}}}{\mathcal{Q}}$.
  \end{enumerate}
\end{theoremE}
\begin{proofE}
  The effective construction of $\grammar_1,\ldots,\grammar_n$ follows
  from \propref{prop:effective-finite-abstraction} and the
  effectiveness of the Fitering Theorem (\thmref{thm:filtering}).

  \vspace*{\baselineskip}
  \noindent(\ref{it1:thm:soundness}) We prove the contrapositive: $\paramreach{\grammar}{\mathcal{Q}}{\amark}$ has a positive answer
  \begin{align*}
    \iff & ~\exists \asys \in \alangof{}{\algof{S}}{\grammar} ~\exists \overline{\amark} \in \reach{\behof{\asys}} ~\forall q \in \mathcal{Q} ~.~
    \hspace*{-5mm} \sum_{v \in \vertof{\asys} \mid q \in \placeof{\vlabof{\asys}(v)}} \hspace*{-5mm} \overline{\amark}(q) = \amark(q)
    \text{, by definition of $\mathsf{Reach}$} \\
    \iff & ~\exists (\amarkednet,\sources) \in \alangof{}{\algof{B}}{\grammar} ~\exists \overline{\amark} \in \reach{\amarkednet} ~\forall q \in \mathcal{Q} ~.~
    \hspace*{-5mm} \sum_{v \in \vertof{\asys} \mid q \in \placeof{\vlabof{\asys}(v)}} \hspace*{-5mm} \overline{\amark}(q) = \amark(q)
    \text{, by \propref{prop:sys-beh}} \\
    \iff & ~\exists (\amarkednet,\sources) \in \alangof{}{\algof{B}}{\grammar} ~\exists \amark' \in \reach{\amarkednet}_{/\foldrel{\sources}} ~.~ \proj{\amark'}{\mathcal{Q}} = \amark \\
    \Longrightarrow & ~\exists (\amarkednet,\sources) \in \alangof{}{\algof{B}}{\grammar} ~\exists \amark' \in \reach{\amarkednet_{~/\foldrel{\sources}}} ~.~ \proj{\amark'}{\mathcal{Q}} = \amark
    \text{, by \lemref{lemma:soundness}, \autoref{eq:reach}} \\
    \iff & \amark \in \proj{\reach{\alangof{}{\absof{\algof{B}}}{\grammar}}}{\mathcal{Q}} \text{, by \propref{prop:beh-fold}} \\
    \iff & \amark \in \bigcup_{i=1}^n \proj{\left(\zreach{\placeof{\foldpn{\grammar_i}} \setminus \placeof{\ptypes}}{\foldpn{\grammar_i}}\right)}{\placeof{\ptypes}}
    \text{, by \lemref{lemma:soundness}, \autoref{eq:reach}}
  \end{align*}
  
  \vspace*{\baselineskip}
  \noindent(\ref{it2:thm:soundness}) Along the same lines as
  (\ref{it1:thm:soundness}), using \lemref{lemma:soundness},
  \autoref{eq:cover} instead of \autoref{eq:reach}. \qed
\end{proofE}
\noindent Note that, if $\paramreach{\grammar}{\mathcal{Q}}{\amark}$
(\resp $\paramcover{\grammar}{\mathcal{Q}}{\amark}$) has a negative
answer, then each instance the parameterized system described by
$\grammar$ (\ie the set of systems $\alangof{}{\algof{S}}{\grammar}$)
is safe with respect to the property encoded by the marking $\amark$,
\ie does not reach (\resp cover) the marking $\amark$ over the set of
places $\mathcal{Q}$. For instance, mutual exclusion (only one process
of a certain type in a certain state) is naturally encoded as a
coverability problem.


\newcommand{\tcount}[2]{\#_{#1}(#2)}
\newcommand{\fromcount}[2]{\tcount{{#1}\to}{#2}}
\newcommand{\tocount}[2]{\tcount{\to{#1}}{#2}}

\section{A Decidable Fragment}
\label{sec:pebble-passing}

We have shown that the parameterized coverability problem is
undecidable, for systems with fairly simple network topologies
(chains) and unrestricted process types
(\thmref{thm:undecidability}). We refine this result by proving that
only restricting the process types, but not the topology of the
network, suffices to recover decidability. Albeit based on a simple
communication pattern (\ie passing a pebble from one node to a
neighbour having no pebble), our decidable fragment is non-trivial: we
found it to be in \twoexptime, with a \pspace-hard lower bound.
 

\subsection{Pebble-Passing Systems}

The class of pebble-passing systems (\ppstext) is defined by restricting
the process types and interactions of a system, as in
\figref{fig:send-recv}. We give the formal definition below:

\begin{definition}\label{def:pebble-passing-systems}
  Let $\ptypes_\pps$ be a set of process types, where
  $\placeof{\ptype} = \set{q^\ptype_\bot, q^\ptype_\top}$ and
  $\transof{\ptype} = \obstransof{\ptype} = \set{\send, \recv}$, such
  that $\prepost{\send} = (q^\ptype_\top, q^\ptype_\bot)$ and
  $\prepost{\recv} = (q^\ptype_\bot, q^\ptype_\top)$, for each $\ptype
  \in \ptypes_\pps$. Let
  $\ealpha_\pps\isdef\set{(\send,\recv),(\recv,\send)}$ be a set of
  edge labels. A system $\asys=(\verts,\edges,\vlab)$ over
  $\ptypes_\pps$ and $\ealpha_\pps$ is said to be
  \emph{pebble-passing}.
\end{definition}
Intuitively, a token in $q^\ptype_\top$ (\ie $\amark(q^\ptype_\top) =
1$) represents the ownership of a ressource, called \emph{pebble}, and
a token in $q^\ptype_\bot$ is the absence of a pebble, called
\emph{hole}. Since each process type is automata-like (\ie has a token
in exactly one place), each node of the system can have either a
pebble or a hole, in all the reachable markings of its behavior
(\autoref{def:behavior}). An edge $(v, (\send, \recv), v')$
(\resp $(v, (\recv, \send), v')$) will be denoted $v \to v'$
(\resp $v' \to v$). Intuitively, firing an interaction $v \to v'$
moves a pebble from $v$ to $v'$ and simultaneously moves a hole from
$v'$ to $v$. Thus each transition preserves the total numbers of
pebbles and holes in the system, respectively. 


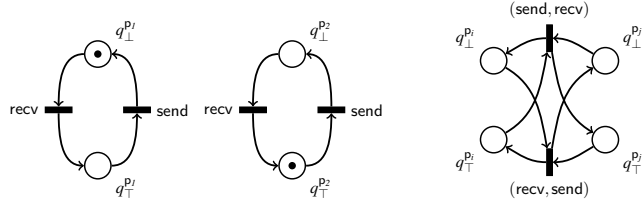
\begin{figure}[t!]
  \vspace*{-\baselineskip}
    \begin{center}
      \scalebox{0.7}{
        \begin{tikzpicture}[node distance=1.5cm]
            \tikzstyle{every state}=[inner sep=3pt,minimum size=20pt]
            \node (q0)[petri-p,draw=black,label=30:{$\mathit{q^{\ptype_1}_\bot}$}]{};
            \node (send)[petri-t,draw=black,fill=black,below right of=q0,xshift=-1em,label=0:{$\send$}]{};
            \node (recv)[petri-t,draw=black,fill=black,below left of=q0,xshift=1em,label=180:{$\recv$}]{};
            \node (q1)[petri-p,draw=black,below left of=send,xshift=1em,label=-30:{$\mathit{q^{\ptype_1}_\top}$}]{};
            \node (tok)[petri-tok] at (q0) {};

            \path (q0) edge [->,thick,line width=1pt,out=180,in=90] (recv);
            \path (recv) edge [->,thick,line width=1pt,out=-90,in=180] (q1);
            \path (q1) edge [->,thick,line width=1pt,out=0,in=-90] (send);
            \path (send) edge [->,thick,line width=1pt,out=90,in=0] (q0);
        \end{tikzpicture}
        \begin{tikzpicture}[node distance=1.5cm]
            \tikzstyle{every state}=[inner sep=3pt,minimum size=20pt]
            \node (q0)[petri-p,draw=black,label=30:{$\mathit{q^{\ptype_2}_\bot}$}]{};
            \node (send)[petri-t,draw=black,fill=black,below right of=q0,xshift=-1em,label=0:{$\send$}]{};
            \node (recv)[petri-t,draw=black,fill=black,below left of=q0,xshift=1em,label=180:{$\recv$}]{};
            \node (q1)[petri-p,draw=black,below left of=send,xshift=1em,label=-30:{$\mathit{q^{\ptype_2}_\top}$}]{};
            \node (tok)[petri-tok] at (q1) {};

            \path (q0) edge [->,thick,line width=1pt,out=180,in=90] (recv);
            \path (recv) edge [->,thick,line width=1pt,out=-90,in=180] (q1);
            \path (q1) edge [->,thick,line width=1pt,out=0,in=-90] (send);
            \path (send) edge [->,thick,line width=1pt,out=90,in=0] (q0);
        \end{tikzpicture}
      }
        \qquad
      \scalebox{0.7}{
        \begin{tikzpicture}[node distance=1.5cm]
            \tikzstyle{every state}=[inner sep=3pt,minimum size=20pt]
            \node (q0p1)[petri-p,draw=black,label=150:{$\mathit{q^{\ptype_i}_\bot}$}]{};
            \node (q1p1)[petri-p,draw=black,below of=q0p1,label=-150:{$\mathit{q^{\ptype_i}_\top}$}]{};
            \node (move12)[petri-t2,draw=black,fill=black,above right of=q0p1,yshift=-2em,label=90:{$(\send,\recv)$}]{};
            \node (move21)[petri-t2,draw=black,fill=black,below right of=q1p1,yshift=2em,label=-90:{$(\recv,\send)$}]{};

            \node (q0p2)[petri-p,draw=black,below right of=move12,yshift=2em,label=30:{$\mathit{q^{\ptype_j}_\bot}$}]{};
            \node (q1p2)[petri-p,draw=black,below of=q0p2,label=-30:{$\mathit{q^{\ptype_j}_\top}$}]{};

            \path (q0p1) edge [->,thick,line width=1pt,out=-30,in=100] (move21);
            \path (q1p2) edge [->,thick,line width=1pt,out=-150,in=0] (move21);
            \path (move21) edge [->,thick,line width=1pt,out=80,in=-150] (q0p2);
            \path (move21) edge [->,thick,line width=1pt,out=180,in=-30] (q1p1);

            \path (q1p1) edge [->,thick,line width=1pt,out=30,in=-100] (move12);
            \path (q0p2) edge [->,thick,line width=1pt,out=150,in=0] (move12);
            \path (move12) edge [->,thick,line width=1pt,out=-80,in=150] (q1p2);
            \path (move12) edge [->,thick,line width=1pt,out=180,in=30] (q0p1);
        \end{tikzpicture}
      }
    \end{center}
    \vspace{-2em}
    \caption{Two process types (left), and two kinds of interactions (right),
    which all other process types and interactions in our restriction have the same shape as.}
    \label{fig:send-recv}
    \vspace*{-\baselineskip}   
\end{figure}


Pebble-passing systems have very strict constraints on their process
types and interactions, but no constraints on the set of network
topologies, other than that it is definable by an \hrtext{} grammar
written with constants of the form $(\send,\recv)_{\asrc_1,\asrc_2}$
or $(\recv,\send)_{\asrc_1,\asrc_2}$, for some source labels
$\asrc_1,\asrc_2 \in \sourcelabels$, where $\sourcelabels$ is the set
of source labels that may occur in a grammar. We denote by $\hr_\pps$
the signature of these grammars. The rest of this section is concerned
with the proof of the following theorem:

\begin{theorem}\label{thm:pebble-passing}
  The $\paramcover{}{}{}$ problem for grammars written using the
  $\hr_\pps$ signature is in \twoexptime\ and \pspace-hard.
\end{theorem}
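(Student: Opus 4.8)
The plan is to establish the two bounds separately. For the \twoexptime{} upper bound, I would first exploit the conservation law of pebble-passing systems: every interaction $v \to v'$ moves a pebble forward along the edge and a hole backward, so in every reachable marking of $\behof{\asys}$ each vertex carries exactly one pebble or one hole, and the reachable markings are precisely the \emph{pebble configurations} --- the subsets of $\vertof{\asys}$ obtainable from the initial pebble set by a sequence of such moves. The key technical step is a \emph{reachability characterization}: a pebble configuration is reachable iff there is a non-negative integer flow $\flows$ on the directed edges of $\asys$ whose divergence at each vertex equals (initial occupation) $-$ (target occupation), together with a scheduling side-condition guaranteeing that the routed paths can be traversed in an order that never pushes a pebble into an already-occupied vertex (this is where directedness makes the analysis subtler than undirected sliding puzzles). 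Such flow data is exactly what the flow algebra $\algof{F}$ of \figref{fig:alg-zoo} records, and since $\kerof{\eta}$ is an \hrtext{} congruence (\lemref{lemma:cong-sys-flow}) the flow abstraction is compositional along the grammar.

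Second, because we only need to decide coverability of a fixed $\amark : \mathcal{Q} \to \nat$, I would work in a target-bounded variant $\algof{F}_\amark$ of the flow algebra, obtained by capping every count it tracks (pebble/hole totals and flow values across source edges) at $\|\amark\|+1$, and argue by a small-model/flow-normalisation argument (removing circulations, bounding the number of pebbles that must move in terms of $\|\amark\|$ and the tree-width) that a witnessing instance together with a witnessing flow can always be taken within this bound. The domain of $\algof{F}_\amark$ is then finite and its \hrtext{} operations effectively computable (exactly as for $\finabsof{\algof{B}}$ in \propref{prop:effective-finite-abstraction}), so $\alangof{}{\algof{F}_\amark}{\grammar}$ is finite and computable by a Kleene iteration of the rule functional over tuples of subsets of the domain indexed by the nonterminals of $\grammar$ --- equivalently, by testing emptiness of a tree automaton over derivation trees --- and $\paramcover{\grammar}{\mathcal{Q}}{\amark}$ holds iff $\alangof{}{\algof{F}_\amark}{\grammar}$ contains a flow summary witnessing coverability of $\amark$. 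A source interface of width $k$ admits $\leq 2^{O(k)}$ pebble/hole assignments, and a summary stores for each of them a bounded table of reachable flow/occupation data, so $\algof{F}_\amark$ has size doubly exponential in $\|\amark\|$ and $k$; the fixpoint computation runs in time polynomial in the size of $\grammar$ and of $\algof{F}_\amark$, which yields the \twoexptime{} bound.

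For the \pspace{} lower bound I would reduce from the acceptance problem of a linear-bounded automaton $M$ on a word $w$ with $|w| = n$ (\pspace-complete). I would build an $\hr_\pps$ grammar $\grammar_{M,w}$ generating a chain of $n$ constant-size \emph{cell gadgets} followed by a \emph{control gadget}, where the pattern of pebbles in a fixed-size block of vertices encodes a tape symbol, respectively the control state and head position, and the $(\send,\recv)$/$(\recv,\send)$ interactions are wired so that the only way to move a pebble onto a distinguished \emph{accept} vertex is to play out, cell by cell and step by step, a legal run of $M$ on $w$: any interaction sequence that does not correspond to a valid transition, or that tries to address a cell beyond the chain, reaches a deadlock from which the accept vertex is unreachable. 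Taking $\amark$ to request one pebble on the accept vertex, $\paramcover{\grammar_{M,w}}{\set{\mathit{acc}}}{\amark}$ holds iff $M$ accepts $w$, and since extra cells in a longer generated instance remain inert, the answer is robust to the choice of instance.

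The main obstacle is the reachability characterization of the first step and its target-bounded, compositional refinement: pebble motion on a \emph{directed} graph is genuinely more delicate than classical $15$-puzzle reconfiguration, so pinning down the correct scheduling side-condition of the flow characterization, proving it is preserved by all \hrtext{} operations, and showing it can be summarised with only doubly exponentially much information is where the real work lies. By contrast, the \pspace-hardness reduction is largely careful engineering, the one point to verify being that the very restrictive two-state $\set{\send,\recv}$ process types of \defref{def:pebble-passing-systems} still let one enforce a sequential simulation; the gap between \pspace{} and \twoexptime{} is left open.
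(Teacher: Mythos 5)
Your \twoexptime{} plan is essentially the paper's route: a finite flow algebra $\algof{F}$ whose kernel is an \hrtext{} congruence (\lemref{lemma:cong-sys-flow}), with all tracked quantities capped by the unary size of the target, evaluated over the grammar by a Kleene iteration plus the Filtering Theorem. Two corrections to how you frame its core, which you explicitly defer as ``where the real work lies'': first, the paper shows that your feared scheduling side-condition is vacuous --- \lemref{lem:fireable-subsequence} proves that whenever $\fpof{\amark}+\fpof{\vec{e}}$ is a valid $0/1$ footprint, some subsequence of $\vec{e}$ with the \emph{same} footprint is fireable from $\amark$, so validity of the resulting footprint is the entire characterization and no extra ordering condition on the routed paths is needed; second, the small-model bound (\lemref{lem:cover-soft-cap}) caps the number of times each vertex is crossed by $K=\sum_{q\in\mathcal{Q}}\mtarget(q)$ alone --- tree-width does not enter the bound on flows, it only enters the size of $\algof{F}$ through the number of source labels. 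So on the upper bound you have the right architecture, but the two lemmas that carry the proof are missing rather than replaced by something different.

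The genuine gap is in your \pspace-hardness argument. Simulating an LBA step requires jointly reading the control state and the scanned symbol and then writing, but in a pebble-passing system every transition is a single directed edge enabled purely by ``pebble at the tail, hole at the head'': no interaction can be conditioned on the occupancy of a third vertex, and the coverability semantics is angelic, so nothing stops a run from moving your cell-encoding pebbles out of turn. Your claim that every non-faithful interaction sequence deadlocks before the accept vertex is exactly the property that the two-place process types of \defref{def:pebble-passing-systems} give you no mechanism to enforce, and you provide no gadget achieving it. The paper sidesteps multi-pebble coordination entirely: it reduces from the emptiness problem for \twonfa{}, where a \emph{single} pebble moving on a grid (word position $\times$ control state) encodes the whole configuration, legal moves are just the edges present in the generated graph, and the grammar's parameterization supplies the existential quantification over input words required by emptiness. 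Until you exhibit gadgets making illegal moves provably useless --- or switch to a single-pebble encoding of a read-only machine as the paper does --- your lower bound does not go through as sketched.
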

The proof of \thmref{thm:pebble-passing} is organized as
follows. The double exponential upper bound relies on a result showing
that, in order to cover a target marking $\mtarget$ it is sufficient
to consider only firing sequences that cross (\ie move a pebble to and
from) each place at most $K$ times, where $K$ is the size of the unary
encoding of $\mtarget$. Based on this result (\lemref{lem:cover-soft-cap}),
we define a finite algebra $\algof{F}$
(\figref{fig:alg-zoo}), having the property that the coverability
problem reduces to a membership test on the language of the input
grammar in $\algof{F}$. The upper bound follows from the fact that
$\alangof{}{\algof{F}}{\grammar}$ is computable in double exponential
time (see \propref{prop:alg-flows} for a precise estimation). The
lower bound uses a polynomial reduction from the emptiness problem for
$2$-way nondeterministic automata~\cite{2nfa}.


\subsection{Firing Sequences}
\label{sec:pebble-fire-chara}

For the rest of this section, let $\open{\asys}= (\asys,\sources)$ be
a fixed open pebble-passing system, having an underlying system
$\asys=(\verts,\edges,\vlab)$ whose behavior is $\behof{\asys} \isdef
(\anet,\amark_0)$. Below, we introduce an equivalent characterization
of the firing sequences of $\behof{\asys}$.

The \emph{footprint} of a marking $\amark$ is a mapping $\fpof{\amark}
: \verts \to \set{0,1}$ defined as $\fpof{\amark}(v) \isdef
\amark(q^{\vlab(v)}_\top,v)$, for each vertex $v \in \verts$. Note
that $\fpof{\amark}$ evaluates to $1$ on pebble and to $0$ on hole
vertices. We say that a marking footprint $\pi$ is \emph{valid over
  $\mathcal{V}\subseteq\verts$} iff $0 \leq \pi(v) \leq 1$ for each
vertex $v \in \mathcal{V}$, \resp \emph{valid}, when
$\mathcal{V}=\verts$ follows from the context.

Given a subset of states $\mathcal{Q} \subseteq
\placeof{\ptypes}$ and a marking to cover $\mtarget: \mathcal{Q}
\rightarrow\nat$, the coverability problem asks for the existence of a
reachable marking $\amark:\placeof{\anet} \rightarrow\set{0,1}$ such
that, for each process type $\ptype\in\ptypes$ and each place $q \in
\mathcal{Q}$: 
\begin{align}
  \cardof{\set{v\in\vlab^{-1}(\ptype) \mid \fpof{\amark}(v) = 0}} \geq & \mtarget(q^\ptype_\bot) \label{eq1:cover} \\
  \cardof{\set{v\in\vlab^{-1}(\ptype) \mid \fpof{\amark}(v) = 1}} \geq & \mtarget(q^\ptype_\top) \label{eq2:cover}
\end{align}
The footprint $\fpof{v \to v'} : \verts \to \ints$ of an edge $v\to v'
\in \edges$ is defined as $\fpof{v \to v'}(u) \isdef 1$ if $u = v$,
$-1$ if $u = v'$ and $0$, otherwise.
We extend footprints to sequences of edges $\vec{e} \in \edges^*$ as
in $\fpof{\vec{e}} \isdef \sum_{e \in \vec{e}} \tau_e$. Because each
edge $v \to v' \in \edges$ corresponds to the transition that moves a
token from $(q^{\vlab(v)}_\top,v)$ to $(q^{\vlab(v)}_\bot,v)$ (\resp
from $(q^{\vlab(v')}_\bot,v')$ to $(q^{\vlab(v')}_\top,v')$) in the PN
$\behof{\asys}$, we shall abuse notation and write $\behof{v \to v'}$
for the transition corresponding to $v \to v'$ in $\behof{\asys}$ and
$\behof{\vec{e}}$ for the sequence of transitions corresponding to
$\vec{e} \in \edges^*$.


We remark that, for each firing sequence $\amark \fire{\abeh(\vec{e})}
\amark'$, we have $\fpof{\amark'} - \fpof{\amark} =
\fpof{\vec{e}}$. Intuitively, $\fpof{\vec{e}}$ is a witness of the fact
that the effect of firing the transitions $\abeh(\vec{e})$ is to move
pebbles from $\set{v \mid \fpof{\vec{e}}(v) = -1}$
to $\set{v \mid \fpof{\vec{e}}(v) = 1}$;
the vertices from $\set{v \mid \fpof{\vec{e}}(v) = 0}$ may store pebbles in between,
but are ultimately restored to their initial state.

We denote by $\tcount{e}{\vec{e}}$ the number of times the edge $e$
occurs in the sequence $\vec{e}$. We use the shorthands
$\fromcount{u}{\vec{e}} \isdef \sum_{u'\in\verts}\tcount{u \to
  u'}{\vec{e}}$ and $\tocount{u}{\vec{e}} \isdef
\sum_{u'\in\verts}\tcount{u' \to u}{\vec{e}}$. We define the following
partial orders between sequences of edges: $\vec{e'} \preceq \vec{e}
\iffdef \tcount{e}{\vec{e'}} \leq \tcount{e}{\vec{e}}$, for each $e
\in \edges$, and $\vec{e'} \sqsubseteq \vec{e} \iffdef \vec{e'}
\preceq \vec{e} \text{ and } \fpof{\vec{e'}} = \fpof{\vec{e}}$.
The following lemma characterizes the existence of fireable
sub-sequences:

\begin{lemmaE}[][category=proofs] 
  \label{lem:fireable-subsequence}
  For each marking $\amark$ and each sequence of edges
  $\vec{e}$, the following are equivalent: \begin{compactenum}[(i)]
  \item $\fpof{\amark} + \fpof{\vec{e}}$ is a valid marking footprint,
  \item there exists a sequence of edges $\vec{e'} \sqsubseteq
    \vec{e}$ such that $\abeh(\vec{e'})$ is fireable from $\amark$.
  \end{compactenum}
\end{lemmaE}
\begin{proofE}
  $(ii) \Rightarrow (i)$ is easy: if $\abeh(\vec{e'})$ is fireable from $\amark$,
  then there exists some $\amark \fire{\abeh(\vec{e'})} \amark'$.
  We write $\fpof{\amark'} - \fpof{\amark} = \fpof{\vec{e'}}$,
  and the knowledge that $\fpof{\vec{e'}} = \fpof{\vec{e}}$
  gives $\fpof{\amark} + \fpof{\vec{e}} = \fpof{\amark'}$
  which is a marking footprint because $\amark'$ is an automata-like marking.

  For $(i) \Rightarrow (ii)$, we proceed by induction on the length of $\vec{e}$.
  When $\vec{e}$ is empty, an obvious (and in fact the only) candidate for $\vec{e'}$
  is the empty sequence which has signature $\fpof{\epsilon} = 0$ everywhere,
  and is obviously fireable.

  In general when $\vec{e}$ is nonempty, we reinterpret $\vec{e}$ as a
  directed multigraph, where each edge $u \to v$ has multiplicity
  $\tcount{(u \to v)}{\vec{e}}$.  As does any nonempty directed
  multigraph, either it contains an elementary cycle, or it is acyclic and it
  contains a maximal nonempty path.
  \begin{itemize}
    \item In the case of a cycle $\vec{e} \supseteq \vec{c} = u_0 \to
      u_1 \to u_2 \cdots u_{k-1} \to u_0$, we make the observation
      that $\fpof{\vec{c}} = 0$ everywhere. Note that an elementary
      cycle is characterized by every vertex involved having incoming
      and outgoing degree exactly 1 each, so $\fpof{\vec{c}}(u_i) =
      \fpof{\to u_i}(u_i) + \fpof{u_i \to}(u_i) = 1 - 1 = 0$.  We
      deduce that $\vec{e} \setminus \vec{c} \sqsubseteq \vec{e}$,
      where $\vec{e} \setminus \vec{c}$ is naturally the sequence
      obtained by removing from $\vec{e}$ an occurrence of each
      transition from $\vec{c}$. Applying the inductive hypothesis to
      $\vec{e} \setminus \vec{c}$ which is strictly shorter than
      $\vec{e}$ gives $\vec{e'}$ fireable and $\vec{e'} \sqsubseteq
      \vec{e} \setminus \vec{c} \sqsubseteq \vec{e}$ where we conclude
      by transitivity of $\sqsubseteq$.
    \item Otherwise we have a directed acyclic graph,
      and a maximal path $\vec{e} \supseteq \vec{p} = u_0 \to u_1 \to u_2 \cdots u_{k-1} \to u_k$
      of that graph.
      First observe that a path from $u_0$ to $u_k$ has the same footprint
      as the single edge $u_0 \to u_k$ (evaluating to $-1$ on $u_0$, $1$ on $u_k$,
      and all intermediate vertices cancel out).
      Secondly we have assumed by $(i)$ that
      $0 \leq \fpof{\amark_0}(u_0) + \fpof{\vec{p}}(u_0)$ which means $\fpof{\amark_0}(u_0) = 1$,
      and $\fpof{\amark_0}(u_k) + \fpof{\vec{p}}(u_k) \leq 1$ which means $\fpof{\amark_0}(u_k) = 0$.
      The discrete quantity $\fpof{\amark_0}(u_i)$,
      going from 1 at $i=0$ to 0 at $i=k$,
      must for some value $i_0$ in between satisfy simultaneously
      $\fpof{\amark_0}(u_{i_0}) = 1$ and $\fpof{\amark_0}(u_{i_0+1}) = 0$.
      Thus $e_{i_0} = u_{i_0} \to u_{i_0+1}$ occurs in $\vec{e}$ and $\abeh(e_{i_0})$ is fireable.
      This yields $\amark_1$ a marking such that $\fpof{\amark_1} = \fpof{\amark} + \fpof{e_{i_0}}$,
      on which we apply the inductive hypothesis for $\vec{e} \setminus e_{i_0}$
      (at this point we need to show that $\fpof{\amark_1} + \fpof{\vec{e} \setminus e_{i_0}}$ is a marking footprint,
      this comes from $\fpof{\amark_1} + \fpof{\vec{e} \setminus e_{i_0}} = (\fpof{\amark} + \fpof{e_{i_0}}) + (\fpof{\vec{e}} - \fpof{e_{i_0}}) = \fpof{\amark} + \fpof{\vec{e}}$).
      Having thus obtained $\vec{e''} \sqsubseteq \vec{e} \setminus e_{i_0}$
      fireable from $\amark_1$,
      we construct $\vec{e'} \isdef e_{i_0}; \vec{e''} \sqsubseteq \vec{e}$
      which is fireable from $\amark$.
  \end{itemize}
  We thus conclude that $\fpof{\amark} + \fpof{\vec{e}}$ is a valid
  marking signature if and only if there exists some equivalent subsequence of $\vec{e}$
  that is actually fireable from $\amark$.
  \qed
\end{proofE}
\begin{proofSketch}
  By induction on the length of a firing sequence.
  After ensuring that $\vec{e}$ is acyclic, pick any maximal path and use combinatorial arguments
  to show that some transition along that path is fireable.
  \qed
\end{proofSketch}

Using the previous lemma, we prove that, in order to cover a given
marking $\mtarget$, it suffices to consider only those firing
sequences that cross each vertex a bounded number of times, where the
bound is the size of the unary encoding of $\mtarget$.
To that end, we define the \emph{degree} of a sequence $\vec{e} \in \edges^*$
as the maximum number of occurrences of one vertex in the sequence, \ie
$\deg(\vec{e}) \isdef \max\{\fromcount{u}{\vec{e}}, \tocount{u}{\vec{e}} \mid u\in\verts\}$.
From now on, the domain of $\mtarget$ will implicitly be the set $\mathcal{Q}
\subseteq \placeof{\ptypes}$. To simplify the following statement, we
say that $\amark : \placeof{\anet} \rightarrow \nat$ \emph{covers}
$\mtarget$ iff $\sum_{(q,v)\in\placeof{\anet}} \amark(q,v) \geq
\mtarget(q)$, for each $q \in \mathcal{Q}$ and that $\mtarget$ is
\emph{coverable} by $\asys$ iff there exists
$\amark\in\reach{\behof{\asys}}$ that covers $\mtarget$. Since, in a
pebble-passing system, markings can be equated to their footprints, we
say that $\fpof{\amark}$ covers $\mtarget$ whenever $\amark$ and
$\mtarget$ satisfy the conditions (\ref{eq1:cover}) and (\ref{eq2:cover}) above.


\begin{lemmaE}[][category=proofs] \label{lem:cover-soft-cap}
  A marking $\mtarget$ is coverable by $\asys$ iff there exists
  $\vec{e} \in \edges^{*\leq K}$ such that $\fpof{\amark_0} +
  \fpof{\vec{e}}$ covers $\mtarget$, where $K \isdef \sum_{q \in
    \mathcal{Q}} \mtarget(q)$, and $\edges^{*\leq K} \isdef
  \set{\vec{e} \in \edges^* \mid \deg(\vec{e}) \leq K}$.
\end{lemmaE}
\begin{proofE}
  The $(\Leftarrow)$ direction is easy, since the condition that
  $\fpof{\amark_0} + \fpof{\vec{e}}$ covers $\mtarget$
  alone implies that $\mtarget$ is coverable (through \lemref{lem:fireable-subsequence}
  and what it means for a footprint to cover a marking).

  We focus on the proof $(\Rightarrow)$, and assume that $\mtarget$ is coverable by $\asys$.
  In order to prove the existence of a sequence that satisfies the requirement,
  we proceed by as such: we assume that we have a minimal (in terms of length)
  firing sequence $\behof{\vec{e}}$ that covers $\mtarget$ in $\asys$,
  deduce independently that it must satisfy be such that
  $\fpof{\amark_0} + \fpof{\vec{e}}$ covers $\mtarget$
  and assume by contradiction that this firing sequence must
  have degree at least $K+1$, in other words it must
  fire at least $K+1$ times an incoming or outgoing
  (assume without loss of generality that it is incoming, the other proof is identical)
  transition for some vertex $u$.
  We then construct a strictly shorter sequence that also covers
  the marking, thereby contradicting the hypothesis of minimality.

  We first ensure that $\vec{e}$ is acyclic:
  a cycle $\vec{c}$ is characterized by $\fpof{\vec{c}} = 0$ everywhere,
  since in a cycle every vertex involved has as many incoming as outgoing edges occurring.
  Thus $\fpof{\vec{e} \setminus \vec{c}} = \fpof{\vec{e}} - \fpof{\vec{c}} = \fpof{\vec{e}}$:
  if $\fpof{\amark_0} + \fpof{\vec{e}}$ is a marking signature,
  then so is $\fpof{\amark_0} + \fpof{\vec{e} \setminus \vec{c}}$.
  Necessarily if $\behof{\vec{e}}$ is fireable and $\vec{e}$ contains a cycle $\vec{c}$,
  then there is also an equivalent subsequence of $\vec{e} \setminus \vec{c}$
  that is fireable. This would contradict the minimality of $\vec{e}$.

  Now that $\vec{e}$ is acyclic, if there still exists a vertex $u$
  with in-degree at least $K+1$,
  denote by $i_1 < \cdots < i_{K+1}$ a set of indices of $K+1$ occurrences
  in $\vec{e}$ of an edge leading to $u$.
  In that situation, \figref{algo:path-decomposition} will compute a partial assignment
  $p$ of edges to disjoint paths from 1 to $K+1$, such that
  all of these paths go through $u$

  \begin{figure}[h!]
    {\small\begin{algorithmic}[0]
      \STATE \textbf{input}: $\vec{e} = (w_0 \to v_0); (w_1 \to v_1); ...$ a fireable sequence of edges (\ie $\behof{\vec{e}}$ is a firing sequence)
      \STATE \textbf{input}: $u$ a vertex that occurs in $\vec{e}$
      \STATE \textbf{output}: $p : [0, |\vec{e}|[ \to \nat^?$ a decomposition of $\vec{e}$ in paths with distinct endpoints
    \end{algorithmic}
    \begin{algorithmic}[1]
      \STATE $n \gets 0$
      \STATE $p \gets (\_ \mapsto \bot)$
      \COMMENT{how to interpret $p$: if $p(i) = n$ it means that $\vec{e}_i$ is assigned to the $n$'th path}
      \FOR{$|\vec{e}| > i \geq 0$ \DECR}
        \STATE \textbf{invariant}: every $p^{-1}(n')$ for $n' \leq n$ is a path
        \IF{$v_i = u$} \COMMENT{This edge is incoming for $u$, we want one path that goes through it}
          \STATE $p(i) \gets n$
          \STATE $u^- \gets w_i$
          \STATE $u^+ \gets v_i$
          \FOR{$i < j < |\vec{e}|$ \INCR}
            \STATE \textbf{invariant}: $p^{-1}(n)$ is a path from $u^-$ to $u^+$
            \IF{$p(j) = \bot$} \COMMENT{$\vec{e}_i$ is not yet part of a path}
              \IF{$w_j = u^+$} \COMMENT{$u^+$ is the end of the path $p^{-1}(n)$, so $u^+ \to v_j$ can extend it}
                \STATE $p(j) \gets n$
                \STATE $u^+ \gets v_j$ \COMMENT{And of course $v_j$ is the new end of $p^{-1}(n)$}
              \ENDIF
              \IF{$v_j = u^-$} \COMMENT{$u^-$ is the start of the path $p^{-1}(n)$ so $w_j \to u^-$ can extend it}
                \STATE $p(j) \gets n$
                \STATE $u^- \gets w_j$ \COMMENT{And of conrse $w_j$ is the new start of $p^{-1}(n)$}
              \ENDIF
            \ENDIF
          \ENDFOR
          \STATE $n \gets n + 1$ \COMMENT{No other edges can be added to this path, so we handle the next}
        \ENDIF
      \ENDFOR
    \end{algorithmic}}
    \caption{Decomposing a sequence into paths.
      The fact that these paths have distinct endpoints is enforced by the specific shape that they have,
      as interleavings of one positive and one negative path that each independently appear in that order in $\vec{e}$.
    }
    \label{algo:path-decomposition}
  \end{figure}

  We thus have $K+1$ paths.
  Though these paths are not completely ordered,
  they do have a specific shape that guarantees that their endpoints are distinct.
  We call \emph{positive path} an ordered path $(u_0 \to u_1); (u_1 \to u_2); (u_2 \to u_3); \ldots$
  that occurs in that order as a subsequence of $\vec{e}$.
  We call \emph{negative path} an ordered path $(u_1 \to u_0); (u_2 \to u_1); (u_3 \to u_2); \ldots$
  that occurs in that order as a subsequence of $\vec{e}$.
  The paths we are interested in are interleavings of one positive path starting from $u$,
  and one negative path ending in $u$.
  For example, $(w_1 \to u); (u \to v_1); (w_2 \to w_1); (w_3 \to w_2); (v_1 \to v_2); (v_2 \to v_3); (w_4 \to w_3); (v_3 \to v_4); (v_4 \to v_5); (w_5 \to w_4)$
  satisfies this criterion as it is an interleaving
  of the positive path
  $(u \to v_1); (v_1 \to v_2); (v_2 \to v_3); (v_3 \to v_4); (v_4 \to v_5)$
  starting from $u$
  and the negative path
  $(w_1 \to u); (w_2 \to w_1); (w_3 \to w_2); (w_4 \to w_3); (w_5 \to w_4)$
  ending in $u$.

  The reason these kinds of paths are relevant is that in the specific context where they occur
  as an ordered subsequence of a firing sequence of a pebble-passing system,
  the end of a maximal positive path has a pebble in the final marking,
  and the start of a maximal negative path has a hole in the final marking.
  Since it is impossible to move a pebble (\resp hole) to a vertex that already contains one,
  two maximal positive paths that use disjoint sets of edges cannot have the same end,
  and two maximal negative paths that use disjoint sets of edges cannot have the same start.
  By being interleavings of each a maximal positive path and a maximal negative path,
  the $K + 1$ paths we have just created must have pairwise distinct starts and ends.

  Put into equations this means
  each path $\vec{p}_j$ from $u_0^j$ to $u_{k_j}^j$ contains the edge $e_{i_j}$,
  whose endpoints satisfy $u_0^j \ne u_0^{j'}$ and $u_{k_j}^j \ne u_{k_{j'}}^{j'}$ whenever $j \ne j'$,
  and because the paths are maximal we have $\fpof{\amark'}(u_0^j) = 0$
  and $\fpof{\amark'}(u_{k_j}^j) = 1$ for every $j$.
  This last point implies that each sequence $\vec{e} \setminus \vec{p_j}$ is fireable from $\amark_0$:
  $\fpof{\vec{p}_j}$ evaluates to $-1$ on $u_0^j$, $1$ on $u_{k_j}^j$, and 0 everywhere else,
  and thus $0 \leq \fpof{\amark'} - \fpof{\vec{p}_j} \leq 1$.

  A coverability query, as we recall from Equations \eqref{eq1:cover} and \eqref{eq2:cover},
  is simply a requirement of cardinality for sets of the form $\{v \mid \fpof{\amark'}(v) = 1\}$
  or $\{v \mid \fpof{\amark'}(v) = 0\}$ for specific process types.
  A marking $\mtarget$ can be covered by looking at $K$ such vertices.
  The previous construction has given us $K+1$ pairs of one vertex of each of
  these two sets, so there is at least one path where both endpoints are
  irrelevant to the marking,
  in other words one path $p_{j_0}$ from $u_0$ to $u_k$
  where in fact the inequalities
  $\cardof{\set{v \kerof{\vlab} u_0 \mid (\fpof{\amark_0} + \fpof{\vec{e}})(v) = 0}} > \mtarget(q^{\vlab(u_0)}_\bot)$
  and
  $\cardof{\set{v \kerof{\vlab} u_k \mid (\pi_{\amark_0} + \fpof{\vec{e}})(v) = 1}} > \mtarget(q^{\vlab(u_k)}_\top)$
  are strict.
  It follows that a final marking with one fewer pebble in $\vlab(u_k)$ and one fewer hole in $\vlab(u_0)$
  would still cover $\mtarget$.
  One such marking is given by the application of \lemref{lem:fireable-subsequence}
  to $\vec{e} \setminus \vec{p}_{j_0}$ as mentioned above.
  We have thus reached a different marking than $\amark'$ but that nevertheless
  still covers $\mtarget$, and done so in strictly fewer steps.
  This is a contradiction.

  Thus the proof by contradiction ends and we deduce that there must exist
  a sequence that in addition to covering $\mtarget$ also has degree at most $K$.
\end{proofE}

\subsection{Flows}

To check coverability, we consider a finite algebra $\algof{F}$ whose
elements represent the sequences of edges that cross each vertex at
most $K$ times. The domain of $\algof{F}$ is $\universeOf{F} \subseteq
\pow{[0,K]^\sourcelabels \times [0,K]^\sourcelabels \times
  [0,K]^\mathcal{Q}}$. The elements $(f^+,f^-,n) \in \universeOf{F}$
represent sequences of edges, such that $f^+(\asrc)$ (\resp
$f^-(\asrc)$) is the in-degree (\resp out-degree) of the
$\asrc$-source of $\asys$ and $n(q)$ is the number of tokens that end
in $q \in \mathcal{Q}$. Intuitively, a tuple $(f^+,f^-,n)$ witnesses
the existence of a sequence that covers $n$, that can later be
combined with other sequences which compensate its surplus $f^+$ and
deficit $f^-$ on the sources of $\asys$.

Formally, we define the following mappings, where
$\universeOf{S}$ denotes the set of open systems, \ie systems with
sources:
\begin{align*}
  \omega : & ~\edges^{*\leq K} \times
  \verts^\sourcelabels \rightarrow \interv{0}{K}^\sourcelabels \times
  \interv{0}{K}^\sourcelabels \times \interv{0}{K}^\mathcal{Q} \\
  \omega(\vec{e},\sources) = & (f^+,f^-,n) \iffdef
  \left\{\begin{array}{l} f^+(\asrc) =
  \fromcount{\sources(\asrc)}{\vec{e}} \hspace*{8mm} f^-(\asrc) =
  \tocount{\sources(\asrc)}{\vec{e}} \\
  n(q^\ptype_\bot) = \min(\mtarget(q^\ptype_\bot), \cardof{\set{v\in\vlab^{-1}(\ptype)
      \setminus \img{\sources} \mid
      (\fpof{\initmarkof{\ptype}}+\fpof{\vec{e}})(v) = 0}}) \\
  n(q^\ptype_\top) = \min(\mtarget(q^\ptype_\top),
  \cardof{\set{v\in\vlab^{-1}(\ptype) \setminus \img{\sources} \mid
      (\fpof{\initmarkof{\ptype}}+\fpof{\vec{e}})(v) = 1}})
  \end{array}\right.
  \\[1mm]
  \eta : & ~\universeOf{S} \rightarrow \pow{\interv{0}{K}^\sourcelabels \times \interv{0}{K}^\sourcelabels \times \interv{0}{K}^\mathcal{Q}} \\
  \eta(\asys,\sources) \isdef & \set{\omega(\vec{e},\sources) \mid \vec{e} \in \edgeof{\asys}^{*\leq K},~ \fpof{\initmarkof{\behof{\asys}}}+\fpof{\vec{e}} \text{ is a valid marking footprint over } \vertof{\asys}\setminus\img{\sources}}
\end{align*}


We define the finite algebra of \emph{flows} $\algof{F}$ using
\propref{prop:cong-homo}, where $\eta$ is taken to be the
homomorphism between $\algof{S}$ and $\algof{F}$:

\begin{lemmaE}[][category=proofs]\label{lemma:cong-sys-flow}
  $\kerof{\eta}$ is a \hrtext{} congruence. 
\end{lemmaE}
\begin{proofE} Excluding the trivial case of an edge,
  we perform a case analysis.
  \begin{itemize}
    \item \underline{$\rename{\alpha}{\algof{S}}$}: we assume $(\asys,\sources)
      \kerof{\eta} (\asys',\sources')$.  We pick any
      $\omega(\vec{e},\sources\circ\alpha^{-1})$ in
      $\eta(\rename{\alpha}{\algof{S}}(\asys,\sources))$, and prove
      that it also occurs in
      $\eta(\rename{\alpha}{\algof{S}}(\asys',\sources'))$.  First we
      notice that $\rename{\alpha}{\algof{S}}(\asys,\sources)$ has the
      same edges as $(\asys,\sources)$, and thus the same set of
      sequences of transitions.  This means that
      $\omega(\vec{e},\sources) \in \eta(\asys,\sources)$.  The
      initial equivalence gives the existence of a matching
      $\omega(\vec{e},\sources) = \omega(\vec{e'},\sources') \in
      \eta(\asys',\sources')$, and thus a candidate is
      $\omega(\vec{e'},\sources'\circ\alpha^{-1})$.  Since
      $\img{\sources}$ and $\img{\sources'}$ are unchanged by a
      composition on the right by $\alpha^{-1}$, we maintain
      $\omega(\vec{e'},\sources'\circ\alpha^{-1}) =
      \omega(\vec{e},\sources\circ\alpha^{-1})$ and thus
      $\eta(\rename{\alpha}{\algof{S}}(\asys,\sources)) \subseteq
      \eta(\rename{\alpha}{\algof{S}}(\asys',\sources'))$.  The
      symmetry of the definition makes this actually an equality.
    \item \underline{$\restrict{\tau}{\algof{S}}$}:
      we assume $(\asys,\sources) \kerof{\eta} (\asys',\sources')$.
      Once again $\restrict{\tau}{\algof{S}}(\asys,\sources)$ has the same edges as $(\asys,\sources)$,
      so this time the difficulty comes from the fact that $\img{\sources}$ shrinks,
      not from the set of sequences to consider.
      The key observation to make is that we always have
      $\fpof{\vec{e}}(v) = \tocount{v}{\vec{e}} - \fromcount{v}{\vec{e}}$,
      and this holds in particular for $v = \sources(\asrc)$,
      where additionally $\fpof{\vec{e}}(\sources(\asrc)) = f^+(\asrc) - f^-(\asrc)$.
      What this means is that when the equivalence $\kerof{\eta}$
      imposes $f^+ = f'^+$ and $f^- = f'^-$, it also guarantees that
      $\fpof{\initmarkof{\behof{\asys}}} + \fpof{\vec{e}}$
      is a valid marking footprint on $\sources(\asrc)$ if and only if
      $\fpof{\initmarkof{\behof{\asys'}}} + \fpof{\vec{e'}}$
      is a valid marking footprint on $\sources'(\asrc)$.
      The same reasoning goes for the more specific criterion that
      $(\fpof{\initmarkof{\ptype}} + \fpof{\vec{e}})(\sources(v)) = 0 \text{ or } 1$,
      from which we get that after restriction the two tuples are still in accordance
      and thus $\eta(\restrict{\tau}{\algof{S}}(\asys,\sources))
        = \eta(\restrict{\tau}{\algof{S}}(\asys',\sources'))$.
    \item \underline{$\pop{\algof{S}}$}: take $(\asys_1,\sources_1)
      \kerof{\eta} (\asys'_1,\sources'_1)$ and $(\asys_2,\sources_2)
      \kerof{\eta} (\asys'_2,\sources'_2)$.  We write
      $(\asys,\sources) = (\asys_1,\sources_1) \pop{\algof{S}}
      (\asys_2,\sources_2)$, and similarly for $(\asys',\sources')$.
      We consider a sequence $\vec{e} \in \edgeof{\asys}^{*\leq K}$
      and aim to prove that one can find $\vec{e'} \in
      \edgeof{\asys'}^{*\leq K}$ that results in the same tuple
      through $\omega$. By definition of $\pop{\algof{S}}$ we have
      $\edgeof{\asys} = \edgeof{\asys_1} \uplus \edgeof{\asys_2}$.  We
      can thus partition $\vec{e}$ into two subsequences: $\vec{e}_i$
      holds all edges from $\asys_i$ ($i = 1,2$).  To these two
      subsequences naturally correspond through the equivalence
      $\kerof{\eta}$ two sequences of $\edgeof{\asys'_1}$ and
      $\edgeof{\asys'_2}$ respectively, which we write $\vec{e}'_1$
      and $\vec{e}'_2$.  Since $\vec{e}' \isdef \vec{e}'_1 ;
      \vec{e}'_2$ is a good candidate for our solution, we take a
      moment to write the relationships between the various tuples
      involved:
      \begin{itemize}
        \item $f^+ = f^+_1 + f^+_2$, $f'^+ = f'^+_1 + f'^+_2$,
          $f^- = f^-_1 + f^-_2$, and $f'^- = f'^-_1 + f'^-_2$:
          these hold because they are defined from elementary
          $\tcount{e}{\vec{e}}$ which due to the partition of edges
          satisfy additive properties;
        \item $n(q) = n_1(q) + n_2(q)$, and $n'(q) = n'_1(q) + n'_2(q)$:
          these hold because during a composition the sets of vertices that
          are not sources are disjoint in the result;
        \item and naturally $\fpof{\vec{e}} = \fpof{\vec{e}_1} + \fpof{\vec{e}_2}$,
          and $\fpof{\vec{e}'} = \fpof{\vec{e}'_1} + \fpof{\vec{e}'_2}$
          because once again the sets of edges are disjoint each time.
      \end{itemize}
      In the output of a composition, the set $\vertof{\asys}\setminus\img{\sources}$
      of vertices that are not sources is a disjoint union of the vertices that were not sources
      from the two graphs that were composed, thus the condition
      ``is a valid marking footprint over $\vertof{\asys}\setminus\img{\sources}$''
      is easily preserved.
      Thus $\eta(\asys,\sources) = \eta(\asys',\sources')$.
  \end{itemize}
  We conclude that $\kerof{\eta}$ is a \hrtext{} congruence.
  \qed
\end{proofE}

The remainder of the proof for the upper bound from
\thmref{thm:pebble-passing} relies on the fact that the interpretation of
the \hrtext{} signature in $\algof{F}$ is effectively computable. The
size of the grammar $\grammar$ is the total number of occurrences of a
nonterminal or function symbol in a rule from $\grammar$, denoted as
$\sizeof{\grammar}$.

\newcommand{\closedstp}[3]{\mathsf{closed}^{#1}_{#2}({#3})}

\begin{propositionE}[][category=proofs]\label{prop:alg-flows}
  The size of each element $f \in \universeOf{F}$ is
  $2^{\bigO((\cardof{\sourcelabels}+\cardof{\ptypes}) \cdot \log K)}$
  and the function $\aop^\algof{F}(f_1,\ldots,f_n)$ can be computed in
  time $2^{\bigO((\cardof{\sourcelabels}+\cardof{\ptypes}) \cdot \log
    K)}$, for each \hrtext-function symbol $\aop$ of arity $n\geq0$
  and all elements $f_1,\ldots,f_n \in \universeOf{F}$. Moreover, for
  each grammar $\grammar$ using source labels from $\sourcelabels$,
  the language $\alangof{}{\algof{F}}{\grammar}$ is computable in time
  $2^{\sizeof{\grammar} \cdot
    2^{\bigO((\cardof{\sourcelabels}+\cardof{\ptypes}) \cdot \log
      K)}}$.
\end{propositionE}
\begin{proofE}
  We denote by $\closedstp{\sources}{\tau}{\ptype}$ the set of source
  labels such that $\asrc \in \closedstp{\sources}{\tau}{\ptype}$ iff
  $\asrc\in\dom{\sources}$ and $\asrc\not\in\tau$ and the
  $\ptypeof{\asrc} = \ptype$.  We define $\fpof{\initmark}(\ptype)
  \isdef \amark_0(q^\ptype_\top)$. The function $\delta_x$ outputs 1
  on $x$ and $0$ everywhere else (Kronecker delta). For a function $f$
  with domain $\sourcelabels$, we denote by $\proj{f}{\tau}$ the
  restriction of $f$ to the source labels in $\tau \subseteq
  \sourcelabels$. The inference rules below define the operations of
  $\algof{F}$: 
  \begin{prooftree}
    \AxiomC{$
      0 \leq k \leq K
    $}
    \RightLabel{Edge}
    \UnaryInfC{$
      (k \cdot \delta_{\sigma_2}, k \cdot \delta_{\sigma_1}, 0) \in \sgraph{(\send,\recv)}{\asrc_1}{\asrc_2}{\algof{F}}
    $}
  \end{prooftree}
  \begin{prooftree}
    \AxiomC{$
      (f^+,f^-,n) \in F
    $}
    \RightLabel{Rename}
    \UnaryInfC{$
      (f^+\circ\alpha^{-1}, f^-\circ\alpha^{-1}, n)
      \in \rename{\alpha}{\algof{F}} (F)
    $}
  \end{prooftree}
  \begin{prooftree}
    \AxiomC{$
      \begin{array}{c}
        (f^+_1,f^-_1,n_1) \in F_1 \qquad
        (f^+_2,f^-_2,n_2) \in F_2 \\
        f^+_1 + f^+_2 \leq K \qquad
        f^-_1 + f^-_2 \leq K
      \end{array}
    $}
    \RightLabel{Compose}
    \UnaryInfC{$
      (f^+_1 + f^+_2, f^-_1 + f^-_2, \min(\mtarget, n_1 + n_2)) \in F_1 \pop{\algof{F}} F_2
    $}
  \end{prooftree}
  \begin{prooftree}
    \AxiomC{$
      \begin{array}{c}
        (f^+, f^-, n) \in F \\
        \forall\sigma\in\closedstp{\sources}{\tau}{\ptype}.\ f^+(\asrc) - f^-(\asrc) + \fpof{\initmark(\ptypeof{\asrc})} \in \set{0,1} \\
        \forall\ptype.\ d_n(q^\ptype_\bot) \isdef \cardof{\set{\asrc\in\closedstp{\sources}{\tau}{\ptype} \mid f^+(\asrc) - f^-(\asrc) + \pi_{init}(\ptype) = 0}} \\
        \forall\ptype.\ d_n(q^\ptype_\top) \isdef \cardof{\set{\asrc\in\closedstp{\sources}{\tau}{\ptype} \mid f^+(\asrc) - f^-(\asrc) + \pi_{init}(\ptype) = 1}} \\
      \end{array}
    $}
    \RightLabel{Restrict}
    \UnaryInfC{$
      (\proj{f^+}{\tau}, \proj{f^-}{\tau}, \min(\mtarget, n + d_n))
      \in \restrict{\tau}{\algof{F}}(F)
    $}
  \end{prooftree}
  First we prove that these inference rules are compatible with $\eta$
  defined earlier, \ie for any ground \hrtext{} term $\theta$, we have
  $\eta(\theta^\algof{S}) = \theta^\algof{F}$.
  \begin{itemize}
    \item \underline{$(\send,\recv)^\algof{F}_{\asrc_1,\asrc_2}$}:
      the firing sequences of $(\send,\recv)^\algof{S}_{\asrc_1,\asrc_2}$
      are a single edge repeated arbitrarily many times.
      Of these the firing sequences that use each vertex at most $K$ times
      are exactly the $\vec{e}_k$ of footprint
      $\fpof{\vec{e}_k}$ that equals $-k$ on $\sources(\asrc_1)$,
      $+k$ on $\sources(\asrc_2)$, and $0$ everywhere else.
      The set $\verts\setminus\img{\sources}$ is empty and thus $n$ is 0 everywhere.
      This generates exactly the set that the rule Edge produces.
    \item \underline{$\rename{\alpha}{\algof{F}}$}: by applying the
      same renaming to $f^+$ and $f^-$ as we do to the graph, we
      obviously preserve the relationship between the two.
    \item \underline{$\pop{\algof{F}}$}:
      without the constraints of finiteness,
      we would simply take $f^+_1 + f^+_2$, $f^-_1 + f^-_2$, and $n_1 + n_2$.
      The additional constraints are translated exactly from their
      equivalent in the definition of $\eta$:
      we require $f^+_1 + f^+_2 \leq K$ and $f^-_1 + f^-_2 \leq K$
      to conform to the rule that the sequence we consider must belong
      to $\edges^{*\leq K}$ (recall that $f^+$ and $f^-$ are incoming
      and outgoing degrees, so if one of them exceeds $K$ then the overall
      sequence has degree greater than $K$),
      and apply $\min(\mtarget, ...)$ so that $n$ remains bounded by $\mtarget$
      just like how it is defined in $\omega$.
      Since we apply exactly the same constraints to $(f^+,f^-,n)$ here
      as we did in the original definition of $\omega$,
      we obtain the same final set of tuples.
    \item \underline{$\restrict{\tau}{\algof{F}}$}: the key property
      is that $\closedstp{\sources}{\tau}{\ptype} =
      \text{ptype}^{-1}(\ptype) \cap (\dom{\sources} \setminus
      (\dom{\proj{\sources}{\tau}}))$.  This is relevant because it
      implies $(\vlab^{-1}(\ptype) \setminus
      \img{\proj{\sources}{\tau}}) = (\vlab^{-1}(\ptype) \setminus
      \img{\sources}) \uplus
      \sources(\closedstp{\sources}{\tau}{\ptype})$, in which the
      first two terms have the same shape as one that occurs in the
      definition of $n$ for $\omega$, and the third gives the
      definition of $d_n$ in rule Restrict above.  This means that
      given $n$ and $n'$ from tuples in $\eta(\asys,\sources)$ and
      $\eta(\restrict{\tau}{\algof{S}}(\asys,\sources))$, we indeed
      have $n' = n + d_n$.  The constraint $0 \leq f^+(\asrc) -
      f^-(\asrc) + \fpof{\initmarkof{\ptypeof{\asrc}}} \leq 1$ from
      the premiss of the Restrict rule enforces the requirement
      ``$\fpof{\initmarkof{\behof{\asys}}} + \fpof{\vec{e}}$ is a
      valid marking footprint'' for all vertices that were added to
      $\vertof{\asys}\setminus\img{\sources}$ by the restriction.  The
      rest of the definition follows the same principles as for the
      other cases: $f^+(\asrc) - f^-(\asrc)$ has already been
      justified to be a synonym for $\fpof{\vec{e}}(\sources(\asrc))$,
      $f^+$ and $f^-$ must be constrained to their new domain, and $n
      + d_n$ must not exceed $\mtarget$.
  \end{itemize}
  Therefore $\algof{F}$ defined explicitly here is compatible
  with its implicit definition in \lemref{lemma:cong-sys-flow}.

  Since each tuple in an element of $\universeOf{F}$ is of size
  \[\cardof{[0,K]^\sourcelabels \times [0,K]^\sourcelabels \times
    [0,K]^\mathcal{Q}} = (K+1)^{2\cardof{\sourcelabels} +
    2\cardof{\ptypes}}\] each element of $\universeOf{F}$ is of size 
  $2^{\bigO((\cardof{\sourcelabels}+\cardof{\ptypes}) \cdot \log K)}$.
  From the definition of the inference rules, Edge takes
  constant time to apply, Rename and Restrict take linear time, and
  Compose takes quadratic time in the size of their arguments. Then,
  evaluating the interpretation of any \hrtext{} function symbol takes
  $2^{\bigO((\cardof{\sourcelabels}+\cardof{\ptypes}) \cdot \log K)}$
  time. Because the domain of the algebra $\algof{F}$ is
  finite, the language $\alangof{}{\algof{F}}{\grammar}$ can be
  computed by a finite iteration of the Kleene sequence
  $\overrightarrow{\emptyset},
  \sem{\grammar}(\overrightarrow{\emptyset}),\sem{\grammar}(\overrightarrow{\emptyset})
  \cup \sem{\grammar}^2(\overrightarrow{\emptyset}), \ldots$, where
  $\sem{\grammar}$ is the function that maps any valuation of the
  nonterminals in $\grammar$ to the sets obtained by applying the
  rules of $\grammar$ and $\overrightarrow{\emptyset}$ assigns the
  empty set to each such nonterminal. Since each element of
  $\algof{F}$ is of size
  $2^{\bigO((\cardof{\sourcelabels}+\cardof{\ptypes}) \cdot \log K)}$,
  there are at most
  $2^{2^{\bigO((\cardof{\sourcelabels}+\cardof{\ptypes}) \cdot \log
      K)}}$ such elements, hence the Kleene iteration takes at most as
  many steps to reach a fixpoint. Moreover, computing each step of the
  Kleene iteration takes $\sizeof{\grammar} \cdot
  2^{\bigO((\cardof{\sourcelabels}+\cardof{\ptypes}) \cdot \log K)}$
  time, hence the entire language can be computed in time
  $2^{\sizeof{\grammar} \cdot
    2^{\bigO((\cardof{\sourcelabels}+\cardof{\ptypes}) \cdot \log
      K)}}$.\qed
\end{proofE}

\noindent Deciding whether $\mtarget$ is coverable by some instance
$\asys \in \alangof{}{\algof{S}}{\grammar}$, for a given \hrtext{}
grammar $\grammar$, is done by checking
$\restrict{\emptyset}{\algof{F}}(\alangof{}{\algof{F}}{\grammar}) \cap
\set{(0,0,\mtarget)} \stackrel{?}{=} \emptyset$. We apply
$\restrict{\emptyset}{}$ to $\alangof{}{\algof{F}}{\grammar}$ to
ensure that the sequences of edges considered lead to valid marking
footprints on every vertex of a system $(\asys,\sources) \in
\alangof{}{\algof{S}}{\grammar}$, including the sources from
$\img{\sources}$, that were exempt from satisfying this condition (see
the above definition of $\eta$). The latter emptiness check applies
the Filtering Theorem (\thmref{thm:filtering}), leading to an overall
\twoexptime\ upper bound.

The \pspace\ lower bound is obtained by a polynomial reduction from
the \pspace-complete emptiness problem for 2-way nondeterministic
finite automata (\twonfa). The idea of the reduction is to simulate a
run of a \twonfa{} by a grid-like system, such that the horizontal
axis corresponds to the length of the word and the vertical axis to
the number of control states of the automaton. A run of the \twonfa{}
is modeled by an execution of the system that moves a pebble
left/right to the next control state given by the transition relation
of the automaton. \ifLongVersion\else
Formal details can be found in \appref{app:lower-bound}. 
\fi

\begin{textAtEnd}[category=hardness]

In order to demonstrate that our restriction has not made the problem
trivial, we show that it remains \textsf{PSPACE}-hard to decide
coverability.  This is done by reduction from the known
\textsf{PSPACE}-complete problem of deciding the emptiness of the
language of a 2NFA (2-way Nondeterministic Finite Automaton,
\cite{2nfa}), which is essentially a read-only Turing Machine.

\begin{definition}{\textbf{2NFA.}}
  A 2NFA is a tuple $\automata = (Q, A, \delta, q_0, q_f)$,
  where $Q$ denotes the set of states (of which $q_0$ is the initial state
  and $q_f$ is the accepting state)
  and $A$ denotes the alphabet.
  Compared to a regular automaton, the transition function
  $\delta : Q \times (A \uplus \set{\langle,\rangle}) \to \pow{Q \times \set{\leftarrow, \rightarrow}}$
  also dedermines the direction in which the next letter is read
  ($\langle \cdots \rangle$ mark the beginning and end of words).
\end{definition}

\begin{lemma}
  The emptiness problem for a 2NFA reduces to coverability in pebble-passing systems.
  Knowing that emptiness is \textsf{PSPACE}-complete, this implies that
  coverability is \textsf{PSPACE}-complete.
\end{lemma}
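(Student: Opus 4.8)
The plan is to encode an arbitrary 2NFA $\automata = (Q, A, \delta, q_0, q_f)$ over an input word $w = a_1 \cdots a_n$ as a pebble-passing system whose network topology is a grid, generated by an $\hr_\pps$ grammar of size polynomial in $|Q|$ and $|A|$. Since emptiness of a 2NFA is \textsf{PSPACE}-complete, and since we need a \emph{fixed} grammar that works for all $w$, I would actually reduce from the \textsf{PSPACE}-complete problem ``given a 2NFA $\automata$ and $n$ in unary, does $\automata$ accept some word of length $n$?'' — or, more carefully, fix $\automata$ and let the grammar generate the family of grids of width $n$ for all $n$, letting the columns of the grid guess the letters $a_i \in A$. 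The vertical axis of the grid has one row per state $q \in Q$ (plus a row for the endmarkers), the horizontal axis has one column per input position. A single pebble travels through the grid: having the pebble at grid node $(q, i)$ means ``the automaton is in state $q$ reading position $i$''. The interactions of the grid are wired so that moving the pebble from $(q,i)$ to $(q', i\!+\!1)$ is enabled exactly when $(q', \rightarrow) \in \delta(q, a_i)$ for the letter $a_i$ chosen in column $i$, and symmetrically $(q,i) \to (q', i\!-\!1)$ for left moves; the choice of $a_i$ is encoded by which ``layer'' of edges between column $i$ and its neighbours is present, and the grammar picks one layer per column when it appends that column.

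First I would make the grid-generating grammar precise: a nonterminal $X$ produces a ``partial grid'' with a right-hand interface of source labels (one source per row, i.e.\ per state, plus endmarker rows), and the recursive rule glues on one more column — with a nondeterministic choice among $|A|$ rule-variants, one per letter, each wiring the column-to-column edges according to $\delta(\cdot, a)$ — then renames the interface and restricts away the internal sources. Base rules create the leftmost column with the $\langle$ endmarker and a pebble placed at $(q_0, 1)$, and a final rule appends the rightmost column carrying the $\rangle$ endmarker. All edge constants are of the form $(\send,\recv)_{\asrc_1,\asrc_2}$, so the grammar is over $\hr_\pps$ as required. The size of the grammar is $O(|A| \cdot |\delta| + |Q|)$, which is polynomial. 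Next I would fix the target marking $\mtarget$: we ask for coverability of the single place $q^{\ptype_f}_\top$ where $\ptype_f$ is the process type living on the row for the accepting state $q_f$ (give that row its own process type so it is distinguishable), with $\mtarget = 1$. So $K = 1$ in the sense of \lemref{lem:cover-soft-cap}, though that bound is not really needed for the reduction itself — only the problem statement's complexity.

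Then I would prove the two directions of correctness. If $\automata$ accepts some word $w$ of length $n$: take the instance of the grammar that generates the width-$n$ grid with column $i$ wired for the letter $a_i$ of an accepting run; the accepting run of $\automata$ on $w$ is a sequence of configuration changes, each of which, by construction, corresponds to exactly one enabled pebble-passing interaction in the grid, so firing that sequence of interactions moves the pebble from $(q_0,1)$ to $(q_f, j)$ for some $j$, putting a token in $q^{\ptype_f}_\top$, hence $\mtarget$ is covered. Conversely, if some instance covers $\mtarget$: since each process type is automata-like and the grid contains a single pebble (all other rows/nodes start in the hole state $q^\ptype_\bot$), any reachable marking with a token in $q^{\ptype_f}_\top$ is reached by a firing sequence that is exactly a walk of the pebble through the grid; by the wiring, consecutive pebble positions form a valid sequence of 2NFA transitions for the letters chosen in the columns of that instance, so $\automata$ accepts the corresponding word. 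Thus $\mtarget$ is coverable by some instance iff $\automata$'s language is nonempty.

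The main obstacle, and the part requiring the most care, is the wiring of the column-to-column interactions so that a single pebble faithfully simulates a \emph{two-way} head — in particular ensuring that the pebble cannot ``cheat'' by using holes to move in ways $\delta$ does not allow, and that having the pebble bounce left then right does not desynchronise the guessed letter in a column (the letter must be the same each time the head revisits a column; this is automatic since the column's edge-layer is fixed once and for all by the grammar rule that created it, but one must check that pebble-passing's simultaneous hole-in-the-other-direction does not open an unintended move). I expect that, modulo a careful but routine diagram, this goes through; the complexity conclusion (\textsf{PSPACE}-hard, and \twoexptime\ from \propref{prop:alg-flows}) is then immediate.
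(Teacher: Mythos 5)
Your construction follows the same route as the paper's proof: a grid-shaped pebble-passing system whose columns are word positions and whose rows are control states, a single pebble whose position encodes the head configuration, letters guessed column-by-column by the grammar, and the coverability query asking for one token in $q^{\ptype_f}_\top$ for a dedicated process type on the accepting state's row; both correctness directions are argued exactly as in the paper (an accepting run yields a pebble walk, and conversely the single pebble forces any covering firing sequence to be a walk respecting $\delta$ for the letters fixed by the instance).

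The one step that does not work as literally written is the recursion ``single nonterminal $X$, one rule variant per letter $a$, wiring the column-to-column edges according to $\delta(\cdot,a)$''. The edges between columns $i$ and $i{+}1$ depend on \emph{both} adjacent letters: right-moves out of column $i$ are governed by $a_i$, while left-moves out of column $i{+}1$ are governed by $a_{i+1}$. A rule that only knows the letter of the column it is currently appending therefore cannot lay down half of these edges, and once the previous column's sources have been restricted away they cannot be added later. The paper resolves exactly this by annotating nonterminals with letters ($X_a$ for a column carrying letter $a$, $T_{ab}$ for the junction between an $a$-column and a $b$-column), so the boundary letter is remembered across derivation steps. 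An equivalent repair in your setting is to index your nonterminal by the last column's letter, or to introduce per-(state, direction) boundary ``port'' vertices that each column wires internally according to its own letter and that are connected to the neighbouring column letter-independently (the pebble then takes two hops per head move, which does not harm faithfulness since a port's unique successor always holds a hole). With either device the rest of your argument, including the consistency of revisited columns and the \pspace-hardness conclusion, goes through as in the paper.
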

\begin{proof}
  We need three process types: $\ptypes = \set{\ptype_i, \ptype_f, \ptype}$,
  representing respectively the initial, final, and any other state.
  We pick the set of sources $\sourcelabels \isdef Q \times \set{\leftarrow, \rightarrow} \cup {\mathsf{init}}$.
  The grammar $\grammar_\automata$ constructs systems that simulate the execution
  of $\automata$ on an arbitrary word of $\langle A^* \rangle$.

  Given a 2NFA $\automata$, construct the following grammar $\grammar_\automata$:
  \input{figure-2nfa/gram.tex}
  on which we ask the coverability query $\mtarget : \left\{\begin{array}{ll}q^{\ptype_f}_\top & \mapsto 1 \\ \_ & \mapsto 0\end{array}\right.$.
  When the grammar unfolds successively the nonterminals $X_{a_1}, T_{a_1 a_2}, X_{a_2}, T_{a_2 a_3}, X_{a_3}, ...$
  it simulates the behavior of $\automata$ on the word $a_1 a_2 a_3 \cdots$,
  in that if $\automata$ is reading the $i$'th letter while in state $q$
  then the token is currently on place $q_1$ of the process
  which at depth $i$ of unfolding the grammar was labeled by the source $(q, \rightarrow)$.
  Applied to $q_f$ this means that the place $q^{\ptype_f}_1$ can have a token
  exactly when the automaton reaches state $q_f$.
  The grammar ensures that any word in $A^*$ can be generated in this manner,
  and thus that $\mtarget$ is coverable if and only if the language is nonempty.
\end{proof}

\begin{example}
  In \figref{fig:2nfa-example} we show how this construction turns an example automata
  into a pebble-passing system that simulates its execution on a given word.
\end{example}

\input{figure-2nfa/rendered.tex}
\end{textAtEnd}

\newcommand{\toolrepo}{\mycomment{Neven}{REPO}}
\newcommand{\toolname}{ParCoSys}
\newcommand{\toolnameexplanation}{(\textbf{Par}ameterized \textbf{Co}verability)}

\section{Experiments}

We have implemented the counting abstraction method in the prototype
tool \toolname{}
\footnote{The source code is available as anonymized additional
  material.} \toolnameexplanation{}. The input of the tool is a
grammar describing the system and a safety property to be checked. The
output is a finite set of Petri nets that is fed to the LoLA
analyzer~\cite{lola}. Our choice for LoLA was driven by its robustness
and performance, but any Petri net analyzer can be used as back-end,
in principle.

The \texttt{ring} example is a standard token ring, on which we verify
that the process holding the token is unique.  The \texttt{philos}
example, is the dining philosophers problem on a ring, for which we
prove mutual exclusion properties between neighbors.  The
\texttt{consensus} example features processes arranged as a star performing a
2-valued consensus. The \texttt{leader-election} example is a
finite-valued model of a leader election having a pre-determined
winner.


In addition to these standard examples, we considered several mutual
exclusion protocols on stars (\texttt{lock}, \texttt{star}), binary
trees (\texttt{tree-dfs}, \texttt{tree-down}, \texttt{tree-halves}),
and more complex architectures, \eg \texttt{star-ring} is a star whose
points are chained, \texttt{server-loop} is a ring whith a star on
each node and \texttt{tree-nav} is a tree with chained leaves. It is
worth mentioning that \texttt{ring}, \texttt{star},
\texttt{tree-down}, \texttt{tree-nav} happen to be pebble-passing
systems (\secref{sec:pebble-passing}).


\begin{figure}[t!]
  \vspace*{-2\baselineskip}
  \begin{center}
    {\scriptsize\begin{tabular}{|c|c|c|c|c|}
      \hline
      Name & Architecture & Result & Size & Runtime (ms) \\
      \hline
      
        \texttt{ ring }
          & ring
          & $2 / 2$
          & $(\bigcirc9, \square11)\times 16$
          & $(94 + 17) \pm 5$
          \\
      
        \texttt{ star }
          & star
          & $3 / 3$
          & $(\bigcirc12, \square11)\times 4$
          & $(63 + 10) \pm 4$
          \\
      
        \texttt{ lock }
          & star
          & $1 / 3$
          & $(\bigcirc9, \square11)\times 4$
          & $(65 + 7) \pm 4$
          \\
      
        \texttt{ star-ring }
          & chained star
          & $3 / 3$
          & $(\bigcirc17, \square14)\times 2$
          & $(68 + 11) \pm 3$
          \\
      
        \texttt{ tree-dfs }
          & binary tree
          & $2 / 2$
          & $(\bigcirc19, \square16)\times 2$
          & $(61 + 2) \pm 5$
          \\
      
        \texttt{ tree-down }
          & binary tree
          & $1 / 1$
          & $(\bigcirc11, \square12)\times 20$
          & $(140 + 28) \pm 7$
          \\
      
        \texttt{ tree-halves }
          & binary tree
          & $4 / 4$
          & $(\bigcirc26, \square23)\times 8$
          & $(101 + 188) \pm 9$
          \\
      
        \texttt{ tree-nav }
          & chained binary tree
          & $2 / 2$
          & $(\bigcirc12, \square19)\times 12$
          & $(147 + 38) \pm 6$
          \\
      
        \texttt{ philos }
          & ring
          & $1 / 1$
          & $(\bigcirc16, \square14)\times 8$
          & $(144 + 17) \pm 6$
          \\
      
        \texttt{ consensus }
          & star
          & $5 / 5$
          & $(\bigcirc29, \square23)\times 6$
          & $(90 + 23) \pm 7$
          \\
      
        \texttt{ leader-election }
          & ring
          & $1 / 2$
          & $(\bigcirc27, \square32)\times 2$
          & $(64 + 36) \pm 7$
          \\
      
        \texttt{ server-loop }
          & ring of stars
          & $2 / 3$
          & $(\bigcirc19, \square19)\times 8$
          & $(118 + 4594) \pm 31$
          \\
      
        \texttt{ coverapprox }
          & star
          & $1 / 2$
          & $(\bigcirc4, \square8)\times 6$
          & $(77 + 56) \pm 8$
          \\
      
        \texttt{ simplify-me }
          & star
          & $1 / 2$
          & $(\bigcirc8, \square9)\times 4$
          & $(71 + 22) \pm 9$
          \\
      
        \texttt{ propagation }
          & ring
          & $1 / 2$
          & $(\bigcirc16, \square15)\times 4$
          & $(98 + 257) \pm 8$
          \\
      
        \texttt{ open }
          & ring
          & $0 / 1$
          & $(\bigcirc15, \square16)\times 4$
          & $(80 + 260) \pm 11$
          \\
      
      \hline
    \end{tabular}}
  \end{center}
  \vspace*{-\baselineskip}
  \caption{Table of experiments. ``Result'', $S/T$ means that of $T$
    safety properties, $S$ were successfully proven. ``Size'',
    $(\bigcirc p, \square t) \times n$ means that LoLA analyzed $n$
    nets, consisting of each approx. $p$ places and $t$ transitions.
    ``Runtime'', $(p+l)\pm e$ means that ParCoSys ran for $p+l$
    milliseconds with a standard deviation of $e$ milliseconds,
    including $p$ computing the abstraction and $l$ waiting for LoLA
    to respond.}
  \label{fig:benchmarks}
  \vspace*{-1.5\baselineskip}
\end{figure}

For most of these examples, we were able to automatically prove
several safety properties (\figref{fig:benchmarks}).  For instance, in
\texttt{lock}, we successfully verify the safety property that
$\mathit{proc}*p.(\mathit{on}) > 0 \wedge \mathit{proc}.(\mathit{on})
> 0$ is unreachable (\ie an instance of $\mathit{proc}$ may not have a token in
$\mathit{on}$ at the same time as any other instance of
$\mathit{proc}$), but we are unable to verify that
$\mathit{proc}.(\mathit{on}) > 1$ or $\mathit{proc}*p.(\mathit{on}) +
\mathit{proc}.(\mathit{on}) > 1$ are unreachable. Thus \texttt{lock}
is marked $1/3$, denoting 1 of 3 successful safety tests. In most
cases, we proved the stated properties (those where we failed require
future work to refine the abstraction).

The times were obtained on a Intel Ultra 7 laptop, with 16GiB RAM,
under Ubuntu 24.04. All benchmark specifications are provided as
additional material.




\enlargethispage{5mm}

\vspace*{-.7\baselineskip}
\section{Conclusions}

We present two orthogonal verification results for parameterized
process networks with topology specified using hyperedge-replacement
graph grammars. The first result is a finitary counting abstraction,
that consists in collapsing nodes of the same type in the
parameterized family of Petri nets that gives the semantics of
behaviors. The second result identifies a decidable fragment of the
(undecidable) parameterized verification problem and evaluates its
complexity bounds.

\bibliographystyle{abbrv}
\bibliography{refs}

\ifLongVersion\else
\appendix
\newpage
\section{Proofs}
\label{app:proofs}
\printProofs[proofs]

\section{Quotients}
\label{app:quotients}
\printProofs[quotients]

\section{Initial Markings}
\label{app:initial}
\printProofs[initial]

\section{Lower Bound}
\label{app:lower-bound}
\printProofs[hardness]
\fi

\end{document}